\renewcommand*{\paragraph}[1]{\medskip\noindent{\bfseries #1.\xspace}}
\newtheorem{theorem}{Theorem}[section]
\newtheorem{example}{Example}[section]
\newtheorem{definition}{Definition}[section]
\newtheorem{lemma}{Lemma}[section]
\newtheorem{corollary}{Corollary}[section]
\newtheorem{proposition}{Proposition}[section]
\newtheorem{observation}{Observation}[section]
\newtheorem{claim}{Claim}[section]
\newtheorem{remark}{Remark}[section]
\newcommand*{\myproofname}{My proof}
\newenvironment{myproof}[1][\myproofname]{\begin{proof}[#1]}{\end{proof}}
\newcommand{\final}[1]{\textcolor{black}{#1}}
\title{The Complexity of Contracts\thanks{An extended abstract appeared in \emph{Proceedings of the ACM-SIAM Symposium on Discrete Algorithms (SODA'20)}, Salt Lake City, UT, USA, January 5--8, 2020. Supported by BA/Leverhulme Small Research Grant
SRG1819\textbackslash191601, NSF Award CCF-1813188, ARO grant
W911NF1910294, and the Israel Science Foundation (Grant
No. 336/18). Part of the work of the first author was done while
visiting Google Research. The third author is a Taub Fellow (supported by the Taub Family Foundation).}}
\author{P.~D\"utting\thanks{Department of Mathematics, London School of Economics, Houghton Street, London WC2A 2AE, UK. Email: \tt{
p.d.duetting@lse.ac.uk}.}
\and T.~Roughgarden\thanks{Department of Computer Science, Columbia University,
500 West 120th Street, New York, NY 10027, USA. Email: \tt{tr@cs.columbia.edu}.}
\and I.~Talgam-Cohen\thanks{Department of Computer Science, Technion, Israel Institute of Technology, Haifa, Israel 3200003. Email: \tt{italgam@cs.technion.ac.il}.}}
\date{February 18, 2020}
\begin{document}

\maketitle

\begin{abstract}
  We initiate the study of computing (near-)optimal contracts in succinctly representable principal-agent settings. Here optimality means maximizing the principal's expected payoff over all incentive-compatible contracts---known in economics as ``second-best'' solutions. We also study a natural relaxation to \emph{approximately} incentive-compatible contracts.

We focus on principal-agent settings with succinctly described (and exponentially large) outcome spaces. We show that the computational complexity of computing a near-optimal contract depends fundamentally on the number of agent actions. For settings with a constant number of actions, we present a fully polynomial-time approximation scheme (FPTAS) for the separation oracle of the dual of the problem of minimizing the principal's payment to the agent, and use this subroutine to efficiently compute a $\delta$-incentive-compatible ($\delta$-IC) contract whose expected payoff matches or surpasses that of the optimal IC contract.

With an arbitrary number of actions, we prove that the problem is hard to approximate within any constant $c$. This inapproximability result holds even for $\delta$-IC contracts where $\delta$ is a sufficiently rapidly-decaying function of~$c$. On the positive side, we show that simple linear $\delta$-IC contracts with constant~$\delta$ are sufficient to achieve a constant-factor approximation of the ``first-best'' (full-welfare-extracting) solution, and that such a contract can be computed in polynomial time.
\end{abstract}

\section{Introduction}
\label{sec:introduction}
Economic theory distinguishes three fundamentally different problems
involving asymmetric information and incentives. In the first---known
as \emph{mechanism design} (or \emph{screening})---the less informed
party has to make a decision. A canonical example is Myerson's optimal
auction design problem \cite{Myerson81}, in which a seller wants to
maximize the revenue from selling an item, having only incomplete
information about the buyers' willingness to pay.  The second problem
is known as \emph{signalling} (or \emph{Bayesian persuasion}). Here, as
in the first case, information is hidden, but this time the more
informed party is the active party. A canonical example is Akerlof's
``market for lemons'' \cite{Akerlof70}. In this example, sellers are
better informed about the quality of the products they sell, and may
benefit by sharing (some) of their information with the buyers.

Both of these basic incentive problems have been studied very
successfully and extensively from a computational perspective, see,
e.g.,
\cite{CaiDW12a,CaiDW12b,CaiDW13,BabaioffILW14,CaiDW16,BabaioffGN17,Gonczarowski18,GonczarowskiW18}
and \cite{Dughmi14,DughmiIR14,ChengCDEHT15,DughmiX16}.

The third basic problem, \emph{the agency problem} in \emph{contract
theory}, has received far less attention from the theoretical computer
science community, despite being regarded as equally important in
economic theory (see, e.g., the scientific background on the 2016
Nobel Prize for Hart and Holmstr\"om \cite{Nobel16}).  (A notable
exception is \cite{Babaioff-et-al12}, which we will discuss with
further related work in more detail below.)

The basic scenario of contract theory is captured by the following
\emph{hidden-action principal-agent problem} \cite{GrossmanHart83}:
There is one \emph{principal} and one \emph{agent}. The agent can take
one of $n$ actions $a_i \in A_n$. Each action~$a_i$ is associated with a
distribution $F_{i}$ over $m$ outcomes $x_j \in \mathbb{R}_{\geq 0}$,
and has a cost $c_i \in \mathbb{R}_{\geq 0}$. The principal designs a
contract $p$ that specifies a payment $p(x_j)$ for each outcome
$x_j$. The agent chooses an action $a_i$ that maximizes expected
payment minus cost, i.e., $\sum_j F_{i,j} p(x_j) - c_i$. The principal
seeks to set up the contract so that the chosen action maximizes
expected outcome minus expected payment, i.e., $\sum_j F_{i,j} x_j -
\sum_j F_{i,j} p(x_j)$.

The principal-agent problem is quite different from mechanism design
and signalling, where the basic difficulty is the information
asymmetry and that part of the information is hidden. In the
principal-agent problem the issue is one of \emph{moral hazard}: in
and by itself the agent has no intrinsic interest in the expected
outcome to the principal.

It is straightforward to see that the optimal contract can be found in
time polynomial in $n$ and $m$ by solving $n$ linear programs (LPs). For
each action the corresponding LP gives the smallest
expected payment at which this action can be implemented. The action
that yields the highest expected reward minus payment gives the optimal payoff to
the principal, and the LP for this action the optimal contract.

\paragraph{Succinct principal-agent problems}
This linear programming-based algorithm for computing an optimal
contract has several analogs in algorithmic game theory:
\begin{enumerate}

\item {\em Mechanism design.}  For many basic mechanism design
problems, the optimal (randomized) mechanism is the solution of a linear
program with size polynomial in that of the players' joint type space.

\item {\em Signalling.} For many computational problems in signalling,
the optimal signalling scheme is the solution to a linear program with
size polynomial in the number of receiver actions and possible states
of nature.

\item {\em Correlated equilibria.} In finite games, a correlated
equilibrium can be computed using a linear program with size
polynomial in the number of game outcomes.

\end{enumerate}
These linear-programming-based solutions are unsatisfactory when their
size is exponential in some parameter of interest.  For example, in
the mechanism design and correlated equilibria examples, the size of
the LP is exponential in the number of players.  A major contribution
of theoretical computer science to game theory and economics has been
the articulation of natural classes of succinctly representable
settings and a thorough study of the computational complexity of
optimal design problems in such settings.  Examples include work on 
multi-dimensional mechanism design that has emphasized succinct type
distributions \cite{CaiDW12a,CaiDW12b,CaiDW13,CaiDW16}, succinct
signalling schemes with an exponential number of states of nature
\cite{DughmiX16}, and the efficient computation of correlated equilibria
in succinctly
representable multi-player games \cite{PapadimitriouR08,JiangL15}.
The goal of this paper is to initiate an analogous line of work for 
succinctly described agency problems in contract theory.

We focus on principal-agent settings with succinctly described (and
exponentially large) outcome spaces,
along with a reward function that supports value queries and a
distribution for each action with polynomial description.
While there are many such settings one can study, we focus on what is
arguably the most natural one from a theoretical computer science
perspective, where outcomes correspond to vertices of the hypercube,
the reward function is additive, and the distributions are product
distributions.  (Cf., work on computing revenue-maximizing multi-item
auctions with product distributions over additive valutions, e.g.\
\cite{CaiDW12a,CaiDW12b}.)

For example, outcomes could correspond to sets of items, where items
are sold separately using posted prices.  Actions could correspond to
different marketing strategies (with different costs), which lead to
different (independent) probabilities of sales of various items.  Or,
imagine that a firm (principal) uses a headhunter (agent) to hire an
employee (action).  Dimensions could correspond to tasks or skills.
Actions correspond to types of employees, costs correspond to the
difficulty of recruiting an employee of a given type, and for each
employee type there is some likelihood that they will possess each
skill (or be able to complete some task).  
\final{The firm wants to motivate the headhunter to put in enough effort to recruit an employee who is likely to have useful skills for the firm, without actually running extensive interviews to find out the employee's type.}

In our model, as in the classic model, there is a \emph{principal} and
an \emph{agent}. The agent can take one of $n$ actions $a_i \in A_n$,
and each action has a cost $c_i \in \mathbb{R}_{\geq 0}$. Unlike in
the original model, we are given a set of items $M$, with $|M| =
m$. Outcomes correspond to subsets of items $S \in 2^M$. 
Each item has a reward $r_j$, and the reward of a set $S$ of items is $\sum_{j \in S} r_j$.
Every action
$a_i$ comes with probabilities $q_{i,j}$ for each item $j$. If action
$a_i$ is chosen, each item $j$ is included in the outcome independently
with probability $q_{i,j}$. A contract specifies a payment $p_S$ for
each outcome $S \in 2^M$. 
The goal is to compute a contract that maximizes (perhaps approximately)
the principal's payoff 
\final{in running time
polynomial in $n$ and $m$ (which is logarithmic in the size $|2^M|$ of
the outcome space).}
\final{Note that if we wish to describe the contract by its nonzero payments, the running time requirement 
forces us to use only contracts with polynomial description length, that is, zero payments for all but polynomially-many $S\in 2^M$.}

\paragraph{A notion of approximate IC for contracts}
The classic approach in contract theory is to require that the agent
is incentivized exactly, i.e., he (weakly) prefers the chosen action
over every other action.  We refer to such contracts as incentive
compatible or just IC contracts.  Motivated in part by our hardness
results for IC contracts
(see the next section) and inspired by the success of notions of
approximate incentive compatibility in mechanism design (see
\cite{Cai13,Weinberg14,CaiDW16}, hereafter referred to as the
\emph{CDW framework}), we introduce a notion of approximate incentive
compatibility that is suitable for contracts.

Our notion of $\delta$-incentive compatibility
(or $\delta$-IC) is that the agent utility of the approximately
incentivized action
$a_i$ is at least that of any other action $a_{i'}$, less~$\delta$.
(See
Section~\ref{sec:prelim} and Appendix~\ref{appx:prelim} for details
and discussion.)  This notion is natural for several reasons.  First,
it coincides with the usual notion of $\epsilon$-IC
in ``normalized'' mechanism design settings (with all valuations
between~0 and~1), as in \cite{Cai13,Weinberg14}.
A second reason is behaviorial.  There is an increasing body of work
in economics on 
behavioral biases in contract theory \cite{Koszegi14}, including
strong empirical evidence that such biases play an important role in
practice---for example, that agents ``gift'' effort to the principals
employing them \cite{Akerlof82}. 
The notion of $\delta$-IC offers a mathematical formulation of an agent's bias.
Along similar lines, 
\cite{Carroll13} advocates generally for approximate IC
constraints in settings where the designer can propose their
``preferred action'' to agents, in which case an agent may be biased
against deviating
due to the complexities involved in determining the agent-optimal action or the
psychological costs of deviating.  See also \cite{EnglmaierL12} for
related discussion in the context of contract theory.

\subsection{Our contribution and techniques}

We prove several positive and negative algorithmic results for
computing near-optimal contracts in succinctly described
principal-agent settings.  Our work reveals a fundamental dichotomy
between settings with a constant number of actions and those with an
arbitrary number of actions.

\paragraph{Constant number of actions}
For a constant number of actions, we prove in 
Section~\ref{sec:constant} that while it is $NP$-hard to compute an
optimal IC contract, there is an FPTAS that computes a $\delta$-IC contract with expected principal surplus at least that of the optimal IC contract; the running time is polynomial in~$m$ and~$1/\delta$.

\begin{theorem}[See Theorem~\ref{thm:FPTAS}, Corollary~\ref{cor:FPTAS}]
For every constant $n \geq 1$ and $\delta > 0$, there is an algorithm
that computes a $\delta$-IC contract with expected principal surplus
at least that of an optimal IC contract in time polynomial in $m$ and $1/\delta$.
\end{theorem}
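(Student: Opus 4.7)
The plan is, for each candidate target action $a_i$, to solve the min-expected-payment LP via its dual combined with an FPTAS separation oracle, and then take the best action. Fix $a_i$. The primal has variables $p_S \geq 0$ for every $S \subseteq M$, minimizes $\sum_S F_{i,S}\, p_S$, and is subject to the $n-1$ IC constraints $\sum_S (F_{i,S} - F_{i',S})\, p_S \geq c_i - c_{i'}$ for $i' \neq i$. Its dual has $n-1 = O(1)$ variables $\lambda_{i'} \geq 0$, objective $\sum_{i'}\lambda_{i'}(c_i - c_{i'})$, and one constraint $\sum_{i'}\lambda_{i'}(F_{i,S}-F_{i',S}) \leq F_{i,S}$ per $S \subseteq M$. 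The constant dimension of the dual means the ellipsoid method solves it in polynomial time given a separation oracle.

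The oracle, given $\lambda$, must find $S \subseteq M$ maximizing $h_\lambda(S) := \sum_{i'}\lambda_{i'}(F_{i,S}-F_{i',S}) - F_{i,S}$. Using $F_{i',S}=\prod_{j \in S} q_{i',j}\prod_{j \notin S}(1-q_{i',j})$, this is maximization over the Boolean cube of a signed linear combination of $n$ product measures. I would design an FPTAS via a dynamic program scanning items $j = 1,\ldots,m$ with state equal to the $n$-tuple of partial products $(v_1,\ldots,v_n) \in [0,1]^n$, bucketing each coordinate on a multiplicative $(1+\epsilon)$-grid. Since $n$ is constant and the log-range of each coordinate is $O(m)$ under mild assumptions on the $q_{i,j}$, the number of reachable states per step is $\mathrm{poly}(m, 1/\epsilon)$. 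Choosing $\epsilon = \Theta(\delta/m)$ keeps the multiplicative error in each product at $(1+\epsilon)^m = 1+O(\delta)$, which bounds the additive error in $h_\lambda$ by $\delta$ after an a priori bound on $\|\lambda\|$ (obtained from a bound on the dual optimum).

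Feeding this FPTAS into ellipsoid as a $\delta$-approximate separation oracle yields, by a standard CDW-style argument, a dual solution $\tilde{\lambda}$ that is approximately feasible, together with a polynomial-size trace $\mathcal{S}$ of outcomes the oracle inspected. Restricting the primal LP to variables indexed by $\mathcal{S}$ and solving it gives a polynomial-support contract $p$ that is $\delta$-IC for action $a_i$ and whose expected payment is at most the true min-payment for implementing $a_i$ exactly; equivalently, its principal surplus matches or exceeds the optimal IC contract's. Running this for each of the $n$ choices of $a_i$ and returning the contract with the highest expected reward minus payment completes the algorithm.

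The main obstacle is the FPTAS in the second paragraph: because $h_\lambda$ is a signed sum of products, cancellation between positive and negative terms can amplify per-coordinate rounding errors, so a careful relative-error invariant must be maintained for each of the $n$ product coordinates separately, together with an a priori normalization of $\lambda$ to convert multiplicative errors into additive ones of the right size. The geometric grid is what forces the running time to depend polynomially on $1/\delta$ rather than on $\log(1/\delta)$, consistent with the stated FPTAS guarantee.
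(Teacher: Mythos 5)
Your overall plan---for each candidate action $a_i$, solve MIN-PAYMENT via its low-dimensional dual with an FPTAS separation oracle, extract a small primal, and take the best of $n$ choices---is the same scaffolding the paper uses. But two steps are glossed over, and the second of these is precisely where the paper has to depart from the standard template.

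First, the formulation of the separation oracle. You cast it as maximizing the signed quantity $h_\lambda(S)=\sum_{i'}\lambda_{i'}(F_{i,S}-F_{i',S})-F_{i,S}$, and you rightly flag that cancellation between positive and negative product terms makes an additive FPTAS delicate. The paper sidesteps this entirely: it rewrites the dual constraint $(\sum_{i'}\lambda_{i'})-1\le\sum_{i'}\lambda_{i'}\frac{q_{i',S}}{q_{i,S}}$ in the form $1-\frac{1}{\sum_{i'}\lambda_{i'}}\le\frac{1}{q_{i,S}}\sum_{i'}\alpha_{i'}q_{i',S}$ with $\alpha_{i'}=\lambda_{i'}/\sum\lambda_{i'}$, so the separation oracle is to \emph{minimize a likelihood ratio} of two nonnegative quantities. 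That is a purely multiplicative approximation task, handled by Moran-style bucketing of the $n$ partial products with no cancellation issues. You should adopt this reformulation rather than fight the signed-sum problem.

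Second, and more importantly: ``by a standard CDW-style argument'' is exactly what does \emph{not} work here, and the paper says so explicitly. In the standard method, the FPTAS accepts a dual solution that may slightly violate some constraints, and one recovers feasibility by scaling the dual variables down by $(1+\delta)$. But the constraint $(\sum_{i'}\lambda_{i'})-1\le(1+\delta)\sum_{i'}\lambda_{i'}\frac{q_{i',S}}{q_{i,S}}$ is not scale-invariant because of the additive $-1$, so scaling $\lambda$ does not restore feasibility. Consequently, the plain approach neither certifies $(\Gamma^*)^-\le OPT_i$ nor identifies a primal solution that is $\delta$-IC with payment at most $OPT_i$; your claim that the restricted primal yields such a contract is unjustified. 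The paper's fix is to run the ellipsoid method on a \emph{strengthened} dual whose constraints carry an extra $(1+\delta)$ on the left-hand side, i.e.\ $(1+\delta)\bigl((\sum_{i'}\lambda_{i'})-1\bigr)\le\sum_{i'}\lambda_{i'}\frac{q_{i',S}}{q_{i,S}}$. If the FPTAS separation oracle accepts a solution of the strengthened dual, that solution is \emph{exactly} feasible for the original dual, which is what certifies the bound $(\Gamma^*)^-\le OPT_i$. The strengthened dual corresponds to a relaxed primal whose objective is $(1+\delta)\sum_S q_{i,S}p_S$ and whose constraints read $(1+\delta)\bigl(\sum_S q_{i,S}p_S\bigr)-c_i\ge\sum_S q_{i',S}p_S-c_{i'}$. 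Solving the small version of this relaxed primal (on the outcomes identified by the ellipsoid run) yields a contract whose expected payment is $\le OPT_i$ (not merely $\le(1+\delta)OPT_i$) and that is $\delta$-IC rather than exactly IC: the constraints give $p_{i'}-c_{i'}\le(1+\delta)p_i-c_i\le p_i-c_i+\delta$ after using normalization $p_i\le 1$. Both the exact upper bound on the payment and the $\delta$-IC guarantee hinge on this strengthened dual; without it the theorem's claim of ``expected principal surplus at least that of an optimal IC contract'' does not follow.
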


The starting point of our algorithm is a linear programming
formulation of the problem of incentivizing a given action with the
lowest possible expected payment.  Our formulation has a polynomial
number of constraints (one per action other than the
to-be-incentivized one) but an exponential number of variables (one
per outcome).  A natural idea is to then solve the dual linear program
using the ellipsiod method.  The dual separation oracle is: given a weighted
mixture of $n-1$ product distributions (over the $m$ items)
and a reference
product distribution~$q^*$, minimize the ratio of the probability of
outcome $S$ in the mixture distribution and that in the reference
distribution.
Unfortunately, as we show, this is an 
$NP$-hard problem, even when there are only~$n=3$ actions.
On the other hand, we provide an FPTAS for the separation oracle in
the case of a constant number of actions, based on a delicate
multi-dimensional bucketing approach.  The standard method of
translating an FPTAS for a separation oracle to an FPTAS for the
corresponding linear program relies on a scale-invariance property
that is absent in our problem.  We proceed instead via a strengthened
version of our dual linear program, to which our FPTAS separation
oracle still applies, and show how to extract from an approximately
optimal dual solution a $\delta$-IC contract with objective function
value at least that of the optimal solution to the original linear
program.

\paragraph{Arbitrary number of actions}
The restriction to a constant number of actions is essential for the
positive results above (assuming $P \neq NP$).
Specifically, we prove in Section~\ref{sec:hardness} that computing the IC
contract that maximizes the expected payoff to the principal is
$NP$-hard, even to approximate to within any constant
$c$. This hardness of approximation result persists even if we relax
from exact IC to $\delta$-IC contracts, provided $\delta$ is
sufficiently small as a function of $c$.

\begin{theorem}[See Theorem~\ref{thm:hardness-of-approx}, Corollary~\ref{cor:hardness-of-approx}]
For every constant $c \in \mathbb{R}$, $c\geq 1$, it is $NP$-hard to
find a IC contract that approximates the optimal expected payoff
achievable by an IC contract to within a multiplicative factor of $c$.
\end{theorem}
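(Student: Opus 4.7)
My plan is to establish the inapproximability by a gap-preserving reduction from a combinatorial problem that is NP-hard to approximate within any constant factor. A natural choice is Minimum Set Cover, whose $(1-o(1))\ln n$ inapproximability easily yields any fixed constant $c$; alternatively, one could start from Label Cover and amplify. Given a set cover instance with universe $U=\{u_1,\ldots,u_k\}$ and sets $S_1,\ldots,S_\ell$, I would construct a principal-agent instance with one item per element of $U$ (plus a few auxiliary items for calibration), one ``target'' action $a^*$ with a small cost~$c^*$, and, for each set $S_t$, a competing action $a_t$ whose product distribution is engineered so that the outcomes $S\subseteq M$ on which $a^*$ is strictly more likely than $a_t$ correspond to subsets of items that ``hit'' $S_t$ in a combinatorial sense.

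The driving intuition is that the payment-minimization LP for incentivizing $a^*$ is essentially a covering LP: each competing action $a_t$ imposes a constraint $\sum_S (F^*_S - F_{t,S})\,p_S \geq c^* - c_t$, and only outcomes with $F^*_S > F_{t,S}$ can contribute positively. With the per-item probabilities $q_{i,j}$ calibrated so that these ``good'' outcomes for $a_t$ encode exactly whether $S_t$ is covered, any feasible payment schedule must place mass on outcomes that collectively cover all the competing actions, with expected cost scaling like the minimum cover size. For completeness, a cover of size $t^{\ast}$ is translated into a contract supported on $O(t^{\ast})$ outcomes whose expected payment is small, yielding a high principal payoff. For soundness, any low-expected-payment IC contract can be rounded into a set cover of comparable size, so if every cover has size at least $c\cdot t^{\ast}$ then every IC contract loses at least a factor of $c$ in payoff.

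The main obstacle I anticipate is preventing the principal from exploiting the exponentially large outcome space to sidestep the combinatorial lower bound: because $F_{i,S}$ is a product over $m$ coordinates, there are many degrees of freedom in spreading tiny payments across many outcomes, potentially smoothing away the set-cover structure. To control this, I plan to take the per-item probabilities $q_{i,j}$ bounded away from $0$ and $1$, for instance $q_{i,j}\in\{\tfrac12-\eta,\tfrac12+\eta\}$ for a suitable constant $\eta$, so that all $2^m$ outcomes have comparable magnitudes and the likelihood ratio $F^*_S/F_{t,S}$ cleanly tracks the Hamming agreement between the characteristic vector of $S$ and the ``signature'' of $S_t$. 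This should force the principal's payments to concentrate on combinatorially meaningful outcomes, reducing the contract-design gap to the set-cover gap.

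Finally, the $\delta$-IC extension in Corollary~\ref{cor:hardness-of-approx} should follow by checking that the soundness argument is robust to additive slack~$\delta$ in the IC constraints: if we scale the completeness payoff up so the YES/NO gap dominates $\delta$, then for $\delta$ sufficiently small compared to~$1/c$ (specifically a rapidly decaying function of~$c$, as stated), the reduction continues to rule out any $\delta$-IC contract achieving a factor-$c$ approximation. The hardest bookkeeping will be pinning down the dependence of $\delta$ on $c$ so that the completeness contract remains $\delta$-IC and the soundness bound remains unaffected; this dictates the shape of the function relating $\delta$ to~$c$ in the corollary.
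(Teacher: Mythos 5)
Your plan takes a genuinely different route from the paper (Set Cover vs.\ MAX-3SAT), but as sketched it has a gap that I do not think can be closed. The intuition that ``the payment-minimization LP for incentivizing $a^*$ is essentially a covering LP'' does not yield the desired hardness, for two reasons. First, MIN-PAYMENT is a true linear program with no integrality constraint on the outcomes that receive positive payment, so even if each competing action $a_t$ gave a constraint whose positive-coefficient outcomes encoded the elements of $S_t$, the LP optimum would track the \emph{fractional} covering value, not the set-cover number; the $\Theta(\log k)$ integrality gap of set cover means no constant-factor hardness transfers. Second, and more fundamentally, your fix for the exponential outcome space --- taking all $q_{i,j}$ in $\{\tfrac12-\eta,\tfrac12+\eta\}$ --- destroys exactly the structure that the reduction would need. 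With those probabilities, every outcome has mass in $[(\tfrac12-\eta)^m,(\tfrac12+\eta)^m]$ under every action, the likelihood ratios $q_{*,S}/q_{t,S}$ vary smoothly with Hamming agreement, and there is no sharp combinatorial distinction for the LP to hang on to; payments really can be smeared over exponentially many outcomes in a way that bypasses any intended cover.

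The paper's reduction (Theorem~\ref{thm:hardness-of-approx} via Algorithms~\ref{alg:sat-to-prod-2} and~\ref{alg:sat-to-prod}) sidesteps both problems by not trying to encode an optimization value into the LP at all. Instead it engineers a binary alternative: either there exists an outcome with \emph{zero} probability under every competing action (and positive probability under the target), in which case a single-outcome contract extracts full welfare; or every outcome has probability at least $(1-8\alpha)/2^m$ under the \emph{average} competing action, in which case a separate ``gap setting'' forces a constant-factor loss for any (even $\delta$-IC) contract. Crucially this hinges on the SAT-setting probabilities lying in $\{0,\tfrac12,1\}$ --- the extreme values create the perfect signal when $\varphi$ is satisfiable --- and on the gap between fully satisfiable and $(\tfrac78+\alpha)$-satisfiable MAX-3SAT instances, which is both NP-hard (H{\aa}stad) and quantitatively robust enough to survive the $\delta$-IC relaxation. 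Your last paragraph about scaling the YES/NO gap to dominate $\delta$ is pointing in the right direction for the corollary, but it presupposes a working reduction, which the set-cover plan does not currently provide. To repair your approach you would need, at minimum, to reintroduce extreme probabilities that create a zero-probability witness in the YES case, and to replace the ``round a contract to a cover'' step --- which is where the continuous LP slips through --- with a quantitative argument showing that no $\delta$-IC contract can do well in the NO case; at that point you would essentially be rediscovering the paper's SAT-setting plus gap-setting decomposition.
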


\begin{theorem}[See Theorem~\ref{thm:hardness-of-approx}, Corollary~\ref{cor:hardness-of-approx-delta}]
For any constant $c \in \mathbb{R}, c \geq 5$ and $\delta \leq
(\frac{1}{4c})^c$, it is $NP$-hard to find a $\delta$-IC contract that
guarantees $> \frac{2}{c} \text{OPT}$, where $\text{OPT}$ is the
optimal expected payoff achievable by an IC contract.
\end{theorem}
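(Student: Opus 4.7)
The plan is to build directly on the reduction underlying Corollary \ref{cor:hardness-of-approx}: that reduction produces principal-agent instances in which distinguishing IC payoff $\geq \text{OPT}$ (``yes'' instances) from IC payoff $\leq \text{OPT}/c$ (``no'' instances) is $NP$-hard for any constant $c \geq 1$. I will argue that the \emph{same} instances separate the ``yes'' case (where an IC contract with payoff $\geq \text{OPT}$ exists, and is trivially $\delta$-IC) from the ``no'' case under the $\delta$-IC notion, by showing that in the ``no'' case no $\delta$-IC contract can achieve payoff exceeding $(2/c)\text{OPT}$ whenever $\delta \leq (1/(4c))^c$.

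The core step is a $\delta$-IC to IC conversion with a factor-$2$ loss in principal payoff. Given a $\delta$-IC contract $p$ that (approximately) incentivizes an action $a_i$ with payoff $V$, I would construct a strictly IC contract by a convex combination $p' = (1-\lambda)p + \lambda q$, where $q$ is a reference contract that strictly incentivizes $a_i$ with some positive margin $\gamma$. The IC slack of $p'$ is at least $\lambda\gamma - (1-\lambda)\delta$, so choosing $\lambda$ just above $\delta/(\delta+\gamma)$ makes $p'$ exactly IC; the principal's payoff under $p'$ is a convex combination of those of $p$ and $q$, and plugging in the quantitative bounds of the hard instance I would verify that $p'$ still recovers at least $V/2$. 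Combining with the assumed $V > (2/c)\text{OPT}$ then yields an IC contract with payoff $> (1/c)\text{OPT}$, contradicting Corollary \ref{cor:hardness-of-approx} and completing the reduction.

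The main obstacle is producing the reference contract $q$ with a sufficiently large margin $\gamma$ at modest principal-payoff cost, tailored to the reduction gadget. This is where the specific form $\delta\le(1/(4c))^c$ enters: the smallest meaningful utility differences in the hard instance scale as $\Theta((1/(4c))^c)$, arising from $c$-fold products in the per-item probabilities/rewards used to amplify the approximation gap from $c$ in the exact-IC hardness. As long as $\delta$ sits below this natural scale, the margin $\gamma$ available from the structure of the reduction dominates $\delta$, and the mixing weight $\lambda$ needed to tighten $p$ into $p'$ can be kept small enough that the factor-$2$ payoff loss is guaranteed. Conversely, the bound $c\geq 5$ is just enough slack to absorb lower-order terms in the payoff accounting between $p$ and $p'$, so the argument carries through cleanly.
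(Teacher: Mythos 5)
Your proposed route differs fundamentally from the paper's, and the key step it relies on is precisely the one the paper shows is problematic. The paper proves Corollary~\ref{cor:hardness-of-approx-delta} by a one-line invocation of Theorem~\ref{thm:hardness-of-approx} with $\epsilon = 1/c$: the gap construction is built so that, in the ``no'' case, \emph{no $\Delta$-IC contract at all} extracts more than $\frac{1}{c}+\epsilon$ of the welfare (Definition~\ref{def:unstruct-gap-setting}, Propositions~\ref{pro:gap-setting-two-actions}, \ref{pro:gap-setting}). The hardness for relaxed IC is thus handled directly in the reduction, with no need for a $\delta$-IC-to-IC conversion.

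Your argument instead tries to reduce from the IC hardness (Corollary~\ref{cor:hardness-of-approx}) by converting a $\delta$-IC contract to a strictly IC one at a multiplicative factor-$2$ loss. This is the gap. The paper's Proposition~\ref{pro:from-delta-to-IC} gives exactly such a rounding (mixing with $\ell_{\alpha=1}$, which plays the role of your reference contract $q$), but the guarantee is $(1-\sqrt{\delta})(R_i - p_i) - (\sqrt{\delta}-\delta)$: a multiplicative factor that tends to $1$ but an \emph{additive} loss $\approx \sqrt{\delta}$ that does not. Worse, Proposition~\ref{pro:delta-IC-much-better} shows the additive loss is unavoidable in general: even as $\delta \to 0$, a $\delta$-IC contract can extract a constant factor (there $4/3$, with $OPT$ made arbitrarily small) more than any IC contract. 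So a purely multiplicative factor-$2$ conversion does not exist as a general primitive, and the burden falls entirely on the unstated claim that on the particular hard instances the additive term $\sqrt{\delta}$ is dominated by the payoff $V$. Note also that these instances are unnormalized gap settings whose welfare scales with $c$; whether $V/2$ clears the additive loss depends on the welfare scale, the expected payment of the reference contract $q$ you mix in (which itself costs you principal payoff), and the margin $\gamma$ by which the target action is strictly implementable in the gadget. None of this is worked out, and your remark that ``$c \geq 5$ is just enough slack'' is not tied to any actual calculation; in the paper, $c \geq 5$ arises to keep $\frac{\epsilon - 2\Delta^{1/c}}{3}$ inside the $(0, \frac{1}{20}]$ window of Theorem~\ref{thm:hardness-of-approx}, and $(\frac{1}{4c})^c$ arises from requiring $2\Delta^{1/c} < \epsilon = 1/c$---a different accounting than the one you sketch. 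To complete your route you would essentially have to re-prove the $\delta$-IC gap of Propositions~\ref{pro:gap-setting-two-actions} and~\ref{pro:gap-setting} inside the conversion argument, at which point you are doing what Theorem~\ref{thm:hardness-of-approx} already does, less directly.
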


We prove these hardness of approximation results by reduction from
MAX-3SAT, using the fact that it is $NP$-hard to distinguish between a
satisfiable MAX-3SAT instance and one in which there is no assignment
satisfying more than a $7/8+\alpha$ fraction of the clauses, where
$\alpha$ is some arbitrarily small constant \cite{Hastad01}. Our
reduction utilizes the gap between ``first best''
(full-welfare-extracting) and ``second best'' solutions in contract
design settings, where satisfiable instances of MAX-3SAT map to
instances where there is no gap between first and second best and
instances of MAX-3SAT in which no more than $7/8+\alpha$ clauses can
be satisfied map to instances where there is a constant-factor
multiplicative gap between the first-best and second-best solutions.

On the positive side, we prove
that for every constant $\delta$ there
is a simple (in fact, linear\footnote{A linear contract is defined by
a single parameter $\alpha \in [0,1]$, and sets the payment $p_S$ for
any set $S \in 2^M$ to 
$p_S = \alpha \cdot \sum_{j \in S} r_j$. 
Linear
contracts correspond to a simple percentage commission, and are
arguably among the most frequently used contracts in practice. See
\cite{Carroll15} and \cite{DRT18} for recent work in economics and
computer science in support of linear contracts.}) contract that
achieves a $c_\delta$-approximation, where $c_\delta$ is a constant
that depends on $\delta$. This approximation guarantee is with
respect to the strongest possible benchmark, the first-best
solution.\footnote{Note that the principal's objective function
(reward minus payment to the agent) is a mixed-sign objective; such
functions are generally challenging for relative approximation
results.}

\begin{theorem}[See Theorem~\ref{thm:delta-ic-approx}]
For every constant $\delta > 0$ there exists a constant $c_\delta$ and a
polynomial-time (in $n$ and $m$) computable $\delta$-IC contract that
obtains a multiplicative $c_\delta$-approximation to the optimal
welfare.
\end{theorem}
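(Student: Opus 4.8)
The plan is to exhibit one \emph{linear} contract, $p_S=\alpha^\star\sum_{j\in S}r_j$ for a single scalar $\alpha^\star\in[0,1]$, together with a designated target action, to show it is $\delta$-IC, and to show the principal's payoff is at least a $\delta$-fraction (up to an absolute constant) of the first-best welfare $W^\star:=\max_i(R_i-c_i)$, where $R_i:=\sum_j q_{i,j}r_j$; this gives $c_\delta=\max\{1,1/\delta\}$. I use the normalization built into the model that $\max_i R_i\le 1$ (so in particular $W^\star\le 1$ and $c_{i^\star}\le R_{i^\star}\le 1$); some such normalization is genuinely needed, since geometrically scaled instances otherwise force the approximation ratio to grow with the instance size. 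We may assume $W^\star>0$. Fix a welfare-maximizing action $a_{i^\star}$ and abbreviate $R^\star:=R_{i^\star}$, $c^\star:=c_{i^\star}$.

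The analysis revolves around the agent's best-response value under the linear contract $\alpha$, namely $g(\alpha):=\max_i(\alpha R_i-c_i)$, and the slack function $\phi(\alpha):=g(\alpha)-(\alpha R^\star-c^\star)$, measuring by how much the agent's favourite action beats $a_{i^\star}$ at parameter $\alpha$. First I would establish three structural facts: (i) $\phi$ is convex and nonnegative, being the difference of the convex function $g$ (a maximum of affine functions) and one of the affine functions appearing in that maximum; (ii) $\phi(1)=\max_i(R_i-c_i)-W^\star=0$, and since a convex function on $[0,1]$ that attains its minimum at the right endpoint is non-increasing, $\phi$ is non-increasing on $[0,1]$; (iii) $\phi(0)=g(0)+c^\star=c^\star-\min_i c_i$.

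The contract is chosen by a dichotomy on the size of $\phi(0)$. If $\phi(0)\le\delta$, use the all-zero contract ($\alpha^\star=0$) with target $a_{i^\star}$: here $\delta$-IC is precisely the inequality $-c^\star\ge -c_{i'}-\delta$ for every $i'$, i.e.\ $c^\star-\min_i c_i\le\delta$, which holds; the principal then collects $R^\star\ge W^\star$, a $1$-approximation. If $\phi(0)>\delta$, then because $\phi$ is continuous, non-increasing, with $\phi(1)=0<\delta<\phi(0)$, there is a unique $\alpha^\star\in(0,1)$ with $\phi(\alpha^\star)=\delta$; use the linear contract $\alpha^\star$ with target $a_{i^\star}$. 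By construction the agent's utility from $a_{i^\star}$ equals $g(\alpha^\star)-\delta$, so the contract is $\delta$-IC. To bound the principal's payoff $(1-\alpha^\star)R^\star$ from below, invoke convexity of $\phi$: its graph lies below the chord from $(0,\phi(0))$ to $(1,0)$, hence $\delta=\phi(\alpha^\star)\le(1-\alpha^\star)\phi(0)\le(1-\alpha^\star)c^\star$, so $1-\alpha^\star\ge\delta/c^\star$ and the payoff is at least $\tfrac{\delta}{c^\star}R^\star=\delta\bigl(1+\tfrac{W^\star}{c^\star}\bigr)\ge\delta\ge\delta W^\star$ (using $W^\star\le 1$). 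In both cases the contract is a $\max\{1,1/\delta\}$-approximation of $W^\star$. For the running time: $a_{i^\star}$ is obtained by comparing the $n$ numbers $R_i-c_i$ (each $R_i$ being a sum of $m$ products), and $\alpha^\star$ by computing the at most $n$ breakpoints of the piecewise-linear convex function $\phi$ and locating the crossing of the level $\delta$; all polynomial in $n$ and $m$, and the contract itself is succinctly described by $\alpha^\star$.

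The hard part is conceptual, not computational: the principal's objective (reward minus payment) is mixed-sign, and under exact IC linear contracts cannot in general come close to $W^\star$ (indeed, on suitably scaled instances the ratio is unbounded), so the slack $\delta$ is essential; the key realization is that it should be spent to ``walk back'' from $\alpha=1$ (where $a_{i^\star}$ is already the agent's choice but the principal's share is $0$) to a parameter $\alpha^\star$ that the convexity--chord estimate keeps a distance $\ge\delta/c^\star$ away from $1$. The two loose ends to tie off are the normalization $W^\star\le 1$ (without which the statement is false) and, if the model imposes individual rationality, checking that $a_{i^\star}$ remains approximately IR at $\alpha^\star$ -- which follows since $g(\alpha^\star)\ge g(0)\ge-\min_i c_i$ and, whenever a null action is present, $g\ge 0$ throughout, so the agent's utility at the target is at least $-\delta$.
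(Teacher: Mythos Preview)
Your argument is correct and gives the same asymptotic guarantee $c_\delta=O(1/\delta)$ as the paper, but by a genuinely different and more direct route. The paper's proof of Theorem~\ref{thm:delta-ic-approx} considers the entire upper envelope of the agent's utility as a function of $\alpha$, partitions $[0,1]$ into $\kappa+1\approx\log_{1+\delta}(1/\gamma)$ geometric buckets, and uses a telescoping bound (Claim~\ref{cla:delta-ic-approx-1}) together with the observation that the ``indifference'' $\alpha$ between adjacent bucket-representative actions already $\delta$-incentivizes the higher one (Claim~\ref{cla:delta-ic-approx-2}); averaging then yields a $(1-\gamma)/(\kappa+1)$ fraction of $W^\star$. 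In contrast, you fix once and for all the welfare-maximizing action $a_{i^\star}$ and exploit convexity of the slack $\phi(\alpha)=g(\alpha)-(\alpha R^\star-c^\star)$ directly: the chord inequality $\phi(\alpha^\star)\le(1-\alpha^\star)\phi(0)$ replaces the telescoping sum, and a single $\alpha^\star$ replaces the bucket search. Your bound $c_\delta=1/\delta$ is in fact slightly cleaner than the paper's $(1-\gamma)^{-1}(\lceil\log_{1+\delta}(1/\gamma)\rceil+1)$, and the algorithm is simpler (one crossing of a piecewise-linear function rather than a scan over buckets). What the paper's approach buys is a finer decomposition of $W^\star$ across several candidate actions, which is the natural stepping stone from the exact-IC analysis in \cite{DRT18}; your approach trades this structural insight for a shorter, self-contained convexity argument. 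One cosmetic remark: $\alpha^\star$ need not be unique (\,$\phi$ may be constant on an interval), but any point in $\phi^{-1}(\delta)$ works, and taking the infimum only helps the principal.
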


Our proof of this result, in Section~\ref{sec:approx}, shows that the
optimal social welfare can be upper bounded by a sum of (constantly
many in $\delta$) expected payoffs achievable by $\delta$-IC
contracts. The best such contract thus obtains a constant
approximation to the optimal welfare.

\paragraph{Black-box distributions} 
Product distributions are a rich and natural class of succinctly
representable distributions to study, but one could also consider
other classes.  Perhaps the strongest-imaginable positive result would
be an efficient algorithm for computing a near-optimal contract that
works with {\em no} assumptions about each action's probability
distribution over outcomes, other than the ability to sample from them
efficiently.  (Positive examples of this sort in signalling
problem include~\cite{DughmiX16} and in mechanism design include~\cite{HL}
and its many follow-ups.)  Interestingly, the principal-agent problem
poses unique challenges to such ``black-box'' positive results.  The
moral reason for this is explained, for example, in
\cite{Salanie05}: Rewards play a dual role in contract settings,
both defining the surplus from the joint project to be shared between
the principal and agent {\em and} providing a signal to the principal
of the agent's action.  For this reason, in optimal contracts, the
payment to the agent in a given outcome is governed both by the
outcome's reward and on its ``informativeness,'' and the latter is
highly sensitive to the precise probabilities in the outcome
distributions associated with each action.  In
Section~\ref{sec:black-box} we translate this intuition into an
information-theoretic impossibility result for the black-box model,
showing that positive results are possible only under strong
assumptions on the distributions (e.g., that the minimum non-zero
probability is bounded away from~0).

\subsection{Related work}

The study of computational aspects of contract theory was pioneered by
Babaioff, Feldman and Nisan \cite{Babaioff-et-al12} (see also their
subsequent works, notably \cite{EmekF12} and
\cite{BabaioffWinter14}). This line of work studies a problem
referred to as \emph{combinatorial agency}, in which combinations of
agents replace the single agent in the classic principal-agent
model. The challenge in the new model stems from the need to
incentivize multiple agents, while the action structure of each agent
is kept simple (effort/no effort). The focus of this line of work is
on complex combinations of agents' efforts influencing the outcomes,
and how these determine the subsets of agents to contract with. The
resulting computational problems are very different from the
computational problems in our model.\footnote{For example, several of
the key computational questions in their problem turn out to be
$\#P$-hard, while all of the problems we consider are in $NP$.}

A second direction of highly related work is \cite{MA18}. This work considers a principal-agent model in which the agent action space is exponentially sized but compactly represented, and argue that in such settings indirect (interactive) mechanisms can be better than one-shot mechanisms. 
Our focus is more algorithmic, and instead of a compactly represented action space we consider a compactly represented outcome space.

A third direction of related work considers a bandit-style model for contract design \cite{HoSV16}. In their model each arm corresponds to a contract, and they present a procedure that starts out with a discretization of the contract space, which is adaptively refined, and which achieves sublinear regret in the time horizon. Again the result is quite different from our work, where the complexity comes from the compactly represented outcome space, and our result on the black-box model sheds a more negative light on the learning approach.

Further related work comes from Kleinberg and Kleinberg~\cite{KleinbergK18} who consider the problem of delegating a task to an agent in a setting where (unlike in our model) monetary compensation is not an option. Although payments are not available, they show through an elegant reduction to the prophet-inequality problem that constant competitive solutions are possible.

A final related line of work was initiated by Carroll~\cite{Carroll15} who---working in the classic model (where computational complexity is not an issue)---shows a sense in which linear contracts are max-min optimal (see also the recent work of \cite{WaltonC19}). D\"utting et al.~\cite{DRT18} show an alternative such sense, and also provide tight approximation guarantees for linear contracts.

\section{Preliminaries}
\label{sec:prelim}
We start by defining succinct principal-agent settings and the contract design problem.

\subsection{Succinct principal-agent settings} 

Let $n$ and $m$ be parameters.
A principal-agent setting is composed of the following: $n$~\emph{actions} $A_n$ among which the agent can choose, and their \emph{costs} $0=c_1\le \dots \le c_n$ for the agent; \emph{outcomes} which the actions can lead to, and their \emph{rewards} for the principal; and a \emph{mapping} from actions to distributions over outcomes. 
Crucially, the agent's choice of action is hidden from the principal, who observes only the action's realized outcome. 
Our goal is to study \emph{succinct} principal-agent settings with description size polynomial in $n$ and $m$; the (implicit) outcome space can have size exponential in~$m$. Throughout, unless stated otherwise, all principal-agent settings we consider are succinct. 
We focus on arguably one of the most natural models of succinctly-described settings, namely those with \emph{additive} rewards and \emph{product} distributions. 

In more detail, let $M = \{1,2,..,m\}$, where $M$ is referred to as the \emph{item set}. 
Let the outcome space be $\{0,1\}^M$, that is, every outcome is an item subset $S \subseteq M$. 
For every item $j\in M$, the principal gets an additive reward $r_j$ if the realized outcome includes $j$, so the principal's reward for outcome $S$ is $r_S=\sum_{j \in S} r_j$. 
Every action $a_i\in A_n$ is associated with probabilities $q_{i,1},...,q_{i,m} \in [0,1]$ for the items. We denote the corresponding product distribution by $q_i$.
When the agent takes action $a_i$, item $j$ is included in the realized outcome independently with probability $q_{i,j}$. The probability of outcome $S$ is thus
$q_{i,S}=(\prod_{j\in S}q_{i,j})(\prod_{j\notin S}(1-q_{i,j}))$.
By linearity of expectation, the principal's expected reward given action $a_i$ is $R_i=\sum_{S} q_{i,S} r_S=\sum_{j} q_{i,j}r_j$. Action $a_i$'s expected \emph{welfare} is $R_i-c_i$, and we assume $R_i-c_i\ge 0$ for every $i\in [n]$.

\begin{example}[Succinct principal-agent setting]
A company (principal) hires an agent to sell its $m$ products. The agent may succeed in selling any subset of the $m$ items, depending on his effort level, where the $i$th level leads to sale of item $j$ with probability $q_{i,j}$. Reward $r_j$ from selling item $j$ is the profit-margin of product $j$ for the company.
\end{example}

\paragraph{Representation} A succinct principal-agent setting is described by an $n$-vector of costs $c$, an $m$-vector of rewards $r$, and an $n\times m$-matrix $Q$ where entry $(i,j)$ is equal to probability $q_{i,j}$ (and we assume for simplicity that the number of bits of precision for all values is poly$(n,m)$).

\paragraph{Assumptions}
Unless stated otherwise, we assume that all principal-agent settings are \emph{normalized}, i.e., $R_i\le 1$ for every $a_i\in A_n$ (and thus also $c_i\le 1$). Normalization amounts to a simple change of ``currency'', i.e., of the units in which rewards and costs are measured. 
It is a standard assumption in the context of approximate incentive compatibility---see Section~\ref{sub:relax} (similar assumptions appear in both the CDW framework and in \cite{Carroll13}).
We also assume \emph{no dominated actions}: every two actions $a_i,a_{i'}$ have distinct expected rewards $R_{i}\ne R_{i'}$, and $R_{i'}<R_{i}$ implies $c_{i'}<c_i$. That is, we assume away any action that simultaneously costs more for the agent and has lower expected reward for the principal than some (dominating) alternative action. 

\paragraph{Contracts and incentives}
A \emph{contract} $p$ is a vector of payments from the principal to the agent. Payments are non-negative; this is known as \emph{limited liability} of the agent.%
\footnote{Limited liability plays a similar role in the contract literature as risk-averseness of the agent. Both reflect the typical situation in which the principal has ``deeper pockets'' than the agent and is thus the better bearer of expenses/risks.}
The contractual payments are contingent on the outcomes and not actions, as the actions are not directly observable by the principal.
A contract $p$ can potentially specify a payment $p_S\ge 0$ for every outcome~$S$, but by linear programming (LP) considerations detailed below, we can focus on contracts \final{for which the support size of the vector $p$ is polynomial in $n$.}
We sometimes denote by $p_i$ the expected payment $\sum_{S\subseteq M} q_{i,S}p_S$ to the agent for choosing action~$a_i$,
and without loss of generality restrict attention to contracts for which $p_i\le R_i$ for every $a_i\in A_n$ (in particular, $p_i\le 1$ by normalization).

Given contract $p$, 
the agent's expected \emph{utility} from choosing action $a_i$ is $p_i-c_i$. 
The principal's expected \emph{payoff} is then $R_i-p_i$.
The agent wishes to maximize his expected utility over all actions and over an outside option with utility normalized to zero (``individual rationality'' or \emph{IR}). Since by assumption the cost $c_1$ of action $a_1$ is $0$, the outside opportunity is always dominated by action $a_1$ and so we can omit the outside option from consideration. Therefore, the incentive constraints for the agent to choose action $a_i$ are: $p_i-c_i\ge p_{i'}-c_{i'}$ for every $i'\ne i$. If these constraints hold we say $a_i$ is \emph{incentive compatible (IC)} (and as discussed, in our model IC implies IR). 
The standard tie-breaking assumption in the contract design literature is that among several IC actions the agent tie-breaks in favor of the principal, i.e.~chooses the IC action that maximizes the principal's expected payoff.%
\footnote{The idea is that one could perturb the payment schedule slightly to make the desired action uniquely optimal for the agent. For further discussion see \cite[p. 8]{CaillaudHermalin00}.} 
We say contract $p$ \emph{implements} or \emph{incentivizes} action $a_i$ if given $p$ the agent chooses~$a_i$ (namely $a_i$ is IC and survives tie-breaking). If there exists such a contract for action $a_i$ we say $a_i$ is \emph{implementable}, and slightly abusing notation we sometimes refer to the implementing contract as an \emph{IC contract}.

\paragraph{Simple contracts} 
In a \emph{linear} contract, the payment scheme is a linear function of the rewards, i.e., $p_S = \alpha r_S$ for every outcome $S$. We refer to $\alpha\in[0,1]$ as the linear contract's \emph{parameter}, and it serves as a succinct representation of the contract. Linear contracts have an alternative succinct representation by an $m$-vector of item payments $p_j=\alpha r_j$ for every $j\in M$, which induce additive payments $p_S=\sum_{j\in S}p_j$.
A natural generalization is \emph{separable} contracts, the payments of which can also be separated over the $m$ items and represented by an $m$-vector of non-negative payments (not necessarily linear). 
The optimal linear (resp., separable) contract can be found in polynomial time (see Proposition~\ref{pro:separable-poly-time} in Appendix~\ref{appx-sub:delta-IC-properties}). We return to linear contracts in Section~\ref{sec:approx} and to separable contracts in Appendix~\ref{appx:separable}.

\subsection{Contract design and relaxations}
\label{sub:relax}

The goal of contract design is to maximize the principal's expected payoff from the action chosen by the agent subject to IC constraints.
A corresponding computational problem is OPT-CONTRACT: The input is a succinct principal-agent setting, and the output is the principal's expected payoff from the optimal IC contract. 
A related problem is MIN-PAYMENT: The input is a succinct principal-agent setting and an action $a_i$, and the output is the minimum expected payment $p^*_i$ with which $a_i$ can be implemented (up to tie-breaking).
OPT-CONTRACT reduces to solving $n$ instances of MIN-PAYMENT to find $p^*_i$ for every action $a_i$, and returning the maximum expected payoff to the principal $\max_{i\in[n]}\{R_i-p^*_i\}$. 
Observe that MIN-PAYMENT can be formulated as an exponentially-sized LP \final{with $2^m$ variables $\{p_S\}$ (one for each set $S \subseteq M$) and $n-1$ constraints:} 
\begin{align}
\min~& \sum_{S\subseteq M} {q_{i,S} p_S} \label{LP:min-pay}\\
\text{s.t.}~& \sum_{S\subseteq M} {q_{i,S} p_S} - c_i \ge \sum_{S\subseteq M} {q_{i',S} p_S} - c_{i'} && \forall i'\ne i,i'\in[n], \notag\\
& p_S \ge 0 && \forall S\subseteq M.\notag
\end{align}
%

\final{The dual LP has  $n-1$ nonnegative variables $\{\lambda_{i'}\}$ (one for every action $i'$ other than $i$), and exponentially-many constraints:}
\begin{align}
\max~& \sum_{i'\ne i} {\lambda_{i'}(c_i-c_{i'})} \label{LP:dual} \displaybreak[0]\\
\text{s.t.}~& \big(\sum_{i'\ne i} {\lambda_{i'}}\big) - 1 \le \sum_{i'\ne i} {\lambda_{i'}\frac{q_{i',S}}{q_{i,S}}} &&\forall S\subseteq E,q_{i,S}>0, \notag\displaybreak[0]\\
& \lambda_{i'} \ge 0 &&\forall i'\ne i,i'\in [n].\notag
\end{align}


Standard duality considerations imply that there always exists a succinct optimal contract with $n-1$ nonzero payments.
However, the ellipsoid method cannot be applied to solve the dual LP in polynomial time. The separation oracle, which is related to the concept of likelihood ratios from statistical inference, turns out to be NP-hard except for the $n=2$ case---see Proposition \ref{pro:separation-NP-hard} in Appendix~\ref{appx:ellipsoid}.


We return to LP~\eqref{LP:min-pay} and to its dual LP~\eqref{LP:dual} in Section~\ref{sec:constant}.
 
\paragraph{Relaxed IC}
Contract design like auction design is ultimately an optimization problem subject to IC constraints. The state-of-the-art in optimal \emph{auction} design requires a relaxation of IC constraints to $\epsilon$-IC. In the CDW framework, the $\epsilon$ loss factor is additive and applies to normalized auction settings.
The framework enables polytime computation of an $\epsilon$-IC auction with expected revenue approximating that of the optimal IC auction.%
\footnote{To be precise, the CDW framework focuses on \emph{Bayesian} IC (BIC) and $\epsilon$-BIC auctions.}   
Appropriate $\epsilon$-IC relaxations are also studied in multiple additional contexts---see \cite{Carroll13} and references within for voting, matching and competitive equilibrium; and \cite{Papadimitriou06} for Nash equilibrium. 
We wish to achieve similar results in the context of optimal contracts. 
For completeness we include the definition of $\epsilon$-IC cast in the language of contracts: 

\begin{definition}[$\delta$-IC action]
	Consider a (normalized) contract setting.
	For $\delta\ge0$, an action $a_i$ is \emph{$\delta$-IC} given a contract $p$ if
	the agent loses no more than additive $\delta$ in expected utility by choosing~$a_i$, i.e.:
	$p_i-c_i \ge p_{i'}-c_{i'}-\delta$ for every action $a_{i'}\ne a_i$.
\end{definition}

(Notation-wise, we will sometimes replace $\delta$ by $\Delta$ and refer to $\Delta$-IC actions.)
As in the IC case, we often slightly abuse notation and refer to the contract $p$ itself as $\delta$-IC. By this we mean a contract $p$ with an (implicit) action $a_i$ that is $\delta$-IC given $p$ (if there are several such $\delta$-IC actions, by our tie-breaking assumption the agent chooses the one that maximizes the principal's expected payoff).
We also say the contract \emph{$\delta$-implements} or \emph{$\delta$-incentivizes} action $a_i$. Finally if there exists such a contract for $a_i$ then we say this action is \emph{$\delta$-implementable}. Interestingly, by LP duality, any action can be $\delta$-implemented up to tie-breaking even for arbitrarily small $\delta$ (Proposition~\ref{pro:delta-implementable} in Appendix~\ref{appx:prelim}).
We denote by $\delta$-OPT-CONTRACT and $\delta$-MIN-PAYMENT the above computational problems with IC replaced by $\delta$-IC (e.g., the input to $\delta$-OPT-CONTRACT is a succinct principal-agent setting and a parameter $\delta$, and the output is the principal's expected payoff from the optimal $\delta$-IC contract).

In Appendix~\ref{appx-sub:IC-vs-delta} we study the relation between optimal IC and $\delta$-IC contracts. We show that for every $\delta$-IC contract there is an IC contract with approximately the same expected payoff to the principal up to small---and necessary---multiplicative and additive losses. Thus relaxing IC to $\delta$-IC increases the expected payoff of the principal only to a certain extent. More precisely, Proposition~\ref{pro:from-delta-to-IC} shows that any $\delta$-IC contract can be easily transformed into an IC contract that maintains at least $(1-\sqrt{\delta})$ of the principal's expected payoff up to an additive loss of $(\sqrt{\delta}-\delta)$.
Similar results are known in the context of \emph{auctions} (see \cite{HartlineKM15,DughmiHKN17} for welfare maximization and \cite{DaskalakisW12} for revenue maximization).%
\footnote{We thanks an anonymous reviewer for pointing us to these references.}
Proposition \ref{pro:delta-IC-much-better} shows that an additive loss is necessary, as even for tiny $\delta$ there can be a multiplicative constant-factor gap between the expected payoff of an IC contract and a $\delta$-IC one.

\paragraph{Relaxed IC with exact IR}
In our model, IC implies IR due to the existence of a zero-cost action $a_1$, but this is no longer the case for $\delta$-IC.
What if we are willing to relax IC to $\delta$-IC due to the considerations above, but do not want to give up on IR? 
\final{Suppose} we enforce IR by assuming that the agent chooses a $\delta$-IC action only if it has expected utility $\ge 0$. The following lemma shows that this has only a small additive effect on the principal's expected payoff, allowing us from now on to focus on $\delta$-IC contracts (IR can be later enforced by applying the lemma):

\begin{lemma}
	\label{lem:from-delta-to-IR} 
	For every $\delta$-IC contract $p$ that achieves expected payoff of $\Pi$ for the principal, there exists a $\delta$-IC \emph{and IR} contract $p'$ that achieves expected payoff of $\ge \Pi-\delta$.
\end{lemma}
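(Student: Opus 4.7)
The plan is to derive the IR contract $p'$ by uniformly shifting every payment of $p$ upward by a small constant. Write $a_i$ for the action incentivized by $p$ under the tie-breaking rule, so the principal's payoff is $\Pi = R_i - p_i$. If $p$ already satisfies IR (i.e., $p_i - c_i \ge 0$), set $p' = p$ and we are done. Otherwise, let $C = c_i - p_i > 0$ and define $p'_S = p_S + C$ for every $S \subseteq M$.

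The key observation is that because $p$ is $\delta$-IC for $a_i$ and the zero-cost action $a_1$ with $c_1 = 0$ is always available, the $\delta$-IC constraint against $a_1$ yields $p_i - c_i \ge p_1 - c_1 - \delta = p_1 - \delta \ge -\delta$, using limited liability $p_1 \ge 0$. Hence $C = c_i - p_i \le \delta$.

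It then remains to verify that $p'$ is a valid contract $\delta$-implementing $a_i$ with the right IR and payoff bounds. Non-negativity of $p'$ is immediate since $C > 0$. Because each expected payment $p'_{i'} = p_{i'} + C$ is shifted by the same constant (probabilities over items sum to $1$ for any action), every agent-utility difference $p'_{i'} - c_{i'} - (p'_i - c_i)$ equals the corresponding difference under $p$, so the $\delta$-IC constraint for $a_i$ is preserved, and the set of $\delta$-IC actions is unchanged. IR is attained with equality: $p'_i - c_i = p_i + C - c_i = 0$. Moreover, the principal's payoff from each action $a_j$ shifts uniformly by $-C$, so among the $\delta$-IC actions $a_i$ is still the one maximizing the principal's payoff, and the tie-breaking rule still selects it. The new payoff is $R_i - p'_i = \Pi - C \ge \Pi - \delta$, as desired.

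No step here looks like a serious obstacle; the main subtlety is simply the use of the $\delta$-IC inequality against the zero-cost action $a_1$ to bound the required shift $C$ by $\delta$, together with checking that adding a constant to every payment preserves the $\delta$-IC structure and the tie-breaking outcome.
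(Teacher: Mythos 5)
Your argument is correct and follows the same core idea as the paper's proof: shift every payment up uniformly, using the $\delta$-IC constraint against the zero-cost action $a_1$ to bound the required shift by $\delta$, and observe that a uniform shift preserves all $\delta$-IC inequalities and the tie-breaking outcome. The only difference is cosmetic: you shift by the exact shortfall $C = c_i - p_i \le \delta$, which yields payoff $\Pi - C \ge \Pi - \delta$ directly and avoids the paper's two-case split (where $\Pi > \delta$ gets a shift of exactly $\delta$, and $\Pi \le \delta$ falls back to the all-zero contract). Your version is a slight streamlining, but not a genuinely different route.
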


\begin{proof}
	Fix a principal-agent setting. Let $a_i$ be the action $\delta$-incentivized by contract $p$ and assume $a_i$ is not IR. Observe that the agent's expected utility from $a_i$ is $\ge -\delta$ (otherwise $a_i$ would not be $\delta$-IC with respect to $a_1$, which has expected utility $\ge 0$ for the agent).
	First, if $\Pi > \delta$, then let $p'$ be identical to $p$ except for an additional $\delta$ payment for every outcome. Contract $p'$ still $\delta$-incentivizes action $a_i$, but now the agent's expected utility from $a_i$ is $\ge 0$, as required.
	Otherwise if $\Pi\le \delta$, let $p'$ be the contract with all-zero payments. The expected payoff to the principal is zero, which is at most an additive $\delta$ loss compared to $\Pi$. 
\end{proof}

\section{Constant number of actions}
\label{sec:constant}

In this section we begin our exploration of the computational problems OPT-CONTRACT and MIN-PAYMENT by considering principal-agent settings with a constant number $n$ of actions.
For every constant $n\ge 3$ these problems are NP-hard, and this holds even if the IC requirement is relaxed to $\delta$-IC 
(See Proposition~\ref{pro:implement-cost-hard} and Corollary~\ref{cor:implement-cost-hard} in Appendix~\ref{appx:constant}). 
As our main positive result, we establish the tractability of finding a $\delta$-IC contract that matches the expected payoff of the optimal IC contract.
In Section~\ref{sec:hardness} we show this result is too strong to hold for non-constant values of $n$ (under standard complexity assumptions), and in Section~\ref{sec:approx} \final{we} provide an approximation result for general settings.

To state our results more formally, fix a principal-agent setting and action $a_i$; let $OPT_i$ be the solution to MIN-PAYMENT for~$a_i$ (or $\infty$ if $a_i$ cannot be implemented up to tie-breaking without loss to the principal); and let $OPT$ be the solution to OPT-CONTRACT. Observe that $OPT=\max_{i\in[n]}\{R_i-OPT_i\}$. Our main results in this section are the following:

\begin{theorem}[MIN-PAYMENT]
	\label{thm:FPTAS}
	There exists an algorithm that receives as input a (succinct) principal-agent setting with a constant number of actions and $m$ items, an action $a_i$, and a parameter $\delta > 0$, and returns in time poly$(m,\frac{1}{\delta})$ a contract that $\delta$-incentivizes $a_i$ with expected payment $\le OPT_i$ 
	to the agent.  
\end{theorem}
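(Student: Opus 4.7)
The plan is to approach MIN-PAYMENT through its LP dual~\eqref{LP:dual}, solving it by the ellipsoid method armed with an approximate separation oracle. Since $n$ is a constant the dual has only $O(1)$ variables, so the challenge is to cope with its exponentially many constraints, one per outcome $S \subseteq M$. Separation at a candidate $\lambda$ reduces to minimizing the likelihood-ratio quantity $L(S) := \sum_{i' \ne i} \lambda_{i'}\, q_{i',S}/q_{i,S}$ over $S$ and comparing it with $\sum_{i'}\lambda_{i'}-1$. Under product distributions the ratio factorizes as $q_{i',S}/q_{i,S} = \prod_{j=1}^m f_{i',j}(S_j)$ with $f_{i',j}(1)=q_{i',j}/q_{i,j}$ and $f_{i',j}(0)=(1-q_{i',j})/(1-q_{i,j})$, so the logarithm of each action's contribution is additive over $j$; still one must minimize a weighted \emph{sum} of $n-1$ such exponentials, which is NP-hard in general (as noted earlier).

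First I would build an FPTAS for this separation problem via multi-dimensional bucketing inside a dynamic program over the items. A DP state after processing items $1,\ldots,k$ is an $(n-1)$-tuple, one coordinate per other action $i' \ne i$, each holding the running partial product $\prod_{j \le k} f_{i',j}(S_j)$ discretized into one of $O((\log R)/\eta)$ geometric buckets, where $R$ is the polynomially-bounded range of these quantities implied by the input bit-precision. At each item I branch on $S_k \in \{0,1\}$ and update every coordinate of the state accordingly. Because $n-1$ is constant, the state space has size $\mathrm{poly}(m,1/\eta)$; the DP runs in polynomial time, and backtracking from the state that minimizes $\sum_{i'} \lambda_{i'} \cdot (\text{bucket value})_{i'}$ yields an outcome $\hat S$ with $L(\hat S) \le (1+O(\eta)) \min_S L(S)$. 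Items with degenerate marginals $q_{i,j} \in \{0,1\}$ can be handled by a preprocessing pass that forces or forbids the corresponding $S_j$.

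The classical GLS-style reduction from an FPTAS for separation to an FPTAS for the LP relies on a scale-invariance property that LP~\eqref{LP:dual} lacks, due to the additive ``$-1$'' on the left-hand side of its constraints. I would circumvent this by working not with LP~\eqref{LP:dual} itself but with a suitably strengthened version of the dual---one whose constraints (or, equivalently, whose objective) are adjusted by a margin calibrated to $\delta$---so that the $(1+\eta)$-approximate oracle above still suffices to drive ellipsoid to an exact optimum of the strengthened LP. Because the strengthening is in the ``right'' direction, the resulting optimum is at most $OPT_i$, and ellipsoid returns in polynomial time both this optimum and a supporting primal certificate $p$ whose support lies in the polynomially many outcomes actually visited by the method.

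Finally I would interpret the primal certificate as the desired contract. By LP duality the expected payment $\sum_S q_{i,S} p_S$ equals the strengthened dual's optimum and is therefore at most $OPT_i$. Moreover, translating the strengthening back to the primal side---using the normalization $R_i, c_i \le 1$ to bound expected payments---yields exactly a relaxation of the IC constraints from $p_i - p_{i'} \ge c_i - c_{i'}$ to $p_i - p_{i'} \ge c_i - c_{i'} - \delta$, so $p$ is $\delta$-IC for $a_i$. The total running time is $\mathrm{poly}(m, 1/\delta)$ since $n$ is constant, both in the DP state count and in the ellipsoid iteration count. The main obstacle I anticipate is precisely this calibration: choosing the strengthening so that the multiplicative $\eta$-slack of the FPTAS oracle is absorbed on the dual side while translating into an additive slack of at most $\delta$ on the primal IC constraints, with all hidden constants controlled via normalization. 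Given this reduction, the multi-dimensional DP and the ellipsoid machinery are comparatively standard.
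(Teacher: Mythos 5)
Your proposal follows essentially the same route as the paper's proof: a multi-dimensional bucketing dynamic program yielding an FPTAS separation oracle for the dual LP, followed by the ellipsoid method applied to a $(1+\delta)$-strengthened dual whose approximately-feasible solutions are observed to be \emph{exactly} feasible for the original dual (so their value is at most $OPT_i$), extraction of a primal certificate from the polynomially many separating hyperplanes, and a final observation that the resulting primal constraints are precisely $\delta$-IC after invoking normalization. The only small imprecision is your phrase ``drive ellipsoid to an exact optimum of the strengthened LP''---the argument in the paper does not solve the strengthened LP exactly, but rather uses the fact that when the FPTAS oracle reports feasibility for the strengthened LP, the point is genuinely feasible for the \emph{original} dual, which is what gives the bound $(\Gamma^*)^- \le OPT_i$.
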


\begin{corollary}[OPT-CONTRACT]
	\label{cor:FPTAS}
	There exists an algorithm that receives as input a (succinct) principal-agent setting with a constant number of actions and $m$ items, and a parameter $\delta > 0$, and returns in time poly$(m,\frac{1}{\delta})$ a $\delta$-IC contract with expected payoff $\ge OPT$ to the principal. 
\end{corollary}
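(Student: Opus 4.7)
The plan is to reduce OPT-CONTRACT to $n$ calls of MIN-PAYMENT, exactly as in the standard LP-based argument sketched earlier in the paper (``OPT-CONTRACT reduces to solving $n$ instances of MIN-PAYMENT\dots and returning the maximum expected payoff $\max_{i\in[n]}\{R_i-p^*_i\}$''). Since $n$ is a constant, running Theorem~\ref{thm:FPTAS} once for each action $a_i\in A_n$ with the same parameter $\delta$ still gives total running time poly$(m,1/\delta)$.

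Concretely, for each $i\in[n]$ I would invoke the algorithm of Theorem~\ref{thm:FPTAS} on input $(a_i,\delta)$ to obtain a contract $p^{(i)}$ that $\delta$-incentivizes $a_i$ with expected payment $p^{(i)}_i\le OPT_i$. The principal's expected payoff under $p^{(i)}$, assuming the agent chooses $a_i$, is therefore at least $R_i-OPT_i$. Using the tie-breaking convention fixed in Section~\ref{sec:prelim} (among $\delta$-IC actions the agent selects the one that maximizes the principal's expected payoff), the principal's realized payoff under $p^{(i)}$ is at least $R_i-p^{(i)}_i \ge R_i - OPT_i$. The algorithm then computes, for each candidate $p^{(i)}$, the principal's expected payoff under its $\delta$-IC best-response (which only requires evaluating $p^{(i)}_{i'}-c_{i'}$ for the poly$(n)$ actions, and is easy because each $p^{(i)}$ has polynomial support by the LP considerations noted in Section~\ref{sub:relax}), and outputs the $p^{(i^\star)}$ maximizing this quantity.

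The key inequality to verify is then
\[
\max_{i\in[n]} \bigl(R_i-p^{(i)}_i\bigr) \;\ge\; \max_{i\in[n]} \bigl(R_i-OPT_i\bigr) \;=\; OPT,
\]
which combines the MIN-PAYMENT guarantee of Theorem~\ref{thm:FPTAS} with the observation $OPT=\max_i\{R_i-OPT_i\}$ already established in Section~\ref{sec:constant}. The chosen contract $p^{(i^\star)}$ is $\delta$-IC by construction (since $p^{(i^\star)}$ $\delta$-incentivizes $a_{i^\star}$), and delivers expected payoff $\ge OPT$ to the principal.

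There is no real obstacle here: Theorem~\ref{thm:FPTAS} does all the algorithmic heavy lifting (the multi-dimensional bucketing FPTAS for the dual separation oracle plus the passage to the strengthened dual LP), and the corollary is essentially a bookkeeping step. The only subtlety to mention explicitly is that, because we relax to $\delta$-IC rather than IC, we rely on the tie-breaking convention to ensure that the principal's realized payoff under $p^{(i)}$ is never worse than $R_i-p^{(i)}_i$; absent this convention one would have to further perturb each $p^{(i)}$ to make $a_i$ the unique $\delta$-IC best response, at the cost of a negligible adjustment in the payment, as in \cite{CaillaudHermalin00}.
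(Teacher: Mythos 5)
Your proposal is correct and takes essentially the same approach as the paper: apply Theorem~\ref{thm:FPTAS} once per action, observe that the contract for $a_i$ yields expected payoff $\ge R_i - OPT_i$, and take the maximum over $i$, using $OPT=\max_i\{R_i-OPT_i\}$. The paper's proof is just the two-line version of this; your extra remarks on the tie-breaking convention and on computing the agent's best response from the polynomial-support contracts are accurate but not additional ideas beyond what the paper takes for granted.
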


\begin{proof}
	Apply the algorithm from Theorem \ref{thm:FPTAS} once per action $a_i$ to get a contract that $\delta$-incentivizes $a_i$ with expected payoff at least 
	$R_i-OPT_i$ to the principal. Maximizing over the actions we get a $\delta$-IC contract with expected payoff $\ge OPT$ to the principal. 
\end{proof}

Corollary \ref{cor:FPTAS} shows how to achieve $OPT$ with a $\delta$-IC contract rather than an IC one, in the same vein as the CDW results for auctions. A similar result does not hold for general $n$ unless P$=$NP (Corollary \ref{cor:hardness-of-approx-delta}). Note that the $\delta$-IC contract can be transformed into an IR one with an additive~$\delta$ loss by applying Lemma~\ref{lem:from-delta-to-IR}, and to a fully IC one with slightly more loss by Proposition~\ref{pro:from-delta-to-IC}, where $\delta$ can be an arbitrarily small inverse polynomial in $m$. 

In the rest of the section we prove Theorem \ref{thm:FPTAS}. 


\paragraph{An FPTAS for the separation oracle}
We begin by stating the separation oracle problem.
LP~\eqref{LP:min-pay} formulates MIN-PAYMENT for action~$a_i$. Its dual LP~\eqref{LP:dual} has constraints of the form: 
\begin{equation}
(\sum_{i'\ne i} {\lambda_{i'}}) - 1 \le \sum_{i'\ne i} {\lambda_{i'}\frac{q_{i',S}}{q_{i,S}}}.\label{eq:dual-constraint}
\end{equation}
We can rewrite \eqref{eq:dual-constraint} as 
$1 - 1/(\sum_{i' \neq i} \lambda_{i'}) \le \frac{1}{q_{i,S}} \cdot \sum_{i' \neq i} (\lambda_{i'} / (\sum_{i' \neq i} \lambda_{i'})) \cdot q_{i',S}$. 
Thus the separation oracle problem for dual LP~\eqref{LP:dual} is in fact the following problem: Let $n$ be a constant and $m$ a parameter. The input is $n-1$ nonnegative weights $\{\alpha_{i'}\}$ 
that sum to~1; $n-1$ product distributions $\{q_{i'}\}$; 
and a product distribution~$q_i$; where all product distributions are over $m$ items~$M$. The goal is to minimize the likelihood ratio $\frac{\sum_{i'} {\alpha_{i'} q_{i',S}}}{q_{i,S}}$ over all outcomes $S\subseteq M$, where the numerator is the likelihood given~$S$ of the weighted combination distribution $\sum_{i'} {\alpha_{i'} q_{i'}}$, and the denominator is the likelihood given~$S$ of distribution~$q_i$. Note that the weighted combination distribution is \emph{not} in general a product distribution itself.

Denote the optimal solution (i.e.~the minimum likelihood ratio) by~$\rho^*$. 
Solving the separation oracle problem is NP-hard (Proposition \ref{pro:separation-NP-hard}),%
\footnote{In fact the problem is strongly NP-hard; but because it involves products of the form $q_{i,S}=(\prod_{j\in S}q_{i,j})(\prod_{j\notin S}(1-q_{i,j}))$, the strong NP-hardness does not rule out an FPTAS \cite[Theorem 17.12]{PapadimitriouS82}.} 
but in Appendix~\ref{appx:separation-oracle} we show an FPTAS (Lemma~\ref{lem:FPTAS}). Lemma~\ref{lem:sep-FPTAS} gives the guarantee from applying this FPTAS as a separation oracle for dual LP~\eqref{LP:dual}. 

\begin{lemma}[FPTAS] 
	\label{lem:FPTAS}
	There is an algorithm for the separation oracle problem that returns an outcome $S$ with likelihood ratio $\le (1+\delta)\rho^*$ in time polynomial in $m,\frac{1}{\delta}$.
\end{lemma}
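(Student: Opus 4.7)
The plan is a dynamic program over items based on multi-dimensional bucketing of log-likelihood ratios. Writing $y_{i'}(S) := q_{i',S}/q_{i,S}$, the objective $f(S) := \sum_{i'} \alpha_{i'} y_{i'}(S)$ satisfies $L_{i'}(S) := \log y_{i'}(S) = \sum_{j=1}^{m}\ell_{i',j}(s_j)$, where $s_j\in\{0,1\}$ indicates $j\in S$, $\ell_{i',j}(1) = \log(q_{i',j}/q_{i,j})$, and $\ell_{i',j}(0) = \log((1-q_{i',j})/(1-q_{i,j}))$. The crucial point is that each $L_{i'}$ decomposes \emph{additively} over items. Items with $q_{i,j}\in\{0,1\}$ force $s_j$, so we may assume $q_{i,j}\in(0,1)$; coordinates with $q_{i',j}\in\{0,1\}$ that make $\ell_{i',j}$ infinite are handled by a dedicated ``$-\infty$'' bucket per coordinate, which signals $y_{i'}(S)=0$ and absorbs further additions.

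Next, I would bucket each of the $n-1$ coordinates independently. Fix $\epsilon = \ln(1+\delta)/m$. Since all inputs have $\mathrm{poly}(n,m)$ bit precision, every $|\ell_{i',j}(\cdot)|$ is at most $\mathrm{poly}(m)$, so $L_{i'}(S)$ lies in an interval of length $\mathrm{poly}(m)$ and there are $B=\mathrm{poly}(m,1/\delta)$ buckets per coordinate. The DP maintains, for each prefix $s_1,\dots,s_j$, a state $(b_1,\dots,b_{n-1})$ in which $b_{i'}$ is the value of $\sum_{j'\le j}\ell_{i',j'}(s_{j'})$ rounded up to the nearest bucket boundary. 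Transitions try both $s_{j+1}\in\{0,1\}$, update each coordinate additively, and re-round; we store a single representative $S$ per state, collapsing duplicates. Because $n$ is constant, the state space has size $B^{n-1}=\mathrm{poly}(m,1/\delta)$ and the DP runs in $\mathrm{poly}(m,1/\delta)$ time. At termination, for each reached state we compute the upper bound $U := \sum_{i'}\alpha_{i'}\exp(b_{i'})$ and return the representative $S$ achieving the minimum $U$.

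For correctness, each rounding step is upward, so a standard telescoping gives $b_{i'}\ge L_{i'}(S)$ for any reachable state with representative $S$, hence $f(S)\le U$ coordinate by coordinate. Conversely, the trajectory of an optimal $S^{\ast}$ (with $f(S^{\ast})=\rho^{\ast}$) lands in a state satisfying $b_{i'}\le L_{i'}(S^{\ast})+m\epsilon = L_{i'}(S^{\ast})+\ln(1+\delta)$, so $\exp(b_{i'})\le(1+\delta)y_{i'}(S^{\ast})$ for each $i'$, yielding $U\le (1+\delta)\rho^{\ast}$ for that state. Taking the minimum over all terminal states, the returned $S$ satisfies $f(S)\le U\le(1+\delta)\rho^{\ast}$, as required. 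The main technical obstacle is that $f$ is a sum of $n-1$ exponentials of coupled additive quantities, so directly bucketing the scalar $f$ is hopeless; the essential trick is to bucket in the $(n{-}1)$-dimensional log-space, where constancy of $n$ keeps the state space polynomial and per-step multiplicative errors telescope cleanly under the choice $\epsilon=\ln(1+\delta)/m$.
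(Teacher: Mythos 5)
Your proposal is correct and follows essentially the same approach as the paper's proof, which adapts an FPTAS of Moran~\cite{Moran81}: a multi-dimensional bucketed dynamic program over the $m$ items, tractable because $n$ is a constant so the state space is polynomial. The paper tracks the $n$ raw probabilities $q_{\ell,S}$ and partitions partial solutions into families within a two-sided multiplicative factor $\Delta=(1+\epsilon)^{1/2m}$, while you track the $n-1$ log-likelihood ratios $L_{i'}(S)$ with one-sided upward rounding of width $\ln(1+\delta)/m$; these are minor implementation variants of the same bucketing idea, and both correctness and running-time analyses coincide.
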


\begin{lemma} 
	\label{lem:sep-FPTAS}
	If the separation oracle FPTAS with parameter $\delta$ does not find a violated constraint of dual LP~\eqref{LP:dual}, then for every $S$ the inequality in \eqref{eq:dual-constraint} holds approximately up to $(1+\delta)$: 
	$$
	(\sum_{i'\ne i} {\lambda_{i'}}) - 1 \le (1+\delta)\sum_{i'\ne i} {\lambda_{i'}\frac{q_{i',S}}{q_{i,S}}}.
	$$
\end{lemma}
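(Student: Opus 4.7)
The plan is to unpack the rewriting that the paper has already set up and then apply the FPTAS approximation guarantee at a single point. Write $\Lambda := \sum_{i'\ne i}\lambda_{i'}$ and, for $\Lambda>0$, the normalized weights $\alpha_{i'}:=\lambda_{i'}/\Lambda$. Then the constraint \eqref{eq:dual-constraint} at outcome $S$ is equivalent to
\[
1 - \frac{1}{\Lambda} \;\le\; \rho(S), \qquad\text{where } \rho(S)\;=\;\frac{\sum_{i'\ne i}\alpha_{i'}q_{i',S}}{q_{i,S}},
\]
and also $\sum_{i'\ne i}\lambda_{i'}\frac{q_{i',S}}{q_{i,S}} = \Lambda\,\rho(S)$. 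The case $\Lambda=0$ is trivial because the desired inequality reduces to $-1\le 0$, so I will assume $\Lambda>0$ in what follows.

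Next I invoke Lemma~\ref{lem:FPTAS}. Applied to the instance with weights $\alpha_{i'}$ and reference distribution $q_i$, it returns some outcome $S^\star$ with $\rho(S^\star)\le (1+\delta)\rho^\star$, where $\rho^\star=\min_{S}\rho(S)$. The separation oracle declares ``no violated constraint found'' precisely when this $S^\star$ does not violate \eqref{eq:dual-constraint}, i.e.\ when $1-1/\Lambda \le \rho(S^\star)$. Combining the two inequalities,
\[
1 - \frac{1}{\Lambda} \;\le\; \rho(S^\star) \;\le\; (1+\delta)\,\rho^\star.
\]

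Now for any outcome $S$ with $q_{i,S}>0$, the definition of $\rho^\star$ gives $\rho^\star\le \rho(S)$, so
\[
1 - \frac{1}{\Lambda} \;\le\; (1+\delta)\,\rho^\star \;\le\; (1+\delta)\,\rho(S).
\]
Multiplying through by $\Lambda>0$ and substituting $\Lambda\,\rho(S)=\sum_{i'\ne i}\lambda_{i'}q_{i',S}/q_{i,S}$ yields exactly
\[
\Big(\sum_{i'\ne i}\lambda_{i'}\Big) - 1 \;\le\; (1+\delta)\sum_{i'\ne i}\lambda_{i'}\,\frac{q_{i',S}}{q_{i,S}},
\]
which is the claim.

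There is no real obstacle here: all the work is in the preceding Lemma~\ref{lem:FPTAS}, which supplies the multiplicative approximation at the minimizer. The only subtlety is to be pedantic about the direction of the approximation: the FPTAS gives an upper bound $\rho(S^\star)\le (1+\delta)\rho^\star$ on a \emph{minimization} objective, so passing from ``$1-1/\Lambda$ not violated at $S^\star$'' to ``$1-1/\Lambda$ approximately not violated at every $S$'' requires chaining through $\rho^\star$ (as done above) rather than trying to compare $\rho(S^\star)$ and $\rho(S)$ directly. Handling the degenerate cases $\Lambda=0$ or $q_{i,S}=0$ is trivial and can be dispatched in a line.
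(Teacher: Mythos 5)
Your proof is correct and is essentially the contrapositive of the paper's: the paper assumes some $S$ violates the $(1+\delta)$-relaxed constraint and shows the FPTAS would have returned a violated constraint, while you assume no violation was found and derive the relaxed bound directly. Both hinge on the same two facts — the FPTAS guarantee $\rho(S^\star)\le(1+\delta)\rho^\star$ and the definition of $\rho^\star$ as the minimum likelihood ratio — chained through in the same order.
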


\begin{proof}
	Assume there exists $S$ such that $(\sum_{i'\ne i} {\lambda_{i'}}) - 1 > (1+\delta)\sum_{i'\ne i} {\lambda_{i'}\frac{q_{i',S}}{q_{i,S}}}$. Then dividing by $(\sum_{i'} {\lambda_{i'}})$ and using the definition of $\rho^*$ as the minimum likelihood ratio we get $1 - \frac{1}{\sum_{i'} {\lambda_{i'}}} > (1+\delta)\rho^*$. Combining this with the guarantee of Lemma \ref{lem:FPTAS}, the FPTAS returns $S'$ with likelihood ratio $<1 - \frac{1}{\sum_{i'} {\lambda_{i'}}}$, thus identifying a violated constraint. This completes the proof.
\end{proof}


\paragraph{Applying the separation oracle FPTAS: The standard method}
Given an FPTAS with parameter $\delta$ for the  separation oracle of a dual LP, for many problems it is possible to find in polynomial time an approximately-optimal, feasible solution to the primal---see, e.g., \cite{KarmarkarK82,CarrV02,JainMS03,NutovBY06,FleischerGMS11,FeldmanKN12}.
We first describe a fairly standard approach in the literature to utilizing a separation oracle FPTAS, which we refer to as the \emph{standard method}, and explain where we must deviate from this approach. The proof of Theorem~\ref{thm:FPTAS} then applies an appropriately modified approach.

The standard method works as follows: Let $OPT_i$ be the optimal value of the primal (minimization) LP. 
For a benchmark value $\Gamma$,
add to the (maximization) dual LP a constraint that requires its objective to be at least $\Gamma$, and attempt to solve the dual by running the ellipsoid algorithm with the separation oracle FPTAS. 

Assume first that the ellipsoid algorithm returns a solution with value $\Gamma$. Since the separation oracle applies the FPTAS, it may wrongly conclude that some solution is feasible despite a slight violation of one or more of the constraints. For example, if we were to apply the FPTAS separation oracle from Lemma \ref{lem:FPTAS} to solve dual LP~\eqref{LP:dual}, we could possibly get a solution for which there exists $S$ such that: 
$$
\sum_{i'\ne i} {\lambda_{i'}\frac{q_{i',S}}{q_{i,S}}} < (\sum_{i'\ne i} {\lambda_{i'}}) - 1 \le (1+\delta)\sum_{i'\ne i} {\lambda_{i'}\frac{q_{i',S}}{q_{i,S}}} 
$$
where the second inequality is by Lemma~\ref{lem:sep-FPTAS}.
Clearly, the value $\Gamma$ of an approximately-feasible solution may be higher than $OPT_i$. In the standard method, the approx-imately-feasible solution can be \emph{scaled} by $\frac{1}{1+\delta}$ to regain feasibility while maintaining value of $\frac{\Gamma}{1+\delta}$. Scaling thus establishes that $\frac{\Gamma}{1+\delta}\le OPT_i$. 
Now assume that for some (larger) value of $\Gamma$, the ellipsoid algorithm identifies that the dual LP is infeasible. In this case we can be certain that $OPT_i< \Gamma$, and we can also find in polynomial time a primal feasible solution with value $<\Gamma$ (more details in the proof of Theorem~\ref{thm:FPTAS} below).

Using binary search (in our case over the range $[c_i,R_i]\subseteq [0,1]$ since $R_i$ is the maximum the principal can pay without losing money), the standard method finds the smallest $\Gamma^*$ for which the dual is identified to be infeasible, up to a negligible binary search error $\epsilon$. 
This gives a primal feasible solution that achieves value $\Gamma^*+\epsilon$, and at the same time establishes that $\frac{(\Gamma^*)^-}{1+\delta}\le OPT_i$ by the scaling argument.%
\footnote{The notation $(\Gamma^*)^-$ means any number smaller than $\Gamma^*$.}
So the standard method has found an approximately-optimal, feasible solution to the primal.

\paragraph{Applying the separation oracle FPTAS: Our method}
The issue with applying the standard method to solve MIN-PAYMENT is that the scaling argument does not hold. To see this, consider an approximately-feasible dual solution for which 
$(\sum_{i'\ne i} {\lambda_{i'}}) - 1 \le (1+\delta)\sum_{i'\ne i} {\lambda_{i'}\frac{q_{i',S}}{q_{i,S}}}$ for every~$S$,
and notice that scaling the values $\{\lambda_{i'}\}$ 
does not achieve feasibility. 
We therefore turn to an alternative method to prove Theorem \ref{thm:FPTAS}. 

\begin{proof}[Proof of Theorem \ref{thm:FPTAS}]
We apply the standard method using the FPTAS with parameter $\delta$ (see Lemma~\ref{lem:FPTAS}) as separation oracle to the following \emph{strengthened} version of dual LP~\eqref{LP:dual},%
\footnote{Strengthened duals appear\final{, e.g.,}~in \cite{NutovBY06,FeldmanKN12}.} 
where the extra $(1+\delta)$ multiplicative factor in the constraints makes them harder to satisfy:
%
\begin{align}
\max~& \sum_{i'\ne i} {\lambda_{i'}(c_i-c_{i'})} & \label{LP:dual-tight}\\
\text{s.t.}~& (1+\delta)\big((\sum_{i'\ne i} {\lambda_{i'}}) - 1\big) \le \sum_{i'\ne i} {\lambda_{i'}\frac{q_{i',S}}{q_{i,S}}} && \forall S\subseteq E,q_{i,S}>0 \notag\\
& \lambda_{i'} \ge 0 &&\forall i'\ne i,i'\in [n].\notag
\end{align}

Let $\Gamma^*$ be the \final{infimum} 
value for which dual LP~\eqref{LP:dual-tight} would be identified as infeasible. The ellipsoid algorithm is thus able to find an approximately-feasible solution to dual LP~\eqref{LP:dual-tight} with objective $(\Gamma^*)^-$.
The key observation is that this solution is \emph{fully} feasible with respect to the original dual LP~\eqref{LP:dual}.
This is because if the separation oracle FPTAS does not find a violated constraint of dual LP~\eqref{LP:dual-tight}, then for every $S$ it holds that $(\sum_{i'\ne i} {\lambda_{i'}}) - 1 \le \sum_{i'\ne i} {\lambda_{i'}\frac{q_{i',S}}{q_{i,S}}}$ (by the same argument as in the proof of Lemma~\ref{lem:sep-FPTAS}). 
From the key observation it follows that 
\begin{equation}
(\Gamma^*)^-\le OPT_i\label{eq:key-obs}
\end{equation}
(despite the fact that the scaling argument does not hold). 

Now let $\Gamma^*+\epsilon$ be the smallest value for which the binary search runs the ellipsoid algorithm for dual LP~\eqref{LP:dual-tight} and identifies its infeasibility. During its run for $\Gamma^*+\epsilon$, the ellipsoid algorithm identifies polynomially-many separating hyperplanes that constrain the objective to $<\Gamma^*+\epsilon$. Formulate a ``small'' primal LP with variables corresponding exactly to these hyperplanes. By duality, the small primal LP has a solution with objective $<\Gamma^*+\epsilon$, and moreover since the number of variables and constraints is polynomial we can find such a solution $p^*$ in polynomial time. Observe that $p^*$ is also a feasible solution to the primal LP corresponding to dual~\eqref{LP:dual-tight} (the only difference from the small LP is more variables): 
%
\begin{align}
\min~& (1+\delta)\sum_{S\subseteq E} {q_{i,S} p_S} & \label{LP:min-pay-relax}\\
\text{s.t.}~& (1+\delta)\big(\sum_{S\subseteq E} {q_{i,S} p_S}\big) - c_i \ge \sum_{S\subseteq E} {q_{i',S} p_S} - c_{i'} &&\hspace*{-12pt}\forall i'\ne i,i'\in[n] \notag\\
& p_S \ge 0 &&\hspace*{-12pt}\forall S\subseteq E.\notag
\end{align}

We have thus obtained a contract $p^*$ that is a  feasible solution to LP~\eqref{LP:min-pay-relax} with objective $(1+\delta)\sum_{S\subseteq E} {q_{i,S} p_S} < \Gamma^*+\epsilon$. 
For action $a_i$, this contract pays the agent an expected transfer of $\sum_{S\subseteq E} {q_{i,S} p_S}< \frac{\Gamma^*+\epsilon}{1+\delta}$.
We have the following chain of inequalities:
$\sum_{S\subseteq E} {q_{i,S} p_S}\le \frac{(\Gamma^*)^{-}+\epsilon}{1+\delta} \le \frac{OPT_i+\epsilon}{1+\delta}\le OPT_i$, where the second inequality is by~\eqref{eq:key-obs}, and the last inequality is by taking the binary search error to be sufficiently small.%
\footnote{We use here that $OPT_i\ge c_i$ and that the number of bits of precision is polynomial.} 
To complete the proof we must show that $p^*$ is $\delta$-IC. This holds since the constraints of LP~\eqref{LP:min-pay-relax} ensure that for every action $a_{i'}\ne a_i$, using the notation $p_i=\sum_{S\subseteq E} {q_{i,S} p_S}$, we have
$p_{i'} - c_{i'} \le (1+\delta)p_i - c_i \le p_i - c_i +\delta p_i \le p_i - c_i +\delta$ (the last inequality uses that $p_i\le R_i \le 1$ by normalization).
\end{proof} 

\section{Hardness of approximation}
\label{sec:hardness}

\final{In this section unlike the previous one, the number of actions is no longer assumed to be constant.}
We show a hardness of approximation result for optimal contracts, based on the known hardness of approximation for MAX-3SAT. In his landmark paper, 
\cite{Hastad01} shows that it is NP-hard to distinguish between a satisfiable MAX-3SAT instance, and one in which there is no assignment satisfying more than $7/8+\alpha$ of the clauses, where $\alpha$ is an arbitrarily-small constant (Theorems 5.6 and 8.3 in \cite{Hastad01}).
We build upon this to prove our main technical contribution stated in Theorem \ref{thm:hardness-of-approx}, which immediately leads to our main results for this section in Corollaries~\ref{cor:hardness-of-approx}-\ref{cor:hardness-of-approx-delta}. 

\begin{theorem}
	\label{thm:hardness-of-approx}
	Let $c\in \mathbb{Z},c\ge 3$ be an (arbitrarily large) constant integer. 
	Let $\epsilon,\Delta\in\mathbb{R},\epsilon>0,\Delta\in[0,\frac{1}{20^c}]$ be such that $\frac{\epsilon-2\Delta^{1/c}}{3}\in(0,\frac{1}{20}]$ and $(\frac{\epsilon-2\Delta^{1/c}}{3})^c$ is an (arbitrarily small) constant.
	Then it is NP-hard to determine whether a principal-agent setting has an IC contract extracting full expected welfare, or whether there is no $\Delta$-IC contract extracting $>\frac{1}{c}+\epsilon$ of the expected welfare. 
\end{theorem}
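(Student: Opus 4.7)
The plan is a gap-preserving reduction from MAX-3SAT. By H{\aa}stad's theorem~\cite{Hastad01}, for any arbitrarily small constant $\alpha>0$ it is NP-hard to distinguish satisfiable 3-CNF formulas from those in which no assignment satisfies more than a $(\tfrac{7}{8}+\alpha)$-fraction of clauses; I would use this with $\alpha=\bigl(\tfrac{\epsilon-2\Delta^{1/c}}{3}\bigr)^c$, which is an arbitrarily small constant by hypothesis. From a 3-CNF $\phi$ with $N$ variables and $L$ clauses, I would first construct a \emph{base gadget} $B_\phi$: items correspond to the $N$ variables (so outcomes in $\{0,1\}^N$ are assignments), with a distinguished ``target'' action $a_0$ whose product distribution is uniform ($q_{0,j}=\tfrac12$) and has cost $c_0>0$, plus one zero-cost ``clause-falsifying'' action $a_k$ per clause $C_k$ whose product distribution fixes the three variables of $C_k$ to their unique falsifying values and is uniform on the other variables. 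Rewards are calibrated so that $a_0$ is the unique welfare-maximizer subject to normalization $R_i\le 1$. The key structural fact is that $q_{k,S}/q_{0,S}\in\{0,8\}$, equal to $8$ precisely when $S$ falsifies $C_k$.

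The completeness direction is immediate: if $\phi$ is satisfied by some $S^\star$, then $q_{k,S^\star}=0$ for every $k$, so placing all payment mass on $S^\star$ (with value $c_0\cdot 2^N$, of polynomial bit-length) implements $a_0$ at expected payment exactly $c_0$, extracting the full welfare $R_0-c_0$ via an exact IC contract. For soundness when $\phi$ is only $(\tfrac{7}{8}+\alpha')$-satisfiable with $\alpha'=(\epsilon-2\Delta^{1/c})/3$, I would analyze the dual LP~\eqref{LP:dual} for MIN-PAYMENT of $a_0$: the uniform assignment $\lambda_k=1/(8\alpha' L)$ is feasible because, for every $S$, the constraint reduces to the inequality $\mathrm{viol}_S\ge L(1-8\alpha')/8$, which holds by the soundness hypothesis. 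This certifies MIN-PAYMENT $\ge c_0/(8\alpha')$; choosing $c_0$ so that $c_0/(8\alpha')>R_0$ renders $a_0$ unprofitable, and with the reward calibration the clause actions carry only negligible welfare, yielding a multiplicative gap in $B_\phi$ governed by $\alpha'$.

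To obtain the stated $(1/c+\epsilon)$ bound under $\Delta\le 1/20^c$, I would amplify by taking $c$ independent copies of $B_\phi$ with disjoint items, product distributions across copies, and summed rewards; the composite target action plays $a_0$ in every copy, while composite clause actions falsify a clause in a single copy. Likelihood ratios multiply across copies, so the per-copy gap parameter $\alpha'$ compounds into the $c$-th-power gap $(\alpha')^c=\alpha$; this is precisely where the exponent $c$ in the hypotheses originates. The additive $\Delta$-IC slack is propagated to each copy via a Proposition~\ref{pro:from-delta-to-IC}-style rounding, yielding the per-copy tolerance $\Delta^{1/c}$ and the $2\Delta^{1/c}$ correction in the hypothesis. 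The hard part is controlling the amplification in the presence of $\Delta$-IC: one must bound the principal's payoff under \emph{arbitrarily correlated} payments across the $c$ copies (not just separable contracts) and ensure that the $\Delta$-IC slack does not compound catastrophically through the copies. Maintaining normalization, limited liability, and tie-breaking consistency through both base construction and amplification is an additional nontrivial bookkeeping obstacle that tunes the small constants $2$ and $3$ in $(\epsilon-2\Delta^{1/c})/3$.
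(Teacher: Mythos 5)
Your base gadget and its soundness analysis via the dual LP are sound, and they closely parallel the paper's Proposition~\ref{pro:min-pay-hardness} (the average-action/dual certificate argument with $\lambda_k=1/(8\alpha'L)$ yields MIN-PAYMENT $\ge c_0/(8\alpha')$, and completeness is exactly as you say). Where the proposal breaks down is the amplification step, which is where the factor $1/c$ and the $\Delta\le \epsilon^c$ dependence must actually come from, and which is also the heart of the theorem.

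Two concrete problems. First, the claimed compounding of likelihood ratios does not occur with the actions you describe: if a ``composite clause action'' falsifies a clause in a single copy and plays $a_0$ (or anything uniform) in the other $c-1$ copies, then its likelihood ratio relative to the composite target factors as $\big(q_k(S_1)/q_0(S_1)\big)\cdot 1\cdots 1$, i.e.\ the ratio from that one copy alone; nothing multiplies, and $(\alpha')^c$ does not appear. To get multiplicative compounding you would need the deviating actions to deviate in every copy simultaneously, which changes the action set and costs and requires a new argument. Second, and more fundamentally, a $1/c$ gap requires a \emph{gradient} of intermediate actions with increasing cost, reward and welfare---this is what the paper's $c$-action gap setting (Definition~\ref{def:gap-setting}, Propositions~\ref{pro:gap-setting-two-actions}--\ref{pro:gap-setting}) supplies, via a single ``gap item'' whose distribution under the $c$ gap actions is $\gamma^{c-i}$. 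Your composite setting as described has essentially two welfare levels (target $\approx c(R_0-c_0)$, clause $\approx 0$), so the achievable-vs-optimal gap is all-or-nothing rather than calibrated to $1/c+\epsilon$, and the $\Delta$-IC slack analysis does not exhibit the $\gamma^c$-structure that the paper gets from the geometry of the gap setting. Your cited appeal to a ``Proposition~\ref{pro:from-delta-to-IC}-style rounding'' to split $\Delta$ into per-copy tolerances $\Delta^{1/c}$ does not apply---that proposition converts $\delta$-IC to IC by interpolating with the $\alpha=1$ linear contract, and says nothing about distributing tolerance across independent subinstances. You flag exactly this difficulty yourself (correlated payments across copies, compounding slack), but flagging it is not resolving it. The paper sidesteps amplification entirely: it keeps the SAT-setting's $m$ items for the information-theoretic hardness and adds exactly one extra reward-carrying ``gap item'' whose probabilities encode a pre-analyzed $(c,\epsilon,\epsilon^c)$-gap setting, so that the only thing the SAT part must certify is whether the target action's distribution over the SAT items is distinguishable from the average of the block. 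That separation---rewards concentrated on a single extra coordinate, SAT variables reward-free---is what makes the $1/c+\epsilon$ bound and the $\Delta\le\epsilon^c$ tolerance tractable, and it is missing from your construction.
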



We present two direct implications of Theorem \ref{thm:hardness-of-approx}. First, Corollary \ref{cor:hardness-of-approx} applies to the OPT-CONTRACT problem, and states hardness of approximation within any constant of the optimal expected payoff by an IC contract. (A similar result can be shown for MIN-PAYMENT; see Appendix~\ref{appx:hardness-min-payment}.)

\begin{corollary}
	\label{cor:hardness-of-approx}
	For any constant $c\in \mathbb{R},c\ge 1$, it is NP-hard to approximate the optimal expected payoff achievable by an IC contract to within a multiplicative factor $c$. 
\end{corollary}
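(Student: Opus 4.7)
The plan is to deduce the corollary directly from Theorem~\ref{thm:hardness-of-approx} by a gap-preserving reduction, specialized to $\Delta = 0$ so that the conclusion is about exact IC contracts. With $\Delta = 0$, the theorem states that it is NP-hard to distinguish (YES) instances admitting an IC contract extracting the full expected welfare $W$ from (NO) instances in which no IC contract extracts more than $(1/c_{\text{thm}} + \epsilon)\cdot W$, where $c_{\text{thm}}$ and $\epsilon$ are the theorem's parameters.

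Given the target approximation factor $c \geq 1$ from the corollary, I would instantiate the theorem with an integer $c_{\text{thm}} = \max(3, \lceil c \rceil + 1)$ so that $c_{\text{thm}} > c$, and then pick any positive constant $\epsilon$ satisfying both $\epsilon \leq 3/20$ and $\epsilon < (c_{\text{thm}} - c)/(c \cdot c_{\text{thm}})$. The latter bound ensures $1/c_{\text{thm}} + \epsilon < 1/c$, and with $\Delta = 0$ the theorem's hypotheses $\epsilon/3 \in (0, 1/20]$ and $(\epsilon/3)^{c_{\text{thm}}}$ being an (arbitrarily small) positive constant are immediate.

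Now suppose for contradiction that a polynomial-time algorithm $\mathcal{A}$ returns, on every principal-agent instance, an IC contract with expected principal payoff at least $\mathrm{OPT}/c$, where $\mathrm{OPT}$ is the optimal IC payoff. On a YES instance $\mathcal{A}$'s output has payoff at least $W/c$. On a NO instance, $\mathcal{A}$'s output is itself IC, so its payoff is at most $\mathrm{OPT} \leq (1/c_{\text{thm}} + \epsilon)\cdot W < W/c$. The welfare value $W$ is directly computable from the instance (it is the maximum of $R_i$ over actions, using the input rewards and probabilities), so comparing $\mathcal{A}$'s output to the threshold $W/c$ distinguishes YES from NO instances in polynomial time, contradicting Theorem~\ref{thm:hardness-of-approx}.

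I do not expect a serious obstacle. The only thing to verify is that the parameter choice is internally consistent, i.e., that $\epsilon$ can be taken to be a fixed positive constant satisfying both the theorem's constraints and the separation $1/c_{\text{thm}} + \epsilon < 1/c$. Since $c$ and $c_{\text{thm}}$ are fixed constants with $c_{\text{thm}} > c$, both upper bounds on $\epsilon$ are strictly positive constants, and any small enough $\epsilon$ works. The edge case $c = 1$ (exact optimization) is covered by the same gap, since distinguishing $W$ from a strictly smaller value is necessary for exact computation.
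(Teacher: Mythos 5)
Your proof is correct and takes essentially the same route the paper intends: instantiate Theorem~\ref{thm:hardness-of-approx} with $\Delta=0$ (so the NO case already rules out exact IC contracts), choose an integer $c_{\text{thm}}>c$ and a small constant $\epsilon$ so that $1/c_{\text{thm}}+\epsilon<1/c$, and observe that a $c$-approximation (whether returning a contract or a payoff estimate) would separate the YES and NO instances at the threshold $W/c$. The paper gives no written proof for this corollary but proves the sibling Corollary~\ref{cor:hardness-of-approx-delta} in exactly this style by plugging parameters into Theorem~\ref{thm:hardness-of-approx}. One trivial slip: the full expected welfare is $W=\max_i(R_i-c_i)$, not $\max_i R_i$; this does not affect the argument, since that quantity is still computed in polynomial time from the input rewards, probabilities, and costs.
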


Corollary \ref{cor:hardness-of-approx} suggests that in order to achieve positive results, we may want to follow the approach of the CDW framework and relax IC to $\Delta$-IC. That is, instead of trying to compute in polynomial time an approximately-optimal IC contract, we should try to compute in polynomial time a $\Delta$-IC contract with expected payoff that is guaranteed to approximately exceed that of the optimal IC contract. 
The next corollary establishes a computational limitation on this approach: 
Corollary~\ref{cor:hardness-of-approx-delta} fixes a constant approximation factor $c$, and derives $\Delta$ for which a $c$-approximation by a $\Delta$-IC contract is NP-hard to find. 
(It is also possible to reverse the roles---fix $\Delta$ and derive a constant approximation factor for which NP-hardness holds.)
We shall complement this limitation with a positive result in Section \ref{sec:approx}. 


\begin{corollary}
	\label{cor:hardness-of-approx-delta}	
	For any constant $c\in\mathbb{R},c\ge 5$ and $\Delta \le(\frac{1}{4c})^c$, it is NP-hard to find a $\Delta$-IC contract that guarantees $>\frac{2}{c}OPT$, where $OPT$ is the optimal expected payoff achievable by an IC contract.%
	\footnote{The relevant hardness notion is more accurately FNP-hardness.}
\end{corollary}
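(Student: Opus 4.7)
The plan is to derive the corollary from Theorem~\ref{thm:hardness-of-approx} by a careful choice of parameters and a gap-preserving reduction. Given $c \ge 5$ and $\Delta \le (1/(4c))^c$, I would set $\epsilon := 1/c$ in Theorem~\ref{thm:hardness-of-approx} and invoke it at $\Delta_0 := (1/(4c))^c$. The hypotheses are easy to verify: since $c \ge 5$ implies $4c \ge 20$, we have $\Delta_0 \le 1/20^c$; moreover $\Delta_0^{1/c} = 1/(4c)$, so $(\epsilon - 2\Delta_0^{1/c})/3 = (1/c - 1/(2c))/3 = 1/(6c)$, which lies in $(0, 1/20]$ for every $c \ge 5$, and $(1/(6c))^c$ is a fixed (arbitrarily small) constant. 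Hence Theorem~\ref{thm:hardness-of-approx} yields NP-hardness of distinguishing instances in which some IC contract extracts the full expected welfare $W$ from instances in which no $\Delta_0$-IC contract achieves expected payoff above $(1/c + \epsilon) W = 2W/c$.

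Assume for contradiction there is a polynomial-time algorithm $\mathcal{A}$ that, whenever a $\Delta$-IC contract with expected principal payoff strictly above $(2/c)\,OPT$ exists, outputs one. Run $\mathcal{A}$ on the instance produced by Theorem~\ref{thm:hardness-of-approx} and report YES iff the returned contract has payoff strictly above $2W/c$. In a YES instance, $OPT = W$, and the full-welfare-extracting IC contract (which is trivially $\Delta$-IC) has payoff $W > 2W/c$ since $c \ge 5 > 2$; therefore such a contract exists, $\mathcal{A}$ must return one with payoff $> (2/c)\,OPT = 2W/c$, and the reduction reports YES correctly. In a NO instance, any $\Delta$-IC contract is a fortiori $\Delta_0$-IC (since $\Delta \le \Delta_0$) and so has expected payoff at most $2W/c$; whatever $\mathcal{A}$ outputs has payoff at most $2W/c$, and the reduction reports NO. This polynomial-time distinguisher contradicts the hardness granted by Theorem~\ref{thm:hardness-of-approx}.

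The main subtlety I expect to handle is parameter bookkeeping. Theorem~\ref{thm:hardness-of-approx} requires $(\epsilon - 2\Delta^{1/c})/3 \le 1/20$, an inequality that can fail for $\Delta$ very close to zero even with $\epsilon = 1/c$. I therefore do not attempt to apply Theorem~\ref{thm:hardness-of-approx} at every $\Delta$ in the corollary's range; instead I apply it only at $\Delta_0 = (1/(4c))^c$, where the inequality holds comfortably, and extend hardness to smaller $\Delta$ via the monotonicity observation that a $\Delta$-IC contract with $\Delta \le \Delta_0$ is automatically $\Delta_0$-IC, so an algorithm satisfying the corollary's guarantee for the smaller $\Delta$ would immediately solve the gap problem at $\Delta_0$. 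The remaining arithmetic ($\Delta_0 \le 1/20^c$, $1/(6c) \le 1/20$ for $c \ge 5$, and $c > 2$ to make $W > 2W/c$) is routine.
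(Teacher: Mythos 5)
Your proof is correct and follows the same route as the paper, which simply says the corollary ``follows from Theorem~\ref{thm:hardness-of-approx} by setting $\epsilon=\frac{1}{c}$.'' You fill in the details the paper leaves implicit: the arithmetic check that $\Delta_0=(1/(4c))^c$ together with $\epsilon=1/c$ satisfies the theorem's hypotheses, and---importantly---the monotonicity observation that a $\Delta$-IC contract for $\Delta\le\Delta_0$ is automatically $\Delta_0$-IC, which is genuinely needed because for very small $\Delta$ the constraint $\frac{\epsilon-2\Delta^{1/c}}{3}\le\frac{1}{20}$ can fail (e.g., $c=5$ and $\Delta<(1/40)^5$), so one cannot invoke Theorem~\ref{thm:hardness-of-approx} directly at the given $\Delta$. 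This is a worthwhile clarification of a step the paper glosses over.
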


\begin{proof}
The corollary follows from Theorem \ref{thm:hardness-of-approx} by setting $\epsilon=\frac{1}{c}$. 
\end{proof}

It also follows from Theorem \ref{thm:hardness-of-approx} and Corollary \ref{cor:hardness-of-approx-delta} that for every $c,\Delta$ as specified, it is NP-hard to approximate the optimal expected payoff achievable by a $\Delta$-IC contract to within a multiplicative factor $c/2$. That is, hardness of approximation also holds for $\delta$-OPT-CONTRACT.







In the remainder of the section we prove Theorem~\ref{thm:hardness-of-approx}. After a brief overview, Section~\ref{sub:gap-avg-sat} sets up some tools for the proof, in Section \ref{sub:two-actions} we focus on the special case of $c=2$, and in Section~\ref{sub:c-actions} we prove the more general statement for any constant $c$. 

\subsection{Proof overview}

It will be instructive to consider first a version of Theorem \ref{thm:hardness-of-approx} for the case of $c = 2$:

\begin{theorem}
	\label{thm:hardness-of-approx-2}
	Let $\epsilon,\Delta\in\mathbb{R},\epsilon>0,\Delta\in[0,\frac{1}{20^2}]$ be such that $\frac{\epsilon-2\Delta^{1/2}}{3}\in(0,\frac{1}{20}]$ and 
	$(\frac{\epsilon-2\Delta^{1/2}}{3})^2$ is an (arbitrarily small) constant.
	Then it is NP-hard to determine whether a principal-agent setting has an IC contract extracting full expected welfare, or whether there is no $\Delta$-IC contract extracting $>\frac{1}{2}+\epsilon$ of the expected welfare. 
\end{theorem}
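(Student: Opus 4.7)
I would proceed by a gap-preserving reduction from MAX-3SAT, using Håstad's theorem that distinguishing satisfiable instances from those where no assignment satisfies more than a $7/8+\alpha$ fraction of the clauses is NP-hard for any constant $\alpha>0$. Given a 3CNF formula $\varphi$ on $N$ variables and $M$ clauses, the plan is to construct in polynomial time a normalized, succinct principal-agent setting whose items encode the clauses (with additive rewards summing to~$1$) and whose actions encode ``how hard'' the agent works toward a particular candidate assignment. The design is guided by the intuition, highlighted in the introduction, that satisfiable instances should map to principal-agent settings with no gap between first-best and second-best (so full welfare is IC-extractable), while $(7/8+\alpha)$-satisfiable instances should map to settings with a constant multiplicative gap.

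Two families of actions drive the construction. A zero-cost ``shirk'' action~$a_0$ has a baseline product distribution over the clause-items, and a costly ``target'' action~$a^{\ast}$ (representing a candidate assignment~$\sigma$) places high probability on exactly those clause-items satisfied by~$\sigma$; its cost is tuned so that $R_{a^{\ast}}-c_{a^{\ast}}=1$. In addition, a polynomial family of ``decoy'' actions indexed by single-variable perturbations of~$\sigma$ have product distributions differing from $a^{\ast}$'s only on the clauses whose satisfaction flips. When $\varphi$ is satisfiable I set $\sigma$ to a satisfying assignment; then for every decoy there is a clause on which $a^{\ast}$ puts strictly more mass, yielding a feasible certificate in the dual LP~\eqref{LP:dual} that implements $a^{\ast}$ at payment exactly $c_{a^{\ast}}$, hence an IC contract of payoff~$1$. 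When $\varphi$ is only $(7/8+\alpha)$-satisfiable, every candidate~$\sigma$ leaves an $\Omega(1)$ fraction of clauses unsatisfied; I would then argue that the mixture of decoy distributions covers each candidate $a_i$'s distribution on enough reward-weighted outcomes to force, via LP~\eqref{LP:min-pay}, a lower bound $p^{\ast}_i\ge R_i\bigl(\tfrac{1}{2}-O(\epsilon)\bigr)$. The crucial step here is a local sensitivity analysis of the minimum-payment LP in terms of the fractional unsatisfiability of~$\sigma$.

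To sharpen Håstad's $1/8$ soundness into the $1/2+\epsilon$ gap in the contract objective I would take a parallel product of two independent copies of the basic gadget: welfares multiply while implementation payments essentially add, producing a $(1/2+\epsilon)$-versus-$1$ gap in the NO case (the exponent $c=2$ in the parameter shape matches the two copies). Translating a $\Delta$-IC slack in the composed setting to the individual copies yields a per-copy slack of order $\Delta^{1/2}$, which is precisely where the term $2\Delta^{1/2}$ and the restriction $\Delta\le 1/400$ in the hypothesis come from; the further condition $(\epsilon-2\Delta^{1/2})/3\le 1/20$ quantifies how much reward mass can be reassigned to accommodate the slack without destroying the gap. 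I expect the main obstacle to be this last step: showing that the $\Delta$-IC best response in the composed instance decomposes into approximate best responses in each copy, with errors aggregating as $O(\Delta^{1/2})$ rather than $\Omega(1)$, so that Håstad's constant-size gap survives the amplification without being swallowed by the approximate-IC slack.
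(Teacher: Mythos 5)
Your high-level framework is right: a gap-preserving reduction from MAX-3SAT via H\aa{}stad's $7/8+\alpha$ hardness, aimed at producing a gap between first-best and second-best, is precisely what the paper does. However, the specific construction you sketch has a fatal flaw, and it also diverges structurally from the paper's argument in ways worth understanding.

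The decisive problem is that your reduction is not polynomial-time computable. You set up a ``target'' action $a^{\ast}$ whose product distribution places high probability on exactly the clauses satisfied by a candidate assignment $\sigma$, and then you write ``When $\varphi$ is satisfiable I set $\sigma$ to a satisfying assignment.'' But the reduction must output a single principal-agent setting given only $\varphi$, without knowing whether $\varphi$ is satisfiable; if the construction of $a^{\ast}$ and the decoy actions depends on a satisfying $\sigma$, then you would have to solve SAT to run the reduction. You cannot fix this by enumerating candidate assignments (there are $2^N$ of them) or by fixing an arbitrary $\sigma$ (then in the YES case there is no reason the resulting setting has a full-welfare IC contract). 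The paper avoids this entirely: its SAT setting (Algorithm~\ref{alg:sat-setting-from-phi}) is defined \emph{syntactically} from $\varphi$, with one action per clause and one item per variable, and the existence of a satisfying assignment is used only in the \emph{analysis} (Definition~\ref{def:sat-setting}, condition (1)) to exhibit a zero-probability outcome $S^{\ast}$ under every action. That is what lets the YES case admit a full-welfare IC contract while keeping the reduction oblivious to satisfiability.

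There are also two structural differences, even setting aside computability. First, the paper encodes items as variables and actions as clauses (so the outcome set selecting which items are present corresponds bijectively to a truth assignment), whereas you encode items as clauses and actions as assignments; the former scales polynomially and supports the key average-action bound $q_{\text{avg},S}\ge (1-8\alpha)/2^m$, while the latter naturally produces exponentially many actions. Second, to turn the $1/8$ SAT gap into a $(1/2+\epsilon)$ contract gap, you propose taking a parallel product of two copies of the gadget and you correctly flag that showing the $\Delta$-IC best response decomposes across copies is the hard part -- in fact this is exactly the kind of product-decomposition claim that tends to fail, because the $\Delta$-IC slack in the composed instance need not split additively or multiplicatively across coordinates. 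The paper sidesteps this entirely: the $1/2+\epsilon$ gap is injected by a separately-constructed two-action, single-item \emph{gap setting} (Proposition~\ref{pro:gap-setting-two-actions}), and Claim~\ref{cla:last-action} shows that a $\Delta$-IC contract for the combined ``product setting'' projects onto a $\delta$-IC contract for the gap setting with $\delta=(8\alpha+\Delta)/(1-8\alpha)$; the $\Delta^{1/2}$ in the hypothesis then falls out of a single-variable AM-type inequality $9\alpha+4\Delta\le(3\alpha^{1/2}+2\Delta^{1/2})^2$, not from a tensor-power amplification. So your intuition about where the $\Delta^{1/2}$ comes from is not what the paper does, and you would need a genuinely new argument to make the parallel-product route work.
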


This theorem is already interesting as it shows that even relaxing IC to $\Delta$-IC where $\Delta\gg 0$, approximating the optimal expected payoff within $65\%$ is computationally hard:

\begin{corollary}
	For any $\Delta\le \frac{1}{20^2}$, 
	it is NP-hard to find a $\Delta$-IC contract that guarantees $>0.65 \cdot OPT$, 
	where $OPT$ is the optimal expected payoff achievable by an IC contract. 
\end{corollary}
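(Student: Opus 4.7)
The plan is to derive this corollary directly by instantiating Theorem~\ref{thm:hardness-of-approx-2} with the parameter choice that aligns the theorem's threshold $\tfrac{1}{2}+\epsilon$ with the corollary's threshold $0.65$. Concretely, I set $\epsilon = \tfrac{3}{20}$, so that $\tfrac{1}{2}+\epsilon = \tfrac{13}{20} = 0.65$, while $\Delta \le \tfrac{1}{20^2}$ is exactly the range allowed by the theorem.

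Next I verify the parameter hypotheses of Theorem~\ref{thm:hardness-of-approx-2}. Since $\Delta \le \tfrac{1}{400}$, we have $\Delta^{1/2} \le \tfrac{1}{20}$, hence $\epsilon - 2\Delta^{1/2} \in [\tfrac{1}{20},\tfrac{3}{20}]$ and $\tfrac{\epsilon - 2\Delta^{1/2}}{3} \in (0,\tfrac{1}{20}]$, as required; the quantity $\bigl(\tfrac{\epsilon - 2\Delta^{1/2}}{3}\bigr)^2$ is a fixed (small, positive) constant depending only on $\Delta$. Theorem~\ref{thm:hardness-of-approx-2} then asserts that it is NP-hard to distinguish (i) instances admitting an IC contract that extracts the full expected welfare $W$ from (ii) instances in which no $\Delta$-IC contract extracts payoff exceeding $0.65\cdot W$.

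To complete the derivation I will argue by contradiction. Suppose there were a polynomial-time algorithm that, on every input, outputs a $\Delta$-IC contract with principal payoff strictly greater than $0.65\cdot OPT$, where $OPT$ denotes the optimal IC payoff. In a type-(i) instance, $OPT = W$, so the algorithm's output has payoff $> 0.65\cdot W$. In a type-(ii) instance, every IC contract is also $\Delta$-IC, so $OPT$ is at most the optimal $\Delta$-IC payoff, which by the theorem is at most $0.65\cdot W$; in particular the algorithm's own $\Delta$-IC output has payoff at most $0.65\cdot W$. Because $W = \max_i (R_i - c_i)$ is computable in polynomial time directly from the succinct input, comparing the algorithm's output to $0.65\cdot W$ decides between (i) and (ii), contradicting Theorem~\ref{thm:hardness-of-approx-2}.

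Since this is essentially a parameter instantiation of an already established theorem, there is no substantial obstacle in the proof. The one subtlety worth flagging is matching strict and weak inequalities on the two sides: the theorem yields ``no $\Delta$-IC contract extracts $> 0.65\cdot W$'' (strict), which lines up precisely with the corollary's hypothesized algorithmic guarantee of ``$> 0.65\cdot OPT$'' (strict), and this is exactly why the choice $\epsilon = \tfrac{3}{20}$ suffices rather than requiring something strictly larger.
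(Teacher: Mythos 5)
Your proposal is correct and takes essentially the same approach as the paper, which simply invokes Theorem~\ref{thm:hardness-of-approx-2} with $\epsilon = \frac{3}{20}$; you have just spelled out the hypothesis verification and the contradiction/decision argument that the paper leaves implicit.
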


\begin{proof}
	The corollary follows from Theorem \ref{thm:hardness-of-approx-2} by setting $\epsilon=\frac{3}{20}$. 
\end{proof}

To establish Theorem \ref{thm:hardness-of-approx-2} we present a gap-preserving reduction from any MAX-3SAT instance~$\varphi$ to a principal-agent setting that we call the ``product setting'' (the reduction appears in Algorithm~\ref{alg:sat-to-prod-2} and is analyzed in Proposition~\ref{pro:hardness-reduction-2}). 
The product setting encompasses a 2-action principal-agent ``gap setting'', in which any $\delta$-IC contract for sufficiently small $\delta$ cannot extract much more than $\frac{1}{2}$ of the expected welfare (Proposition \ref{pro:gap-setting-two-actions}). 

The special case of $c= 2$ captures most ideas behind the proof of the more general Theorem~\ref{thm:hardness-of-approx}, but the analysis is simplified by the fact that to extract more than roughly~$\frac{1}{2}$ of the expected welfare in the 2-action gap setting, there is a single action that the contract could potentially incentivize. The more general case involves gap settings with more actions (the reduction appears in Algorithm~\ref{alg:sat-to-prod} and is analyzed in Proposition~\ref{pro:hardness-reduction}). To extract more than $\approx\frac{1}{c}$ of the expected welfare, the contract could potentially incentivize almost any one of these actions (Proposition \ref{pro:gap-setting}).

\paragraph{Barrier to going beyond constant $\mathbf{c}$} 
Our techniques for establishing Theorem \ref{thm:hardness-of-approx} do not generalize beyond constant values of $c$ (the approximation factor). The reason for this is that we do not know of $(c,\epsilon,f)$-gap settings (Definition~\ref{def:unstruct-gap-setting}) where $f(c,\epsilon)=o(\epsilon^c)$. 
As long as $f(c,\epsilon)$ is of order $\epsilon^c$, the gap in the MAX-3SAT instance we reduce from must be between $7/8 + \epsilon^c$ and 1, and this gap problem is known to be NP-hard only for constant $c$. As \cite{Hastad01} notes, significantly stronger complexity assumptions may lead to hardness for slightly (but not significantly) larger values of $c$.

\subsection{Key ingredients}
\label{sub:gap-avg-sat}

In this section we formalize the notions of ``gap'' and ``SAT'' principal-agent settings as well as the notion of an ``average action'', which will be useful in proving Theorems \ref{thm:hardness-of-approx} and \ref{thm:hardness-of-approx-2}. 
The term ``gap setting'' reflects the gap between the first-best solution (i.e., the expected welfare), and the second-best solution (i.e., the expected payoff to the principal from the optimal contract). 
It will be convenient \emph{not} to normalize gap settings (and thus also the product settings encompassing them). This makes our negative results only stronger, as we show next.

\paragraph{Unnormalized settings and a stronger $\delta$-IC notion}
Before proceeding we must define what we mean by a $\delta$-IC contract in an unnormalized setting. Moreover we show that if Theorems~\ref{thm:hardness-of-approx} or \ref{thm:hardness-of-approx-2} hold for unnormalized settings with the new $\delta$-IC notion, then they also hold for normalized settings with the standard $\delta$-IC notion.

Recall that in a \emph{normalized} setting, action $a_i$ that is $\delta$-incentivized by the contract must satisfy $\delta$-IC constraints of the form $p_i-c_i+\delta \ge p_{i'}-c_{i'}$ for every $i'\ne i$. In an \emph{unnormalized} setting, an additive $\delta$-deviation from optimality is too weak of a requirement; we require instead that $a_i$ satisfy $\delta$-IC constraints of the form 
\begin{equation}
(1+\delta)p_i-c_i \ge p_{i'}-c_{i'}~~~\forall i'\ne i.\label{eq:delta-IC-orig}
\end{equation}
Two key observations are: (i) The constraints in \eqref{eq:delta-IC-orig} imply the standard $\delta$-IC constraints if $p_i\le 1$, as is the case if the setting is normalized; (ii) The constraints in \eqref{eq:delta-IC-orig} are invariant to scaling of the setting and contract (i.e., to a change of currency of the rewards, costs and payments). By these observations, a $\delta$-IC contract according to the new notion in an unnormalized setting becomes a standard $\delta$-IC contract after normalization of the setting and payments, with the same fraction of optimal expected welfare extracted as payoff to the principal.

Assume a negative result holds for unnormalized settings, i.e., it is NP-hard to determine between the two cases stated in Theorem~\ref{thm:hardness-of-approx} (or Theorem~\ref{thm:hardness-of-approx-2}). 
Assume for contradiction this does not hold for normalized settings. Then given an unnormalized setting, we can simply scale the expected rewards and costs to normalize it, and then determine whether or not there is an IC contract extracting full expected welfare. If such a contract exists, it is also IC and full-welfare-extracting in the unnormalized setting after scaling back the payments. 
On the other hand, by the discussion above, if there is no standard-notion $\Delta$-IC contract extracting a given fraction of the expected welfare in the normalized setting, there can also be no such contract with the new $\Delta$-IC notion in any scaling of the setting. We have this reached a contradiction to NP-hardness. We conclude that proving our negative results for unnormalized settings only strengthens these results.

\paragraph{Gap settings and their construction} 
We now turn to the definition of gap settings.

\begin{definition}[Unstructured gap setting]
	\label{def:unstruct-gap-setting}
	Let $f(c,\epsilon)\in \mathbb{R}_{\ge 0}$ be an increasing function where $c\in \mathbb{Z}_{> 0}$ and $\epsilon\in \mathbb{R}_{> 0}$.
	An \emph{unstructured $(c,\epsilon,f)$-gap setting} is a principal-agent setting such that for every $0\le \delta\le f(c,\epsilon)$, the optimal $\delta$-IC contract can extract no more than $\frac{1}{c}+\epsilon$ of the expected welfare as the principal's expected payoff.
\end{definition}

For convenience we focus on (structured) gap settings as follows.

\begin{definition}[Gap setting]
	\label{def:gap-setting}
	 A \emph{$(c,\epsilon,f)$-gap setting} is a setting as in Definition \ref{def:unstruct-gap-setting} with the following structure: there is a~single item and $c$ actions; the first action has zero cost; the last action has probability 1 for the item and maximum expected welfare among all actions. 
\end{definition}

To construct a gap setting, we construct a principal-agent setting with a single item, $c$~actions and parameter \final{$\gamma\in \mathbb{R}_{>0}, \gamma<1$}.
The construction is similar to \cite{DRT18}, but requires a different analysis. 
For every $i\in[c]$, set the probability of action $a_i$ for the item to $\gamma^{c-i}$, and set $a_i$'s cost to 
$c_i=(1/\gamma^{i-1}) - i +(i-1)\gamma.$ 
Set the reward for the item to be $1/\gamma^{c-1}$.
Observe that the expected welfare of action $a_i$ is
$i -(i-1)\gamma$, so the last action has the maximum expected welfare $c-(c-1)\gamma$. 
This establishes the structural requirements from a gap setting (Definition~\ref{def:gap-setting}).
Propositions \ref{pro:gap-setting-two-actions} and \ref{pro:gap-setting} establish the gap requirements from a gap setting (Definition \ref{def:unstruct-gap-setting}) for $c=2$ and $c\ge 3$, respectively---the separation between these cases is for clarity of presentation. 
We use the former in Section~\ref{sub:two-actions}, in which we show hardness for the $c=2$ case; the latter is a generalization to arbitrary-large constant $c$. 
See Appendix~\ref{appx:gap-settings} for proofs. 

\begin{proposition}[$2$-action gap settings]
	\label{pro:gap-setting-two-actions}
	For every $\epsilon\in (0,\frac{1}{4}]$, there exists a $(2,\epsilon,\epsilon^2)$-gap setting. 
\end{proposition}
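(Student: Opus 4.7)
The plan is to instantiate the general gap-setting construction with $c=2$ and $\gamma=2\epsilon$, then bound the principal's payoff from any $\delta$-IC contract for every $\delta\in[0,\epsilon^2]$. The construction yields a single-item setting with reward $1/\gamma$, actions $a_1,a_2$ of item-probabilities $(\gamma,1)$ and costs $(0,(1-\gamma)^2/\gamma)$, and welfares $(1,2-\gamma)$; in particular $a_2$ achieves the first-best welfare $W^*=2-\gamma$. The assumption $\epsilon\le 1/4$ keeps $\gamma\le 1/2$ and the setting well-defined, and the structural requirements of Definition~\ref{def:gap-setting} are visibly met.

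Next I reduce to contracts of the form $(p_\emptyset,p_{\{1\}})=(0,t)$. When incentivizing $a_1$, the zero contract $t=p_\emptyset=0$ trivially satisfies $a_1$'s $\delta$-IC constraint (since $a_2$ has strictly positive cost) and already yields principal payoff $R_1=1$, so nothing more is gained by taking $p_\emptyset,t>0$. When incentivizing $a_2$, note that $a_2$ always produces the item while $a_1$ does so only with probability $\gamma<1$; hence lowering $p_\emptyset$ strictly loosens the $\delta$-IC constraint $(1+\delta)t-c_2\ge \gamma t+(1-\gamma)p_\emptyset$ without changing $a_2$'s expected payment, so limited liability forces the optimal contract to have $p_\emptyset=0$. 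The constraint then reduces to $(1+\delta-\gamma)t\ge c_2$, whose minimum solution is $t^*=(1-\gamma)^2/(\gamma(1+\delta-\gamma))$, and the resulting principal payoff from $a_2$ simplifies (via $(1+\delta-\gamma)-(1-\gamma)^2=\delta+\gamma(1-\gamma)$) to $(\delta+\gamma(1-\gamma))/(\gamma(1+\delta-\gamma))$.

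It remains to verify that both extracted fractions are at most $1/2+\epsilon$ for $\gamma=2\epsilon$ and every $\delta\in[0,\epsilon^2]$. The $a_1$-fraction $1/(2-2\epsilon)\le 1/2+\epsilon$ rearranges to $2\epsilon(1-2\epsilon)\ge 0$, immediate from $\epsilon\le 1/2$. The $a_2$-fraction equals $\frac{1}{\gamma(2-\gamma)}\cdot\frac{\delta+A}{\delta+B}$ with $A=\gamma(1-\gamma)$ and $B=1-\gamma$, and since $B-A=(1-\gamma)^2\ge 0$ this ratio is nondecreasing in $\delta$; hence it suffices to check $\delta=\epsilon^2$, where the expression simplifies to $(2-3\epsilon)/(4(1-\epsilon)^3)$. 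The required bound then reduces, after clearing denominators and dividing by $\epsilon$, to the polynomial inequality $3\epsilon-5\epsilon^2+2\epsilon^3\le 1/2$ on $(0,1/4]$; its derivative $3-10\epsilon+6\epsilon^2$ stays positive throughout the interval (the roots of the derivative lie outside), so the left-hand side is increasing, and its value at the right endpoint $\epsilon=1/4$ equals $15/32<1/2$.

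The main obstacle is pinning down the right value of $\gamma$: a choice much larger than $2\epsilon$ already violates the $a_1$-fraction bound, while a choice much smaller lets the $\delta/\gamma$ blowup in the $a_2$-payoff dominate once $\delta$ approaches $\epsilon^2$. The value $\gamma=2\epsilon$ is essentially the tightest one making both bounds hold up to the boundary $\epsilon=1/4$; once it is fixed, the remaining work is routine algebraic verification of a degree-three polynomial inequality in a single variable.
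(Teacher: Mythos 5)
Your proof is correct and follows essentially the same approach as the paper's: you instantiate the same single-item, two-action construction with a parameter $\gamma$, observe that the optimal contract has $p_\emptyset=0$, bound the minimum payment for $\delta$-incentivizing $a_2$ via the binding constraint $(1+\delta-\gamma)t\ge c_2$, use monotonicity in $\delta$ to set $\delta$ at its upper bound, and then verify a cubic inequality in $\epsilon$. The only difference is your choice of $\gamma=2\epsilon$ where the paper uses $\gamma=\epsilon$; both work, so your closing remark that $\gamma=2\epsilon$ is ``essentially the tightest'' is not quite right, but this is immaterial to the proof.
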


\begin{proposition}[$c$-action gap settings]
	\label{pro:gap-setting}
	For every $c\ge 3$ and $\epsilon\in (0,\frac{1}{4}]$, there exists a $(c,\epsilon,\epsilon^c)$-gap setting.  
\end{proposition}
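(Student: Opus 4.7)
My approach is to set $\gamma := \epsilon$ in the construction and verify the gap requirement from Definition~\ref{def:unstruct-gap-setting}. The structural conditions of Definition~\ref{def:gap-setting} are immediate: there is one item and $c$ actions; action $a_1$ has cost $c_1 = 1/\gamma^0 - 1 + 0 = 0$; action $a_c$ has probability $\gamma^0 = 1$; and the expected welfare $i - (i-1)\gamma$ is strictly increasing in $i$ (because $\gamma < 1$), so $a_c$ attains the maximum welfare $W := c - (c-1)\epsilon$. What remains is to show that for every $\delta \in [0,\epsilon^c]$, no $\delta$-IC contract yields principal payoff exceeding $(1/c + \epsilon)\,W$.

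Since the outcome space has only two elements, a contract is a pair of nonnegative payments $(p_0, p_1)$, and the expected payment for action $a_i$ is $p_i = p_0 + q_i(p_1 - p_0)$ with $q_i = \gamma^{c-i}$. Let $\pi_i$ denote the maximum principal payoff achievable by a $\delta$-IC contract incentivizing $a_i$. For $i = 1$, payments are nonnegative, so $\pi_1 \le R_1 = 1$ trivially. For $i \ge 2$, I use only the single $\delta$-IC constraint of $a_i$ against its neighbor $a_{i-1}$. Rearranging $(1+\delta)p_i - c_i \ge p_{i-1} - c_{i-1}$ yields
\[
A\,(p_1 - p_0) + \delta\, p_0 \;\ge\; c_i - c_{i-1}, \qquad A := (1+\delta)q_i - q_{i-1} = \gamma^{c-i}\bigl((1+\delta) - \gamma\bigr) > 0.
\]
Substituting this lower bound into the expected payment gives $p_i \ge p_0 (q_i - q_{i-1})/A + q_i(c_i - c_{i-1})/A$, whose coefficient of $p_0$ is positive; hence the bound is minimized at $p_0 = 0$ and yields $p_i \ge q_i(c_i - c_{i-1})/A$. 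Plugging in $c_i - c_{i-1} = (1-\gamma)(1 - \gamma^{i-1})/\gamma^{i-1}$ and $R_i = 1/\gamma^{i-1}$, a short simplification gives
\[
\pi_i \;\le\; R_i - \frac{(1-\gamma)(1-\gamma^{i-1})}{\gamma^{i-1}(1+\delta-\gamma)} \;=\; 1 + \frac{\delta\,(1-\gamma^{i-1})}{\gamma^{i-1}\,(1+\delta-\gamma)}.
\]

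The factor $(1-\gamma^{i-1})/\gamma^{i-1} = \gamma^{-(i-1)} - 1$ is strictly increasing in $i$, so $\max_i \pi_i$ is attained at $i = c$, giving $\pi_c \le 1 + \delta(1-\gamma^{c-1})/[\gamma^{c-1}(1+\delta-\gamma)]$. This expression is increasing in $\delta$; setting $\gamma = \epsilon$ and $\delta = \epsilon^c$ and bounding crudely yields $\pi_c \le 1 + \epsilon/(1-\epsilon) = 1/(1-\epsilon)$. The final step is the elementary inequality $1/(1-\epsilon) \le (1/c + \epsilon)(c - (c-1)\epsilon)$ for $c \ge 3$, $\epsilon \in (0, 1/4]$: subtracting $1$ from both sides and dividing by $\epsilon$ reduces it to $1/(1-\epsilon) \le (c-1)(1-\epsilon) + 1/c$, which holds because the left side is at most $4/3$ while the right side is at least $2(3/4) + 1/c \ge 3/2$ in the stated range. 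The main obstacle I anticipate is the algebraic simplification yielding the closed form for $\pi_i$ and checking carefully that no stronger lower bound is needed (i.e., that the single $a_{i-1}$ constraint suffices for the bound on $\pi_i$); once this is in hand, monotonicity in $i$ and the final inequality are routine bookkeeping.
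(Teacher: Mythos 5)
Your proof is correct and follows essentially the same path as the paper's: same construction with $\gamma = \epsilon$, same use of the single $\delta$-IC constraint of $a_i$ against its neighbor $a_{i-1}$ to lower-bound the expected payment, maximization over $i$ landing on $i = c$, and a final elementary inequality in $\epsilon$ and $c$. The one place you streamline is the handling of $p_0$: the paper splits into the cases $p_0 = 0$ and $p_0 > 0$ and in the latter case argues a term-by-term comparison of the coefficients of $p_0$ on the two sides of the IC constraint, whereas you fold this into one observation that the coefficient of $p_0$ in the resulting lower bound on the expected payment is positive, so $p_0 = 0$ is the worst case. This is the same idea presented more compactly. Your closed-form $\pi_i \le 1 + \delta(\gamma^{-(i-1)}-1)/(1+\delta-\gamma)$ and the subsequent bound $\pi_c \le 1/(1-\epsilon)$ is slightly looser than the paper's $1/(1+\gamma^c-\gamma)$ but still clears the target $(1/c+\epsilon)(c-(c-1)\epsilon)$, and your concluding arithmetic check (reducing to $1/(1-\epsilon) \le (c-1)(1-\epsilon) + 1/c$) matches in spirit the paper's Claim bounding $\tfrac{1}{1+\gamma^c-\gamma}\cdot\tfrac{1}{c-(c-1)\gamma}$.
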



For concreteness we describe the 2-action gap setting: The agent has $c=2$ actions, which can be thought of as ``effort'' and ``no effort''. Effort has cost $\frac{1}{\epsilon}-2+\epsilon$, and no effort has cost $0$. Without effort the item has probability $\epsilon$, and with effort the probability is $1$. The reward associated with the item is $\frac{1}{\epsilon}$. It is immediate to see that the maximum expected welfare (first-best) is $2-\epsilon$. In the proof of Proposition~\ref{pro:gap-setting-two-actions} we show that the best an $\epsilon^2$-IC contract can extract is $\approx 1$.  

\paragraph{Average actions and SAT settings}
The motivation for the next definition is that given a contract, for an action to be IC or $\delta$-IC it must yield higher expected utility for the agent in comparison to the ``average action''. Average actions are thus a useful tool for analyzing contracts. 

\begin{definition}[Average action]
	Given a principal-agent setting and a subset of actions, by the \emph{average action} we refer to a hypothetical action with the average of the subset's distributions, and average cost. (If a particular subset is not specified, the average is taken over all actions in the setting.)
\end{definition}

Another useful ingredient will be SAT settings defined as follows. 

\begin{definition}[SAT setting]
	\label{def:sat-setting}
	A \emph{SAT} principal-agent setting corresponds to a MAX-3SAT instance $\varphi$. If $\varphi$ has $n$ clauses and $m$ variables then the SAT setting has $n$ actions and $m$ items. Two conditions hold:
	(1) $\varphi$ is satisfiable if and only if there is an item set in the SAT setting that the average action leads to with zero probability;
	(2) If every assignment to $\varphi$ satisfies at most $7/8+\alpha$ of the clauses, then for every item set $S$ the average action leads to $S$ with probability at least $\frac{1-8\alpha}{2^m}$. 
\end{definition}

The following proposition (whose proof appears in Appendix~\ref{appx:sat-setting}) provides a reduction from MAX-3SAT instances to SAT settings.

\begin{proposition} 
	\label{pro:sat-setting}
	For every $\varphi$ the reduction in Algorithm \ref{alg:sat-setting-from-phi} runs in polynomial time on input $\varphi$ and returns a SAT setting corresponding to $\varphi$.
\end{proposition}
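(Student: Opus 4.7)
The plan is to construct, for each clause, an action whose product distribution is supported exactly on the outcomes (viewed as truth assignments) that \emph{violate} that clause. Concretely, identify each item $j\in M$ with a variable $x_j$ and each outcome $S\subseteq M$ with the truth assignment that sets $x_j=\text{true}$ iff $j\in S$. For a clause $C_k$ on variables $x_{j_1},x_{j_2},x_{j_3}$ with literals $\ell_{j_1},\ell_{j_2},\ell_{j_3}$, define action $a_k$ by the product distribution
\[
q_{k,j}=\begin{cases} 1/2 & \text{if } j\notin\{j_1,j_2,j_3\},\\ 0 & \text{if } j\in\{j_1,j_2,j_3\} \text{ and } \ell_j=x_j,\\ 1 & \text{if } j\in\{j_1,j_2,j_3\} \text{ and } \ell_j=\neg x_j.\end{cases}
\]
This forces the three clause-variables to take their unique violating values and leaves the remaining $m-3$ variables uniformly random. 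Hence $q_{k,S}>0$ iff $S$ fails to satisfy $C_k$, in which case $q_{k,S}=(1/2)^{m-3}=8/2^m$. Algorithm~\ref{alg:sat-setting-from-phi} outputs exactly this construction and clearly runs in time polynomial in the size of $\varphi$.

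Next I would verify the two conditions of Definition~\ref{def:sat-setting} by direct calculation with the average action $\bar{q}=\tfrac{1}{n}\sum_{k=1}^n q_k$. For condition~(1): $\bar q_S=0$ iff $q_{k,S}=0$ for every $k$, iff $S$ satisfies every clause of $\varphi$. Thus there is a zero-probability outcome for $\bar q$ precisely when $\varphi$ is satisfiable.

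For condition~(2), suppose every truth assignment satisfies at most a $\tfrac{7}{8}+\alpha$ fraction of the clauses, i.e.\ violates at least $(\tfrac{1}{8}-\alpha)n$ of them. Fix any outcome $S$ and let $K_S\subseteq[n]$ denote the indices of clauses violated by $S$, so $|K_S|\ge (\tfrac{1}{8}-\alpha)n$. By the construction, $q_{k,S}=8/2^m$ for every $k\in K_S$ and $q_{k,S}=0$ otherwise, giving
\[
\bar q_S=\frac{1}{n}\sum_{k\in K_S} q_{k,S}\ge \frac{1}{n}\cdot \Bigl(\tfrac{1}{8}-\alpha\Bigr)n\cdot \frac{8}{2^m}=\frac{1-8\alpha}{2^m},
\]
which is the desired lower bound. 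Since both conditions follow immediately from the explicit values of $q_{k,S}$, there is no real obstacle here; the only subtlety is simply choosing the actions so that each clause's unique violating pattern on its three variables is hard-wired into the product distribution, with the remaining coordinates independent and uniform so that the counting above yields the clean $8/2^m$ factor that matches the $1/8$ threshold in H\aa stad's hardness result.
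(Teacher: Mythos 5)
Your proof is correct and takes essentially the same approach as the paper: you identify outcomes with truth assignments, observe that $q_{k,S}>0$ iff $S$ violates clause $k$ (in which case $q_{k,S}=8/2^m$), and from there both conditions of Definition~\ref{def:sat-setting} follow by the same direct computation the paper gives. The only difference is expository—you make the ``unique violating pattern'' framing a bit more explicit than the paper's terse statement—but the construction and both verification steps are identical.
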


\begin{algorithm}[ht]
	\Input{A MAX-3SAT instance $\varphi$ with $n$ clauses and $m$ variables.}
	\Output{A principal-agent SAT setting (Definition \ref{def:sat-setting}) corresponding to $\varphi$.}
	\Begin{
		Given $\varphi$, construct a principal-agent setting in which every clause corresponds to an action with a product distribution, and for every variable there is a corresponding item. If variable $j$ appears in clause $i$ of $\varphi$ as a positive literal, then let item $j$'s probability in the $i$th product distribution be 0, and if it appears as a negative literal then let item $j$'s probability be 1. Set all other probabilities to be $\frac{1}{2}$. We set the costs of all actions and the rewards for all items to be $0$. 
	}
	\caption{SAT setting construction in polytime}
	\label{alg:sat-setting-from-phi}
\end{algorithm}

\subsection{The $\mathbf{c=2}$ case: Proof of Theorem \ref{thm:hardness-of-approx-2}}
\label{sub:two-actions}

In this section we present a polynomial-time reduction from MAX-3SAT to a product setting, which combines gap and SAT settings. The reduction appears in Algorithm~\ref{alg:sat-to-prod-2}.
We then analyze the guarantees of the reduction and use them to prove Theorem \ref{thm:hardness-of-approx-2}. Most of the analysis appears in Proposition~\ref{pro:hardness-reduction-2}, which shows that the reduction in Algorithm \ref{alg:sat-to-prod-2} is gap-preserving.
Some of the results are formulated in general terms so they can be reused in the next section (Section \ref{sub:c-actions}). 

Before turning to Proposition \ref{pro:hardness-reduction-2}, we begin with two simple observations about the product setting resulting from the reduction. 

\begin{algorithm}[ht]
\Input{A MAX-3SAT instance $\varphi$ with $n$ clauses and $m$ variables; a parameter $\epsilon\in\mathbb{R}_{\ge0}$.}
\Output{A principal-agent \emph{product setting} combining a \emph{SAT setting} and a \emph{gap setting}.}
\Begin{
Combine the SAT setting corresponding to $\varphi$ (attainable in polytime by Proposition \ref{pro:sat-setting}) with a poly-sized $(2,\epsilon,\epsilon^2)$-gap setting (exists by Proposition \ref{pro:gap-setting-two-actions}) to get the product setting, as follows:
\begin{minipage}{0.946\linewidth}%
\begin{itemize}
	\item The product setting has $n+1$ actions and $m+1$ items: $m$ ``SAT items'' correspond to the SAT setting items, and the last ``gap item'' corresponds to  the gap setting item. 
	
	\item The upper-left block of the product setting's $(n+1) \times(m+1)$ matrix of probabilities is the SAT setting's $n\times m$ matrix of probabilities. The entire lower-left $1\times m$ block is set to $\frac{1}{2}$. The entire upper-right $n \times 1$ block is set to the probability that action $a_1$ in the gap setting results in the item. The remaining lower-right $1 \times 1$ block is set to the probability that the last action in the gap setting results in the item.
	
	\item In the product setting, the rewards for the $m$ SAT items are set to 0, and the reward for the gap item is set as in the gap setting. 
	
	\item The costs of the first $n$ actions in the product setting are the cost of action $a_1$ in the gap setting; the cost of the last action in the product setting is the  cost of the last action in the gap setting.  
\end{itemize}
\end{minipage}
}
\caption{Polytime reduction from MAX-3SAT to principal-agent}
\label{alg:sat-to-prod-2}
\end{algorithm}

\begin{observation}
\label{obs:product-rewards}
Partition all actions of the product setting but the last one into blocks of $n$ actions each.%
\footnote{If the number of actions in the gap setting is $2$, there is a single such block.} 
Every action in the $i$th block has the same expected reward for the principal as action $a_i$ in the gap setting, and the last action in the product setting has the same expected reward as the last action in the gap setting. 
\end{observation}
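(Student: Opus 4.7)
The plan is to prove the observation by a direct application of linearity of expectation, exploiting the fact that the SAT items contribute zero to any action's expected reward in the product setting. First, I would recall that for any action $a$ in the product setting with probabilities $q_1, \dots, q_{m+1}$ over the $m$ SAT items and the single gap item, the expected reward to the principal is $\sum_{j=1}^{m+1} q_j \cdot r_j$ by additivity. Since Algorithm~\ref{alg:sat-to-prod-2} sets the reward of every SAT item to $0$ and the reward of the gap item to the gap setting's reward $r^{\mathrm{gap}}$, the expression for expected reward collapses to $q_{m+1} \cdot r^{\mathrm{gap}}$, depending only on the action's probability for the gap item.

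Next I would verify the block-probability assignment against the construction. By Algorithm~\ref{alg:sat-to-prod-2}, for the $i$th block of actions (i.e., those whose row of the $(n+1)\times(m+1)$ probability matrix lies in the upper-right $n \times 1$ block when $i=1$, and analogously for larger $i$ in the generalized reduction), the probability of the gap item is precisely the probability with which action $a_i$ of the gap setting produces the gap item. Multiplying this probability by $r^{\mathrm{gap}}$ recovers the expected reward $R_i^{\mathrm{gap}}$ of action $a_i$ in the gap setting. The same argument applied to the last action of the product setting, whose gap-item probability is inherited from the last action of the gap setting, yields the second part of the claim.

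The observation is essentially a bookkeeping statement, so the only real obstacle is making sure the indexing of blocks matches the construction in Algorithm~\ref{alg:sat-to-prod-2}, in particular for the generalization to arbitrary $c$ that the observation is phrased to accommodate (per the footnote). I would therefore state the identity $R = q_{m+1} r^{\mathrm{gap}}$ once, and then simply read off the required equality block by block from the probability matrix.
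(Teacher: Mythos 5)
Your proof is correct and is essentially the argument the paper implicitly relies on: the paper states Observation~\ref{obs:product-rewards} without a proof, treating it as immediate from the construction, and your write-up supplies exactly the right bookkeeping. The key identity $R = q_{m+1}\, r^{\mathrm{gap}}$ (expected reward depends only on the gap-item probability because all SAT-item rewards are set to zero) combined with reading the gap-item column of the probability matrix block by block is precisely what makes the claim hold in both Algorithm~\ref{alg:sat-to-prod-2} (one block) and Algorithm~\ref{alg:sat-to-prod} (blocks $1,\dots,c$). One small point worth making explicit for the general case: in Algorithm~\ref{alg:sat-to-prod} the $c$th block and the final row both receive the gap-item probability of $a_c$, so the last action and the $c$th block coincide in expected reward; this is harmless but is the kind of indexing detail you rightly flag as the only place care is needed.
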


\begin{corollary}
\label{cor:product-welfares}
The optimal expected welfares of the product and gap settings are the same, and are determined by their respective last actions.
\end{corollary}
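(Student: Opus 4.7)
The plan is to combine Observation~\ref{obs:product-rewards} (which aligns expected rewards) with the cost specification in Algorithm~\ref{alg:sat-to-prod-2} (which aligns costs), and then invoke the structural guarantee on the last action in the gap setting (Definition~\ref{def:gap-setting}).

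First I would verify, action by action, that welfares match. By Algorithm~\ref{alg:sat-to-prod-2}, every action in the $i$th block of the product setting is assigned the cost of action $a_i$ in the gap setting, and the last action of the product setting is assigned the cost of the last gap action. Combined with Observation~\ref{obs:product-rewards}, this gives that for each action $a'$ in the $i$th block of the product setting, its expected welfare $R_{a'} - c_{a'}$ equals $R_i^{\text{gap}} - c_i^{\text{gap}}$, the welfare of the corresponding $i$th gap action; and likewise the last product action has welfare equal to that of the last gap action. In particular, the set of welfare values realized in the product setting coincides with the set of welfare values realized in the gap setting.

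Next I would use Definition~\ref{def:gap-setting}, which guarantees that the last action of a gap setting attains the maximum expected welfare among all its actions. Since the welfare values of the two settings coincide and the last action of the product setting inherits the welfare of the last gap action, this last product action attains the maximum expected welfare in the product setting as well. Hence the optimal (first-best) expected welfares of the two settings are equal and are each witnessed by the respective last action.

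I do not anticipate a significant obstacle: the corollary is essentially a bookkeeping consequence of Observation~\ref{obs:product-rewards} and the explicit cost assignment in the reduction, together with the structural property of gap settings. The only small care needed is to ensure that the cost bookkeeping in Algorithm~\ref{alg:sat-to-prod-2} lines up block-by-block with the gap-action labeling used in Observation~\ref{obs:product-rewards}, which is immediate from the construction (in the $c=2$ case there is a single block of $n$ actions with the cost of $a_1$, plus the last action with the cost of the last gap action).
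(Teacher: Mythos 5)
Your proposal is correct and follows the same (implicit) reasoning the paper uses: the paper presents this corollary as an immediate consequence of Observation~\ref{obs:product-rewards}, and your write-up fills in exactly the intended bookkeeping — matching rewards (Observation~\ref{obs:product-rewards}), matching costs (the block-wise cost assignment in Algorithm~\ref{alg:sat-to-prod-2}/\ref{alg:sat-to-prod}), and the structural guarantee from Definition~\ref{def:gap-setting} that the last gap action is welfare-maximizing.
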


\begin{proposition}[Gap preservation by Algorithm~\ref{alg:sat-to-prod-2}]
	\label{pro:hardness-reduction-2} 
	Let $\varphi$ be a MAX-3SAT instance for which either there is a satisfying assignment, or every assignment satisfies at most $7/8+\alpha$ of the clauses for $\alpha\le (0.05)^2$. 
	Let $\Delta\le (0.05)^2$.
	Consider the product setting resulting from the reduction in Algorithm~\ref{alg:sat-to-prod-2} run on input $\varphi,\epsilon=3\alpha^{1/2}+2\Delta^{1/2}\le \frac{1}{4}$. Then:
	\begin{enumerate}
		\item If $\varphi$ has a satisfying assignment, the product setting has an IC contract that extracts full expected welfare;
		\item If every assignment to $\varphi$ satisfies at most $7/8+\alpha$ of the clauses, the optimal $\Delta$-IC contract can extract no more than $\frac{1}{2}+\epsilon$ of the expected welfare.
	\end{enumerate}
\end{proposition}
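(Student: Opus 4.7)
The plan is to prove the two directions separately, leveraging Definition~\ref{def:sat-setting}, Definition~\ref{def:gap-setting}, Proposition~\ref{pro:gap-setting-two-actions}, and Corollary~\ref{cor:product-welfares}. Throughout I write $\epsilon$ for the parameter passed to Algorithm~\ref{alg:sat-to-prod-2}, so that the encompassed $(2,\epsilon,\epsilon^2)$-gap setting has gap-item reward $1/\epsilon$, the first gap action with cost $0$ and gap-item probability $\epsilon$, and the last gap action with cost $c_{n+1}=1/\epsilon-2+\epsilon$ and gap-item probability $1$. By Corollary~\ref{cor:product-welfares} the optimal welfare is $W=2-\epsilon$ and is attained by the last action of the product setting, while each of the first $n$ SAT actions has expected reward $1$ and cost $0$. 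I would work throughout with the unnormalized $\Delta$-IC notion introduced earlier in the section, i.e.\ $(1+\Delta)p_{i^*}-c_{i^*}\ge p_{i'}-c_{i'}$ for every $i'\ne i^*$.

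For the satisfiable case, Definition~\ref{def:sat-setting}(1) gives a SAT-item subset $S^\star$ to which the average of the first $n$ SAT actions assigns probability zero; since the individual probabilities are non-negative, each SAT action also assigns probability zero to $S^\star$. I would then set $p_{T^\star}=P:=2^m c_{n+1}$ on the single outcome $T^\star$ consisting of exactly $S^\star$ together with the gap item, and $p_T=0$ on every other outcome. The last action's probability for $T^\star$ is $(1/2)^m\cdot 1=1/2^m$, so its expected payment is exactly $c_{n+1}$, yielding principal payoff $R_{n+1}-c_{n+1}=W$; each SAT action's probability for $T^\star$ is zero, so its expected payment vanishes. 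Every IC inequality reduces to $0\ge 0$; tie-breaking in the principal's favor selects the last action, and full welfare is extracted.

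For the unsatisfiable case, fix any $\Delta$-IC contract and split on the $\Delta$-incentivized action. If this action is one of the first $n$ SAT actions, the payoff is at most its expected reward $1$, which is $\le(1/2+\epsilon)(2-\epsilon)$ for $\epsilon\le 1/4$. The substantive subcase is when $a_{n+1}$ is $\Delta$-incentivized. Averaging the constraints $(1+\Delta)p_{n+1}-c_{n+1}\ge p_i$ (using $c_i=0$) over $i\in[n]$ gives $(1+\Delta)p_{n+1}-c_{n+1}\ge\bar p$, where $\bar p$ is the expected payment under the average of the first $n$ actions. Keeping only the non-negative contributions to $\bar p$ from outcomes containing the gap item, applying Definition~\ref{def:sat-setting}(2)---the average SAT distribution assigns every SAT-item subset probability at least $(1-8\alpha)/2^m$---and multiplying by the gap-item factor $\epsilon$ for the average action versus $1$ for the last action, I would obtain $\bar p\ge\epsilon(1-8\alpha)\,p_{n+1}$. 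Combining yields $p_{n+1}\ge c_{n+1}/(1+\Delta-\epsilon(1-8\alpha))$, and hence a payoff upper bound of $R_{n+1}-c_{n+1}/(1+\Delta-\epsilon(1-8\alpha))$.

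The main technical obstacle is to verify that this payoff upper bound is at most $(1/2+\epsilon)W$. Substituting $R_{n+1}=1/\epsilon$ and $c_{n+1}=1/\epsilon-2+\epsilon$ reduces the claim to an explicit algebraic inequality in $\epsilon,\alpha,\Delta$. The hypothesis $\epsilon=3\alpha^{1/2}+2\Delta^{1/2}$ is calibrated so that $\epsilon^2\ge 9\alpha+4\Delta$ (by expansion) provides enough slack to dominate the $8\alpha\epsilon$ and $\Delta/\epsilon$ terms that surface after cross-multiplication, while the bounds $\epsilon\le 1/4$ and $\alpha,\Delta\le(1/20)^2$ ensure the denominator $1+\Delta-\epsilon(1-8\alpha)$ remains bounded away from zero. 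Once this inequality is established, the gap-preservation claim of the proposition follows.
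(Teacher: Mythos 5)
Your proof is correct and takes a genuinely different route from the paper's. The paper's proof of the unsatisfiable direction is modular: it constructs from the product-setting contract $p$ a contract $p'$ for the standalone $(2,\epsilon,\epsilon^2)$-gap setting via $p'(S')=\tfrac{1-8\alpha}{2^m}\sum_{S}p(S\cup S')$, shows (Claim~\ref{cla:last-action}) that $p'$ $\delta$-incentivizes the last gap action for $\delta=\tfrac{8\alpha+\Delta}{1-8\alpha}\le\epsilon^2$, argues the gap-setting payoff of $p'$ dominates the product-setting payoff of $p$, and then invokes the gap-setting property (Proposition~\ref{pro:gap-setting-two-actions}) as a black box. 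You instead analyze the product setting directly: you average the $\Delta$-IC constraints against the $n$ SAT actions, bound the averaged payment $\bar p\ge\epsilon(1-8\alpha)p_{n+1}$ using Definition~\ref{def:sat-setting}(2), and derive $p_{n+1}\ge c_{n+1}/(1+\Delta-\epsilon(1-8\alpha))$, which you then plug in. Both approaches ultimately reduce to the same style of algebra; indeed, unwinding the paper's Claim~\ref{cla:last-action} and Proposition~\ref{pro:gap-setting-two-actions} gives essentially your inequality (with an extra factor of $1-8\alpha$ in the paper's payment lower bound arising from the scaling in $p'$, so your direct bound is in fact marginally tighter). What the paper's detour buys is reuse: the exact same construction and Claim~\ref{cla:last-action} carry over verbatim to the $c$-action reduction in Section~\ref{sub:c-actions}, whereas your direct computation would need to be redone for each $c$. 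I checked the algebraic verification you deferred: with $D=1-\epsilon+\Delta+8\alpha\epsilon$ the bound reduces to $8\alpha(1-\epsilon)+\Delta(1/\epsilon-1)\le 3\epsilon/2-5\epsilon^2/2+\epsilon^3$, and using $\alpha\le\epsilon^2/9$, $\Delta\le\epsilon^2/4$ (so $\Delta/\epsilon\le\epsilon/4$) and $\epsilon\le 1/4$ one gets $8\epsilon^2/9+\epsilon/4\le 7\epsilon/8\le\text{RHS}$, so the claim goes through.
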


\begin{proof}
First, if $\varphi$ has a satisfying assignment, then there is a subset of SAT items that has zero probability according to every one of the first $n$ actions. Consider the outcome $S^*$ combining this subset together with the gap item. We construct a full-welfare extracting contract: the contract's payment for $S^*$ is the cost of the last action in the product setting multiplied by $2^m$ (since the probability of $S^*$ according to the last action is $1/2^m$), and all other payments are set to zero. It is not hard to see that the resulting contract makes the agent indifferent among all actions, so by tie-breaking in favor of the principal, the principal receives the full expected welfare as her payoff. 

Now consider the case that every assignment to $\varphi$ satisfies at most $7/8+\alpha$ of the clauses, and assume for contradiction that there is a $\Delta$-IC contract $p$ for the product setting that extracts more than $\frac{1}{2}+\epsilon$ of the expected welfare. 
We derive from $p$ a $\delta$-IC contract $p'$ for the $(2,\epsilon,\epsilon^2)$-gap setting where $\delta\le \epsilon^2$, which extracts more than $\frac{1}{2} + \epsilon$ of the expected welfare. This is a contradiction to the properties of the gap setting (Definition \ref{def:unstruct-gap-setting}). 

It remains to specify and analyze contract $p'$ : 
For brevity we denote the singleton containing the gap item by $M'$, and define 
\begin{eqnarray}
&p'(S') = \frac{1-8\alpha}{2^m} \sum_{S\subseteq [m]} p(S\cup S') & \forall S'\subseteq M',\label{eq:p2p}
\end{eqnarray}
where $S'$ is either the singleton containing the gap item or the empty set.
The starting point of the analysis is the observation that to extract $>\frac{1}{2}+\epsilon$ of the expected welfare in the product setting, contract $p$ must $\Delta$-incentivize the last action (this follows since the expected rewards and costs of the actions are as in the gap setting by Observation~\ref{obs:product-rewards}, and so the same argument as in the proof of Proposition \ref{pro:gap-setting-two-actions} holds). 

Claim~\ref{cla:last-action} below establishes that if contract $p$ $\Delta$-incentivizes the last action in the product setting, then contract $p'$ $\delta$-incentivizes the last action in the gap setting for $\delta=\frac{8\alpha+\Delta}{1-8\alpha}$. So indeed
\begin{align*}
\delta &=\frac{8\alpha}{1-8\alpha}+\frac{\Delta}{1-8\alpha}\\
&\le 9\alpha+4\Delta\\
&=(3\alpha^{1/2})^2+(2\Delta^{1/2})^2\\
&\le (3\alpha^{1/2} + 2\Delta^{1/2})^2 \quad = \epsilon^2,
\end{align*}
using that $\alpha,\Delta\le (0.05)^2$ for the first inequality.

Now observe that the expected payoff to the principal from contract $p'$ $\delta$-incen-tivizing the last gap setting action is at least that of contract $p$ $\Delta$-incentivizing the last product setting action: the payments of $p'$ as defined in \eqref{eq:p2p} are the average payments of $p$ lowered by a factor of $(1-8\epsilon)$, and the expected rewards in the two settings are the same (Observation~\ref{obs:product-rewards}). 
The expected welfares in the two settings are also equal (Corollary \ref{cor:product-welfares}). We conclude that like contract $p$ in the product setting, contract $p'$ guarantees extraction of $>\frac{1}{2} + \epsilon$ of the expected welfare in the gap setting.
This leads to a contradiction and completes the proof of Proposition \ref{pro:hardness-reduction-2} (up to Claim \ref{cla:last-action} proved below).
\end{proof}

The next claim is formulated in general terms so that it can also be used in Section \ref{sub:c-actions}. It references the contract $p'$ defined in \eqref{eq:p2p}.

\begin{claim}
	\label{cla:last-action}
	Assume every assignment to the MAX-3SAT instance $\varphi$ satisfies at most $7/8+\alpha$ of its clauses where $\alpha<\frac{1}{8}$, and consider the product and gap settings returned by the reduction in Algorithm \ref{alg:sat-to-prod-2} (resp., Algorithm \ref{alg:sat-to-prod}). If in the product setting the last action is $\Delta$-incentivized by contract $p$, then in the gap setting the last action is $\delta$-incentivized by contract $p'$ for $\delta=\frac{8\alpha+\Delta}{1-8\alpha}$. 
\end{claim}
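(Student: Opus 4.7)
The plan is to establish the target inequality $(1+\delta)p'_c - c^{\text{gap}}_c \ge p'_k - c^{\text{gap}}_k$ (per the $\delta$-IC definition in~\eqref{eq:delta-IC-orig}) for each $k\in\{1,\dots,c-1\}$ by averaging the $\Delta$-IC constraints of $p$ over the $n$ product actions in block $k$---those sharing gap probability $q_k^{\text{gap}}$ and cost $c_k^{\text{gap}}$---and then translating this averaged inequality from the product setting to the gap setting through the definition of $p'$ in~\eqref{eq:p2p}.

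I will track three quantities: $P_c$, the expected payment of $p$ under the last product action; $P_k$, the expected payment of $p$ under the hypothetical \emph{average action} of block $k$ (with gap probability $q_k^{\text{gap}}$ and SAT marginals equal to the SAT setting's average distribution $\bar q$); and $Q_k:=p'_k$. Three relations will drive the calculation. First, since the last product action has uniform SAT marginals and gap probability $1$, direct expansion of~\eqref{eq:p2p} yields $Q_c = (1-8\alpha)P_c$ exactly. Second, Definition~\ref{def:sat-setting}(2) ensures $\Pr[S\mid \bar q]\ge \tfrac{1-8\alpha}{2^m}$ for every $S\subseteq[m]$; combined with $p\ge 0$, this gives $P_k\ge Q_k$. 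Third, since all $n$ actions in block $k$ share cost $c_k^{\text{gap}}$, linearity of expectation turns the average (over $i$ in block $k$) of the $\Delta$-IC constraints $(1+\Delta)P_c - c_c^{\text{gap}}\ge p_i - c_k^{\text{gap}}$ into $(1+\Delta)P_c - c_c^{\text{gap}}\ge P_k - c_k^{\text{gap}}$.

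Chaining the last two relations gives $(1+\Delta)P_c - c_c^{\text{gap}} \ge Q_k - c_k^{\text{gap}}$, and substituting $P_c = Q_c/(1-8\alpha)$ reduces the target to $(1+\delta)(1-8\alpha)\ge 1+\Delta$, which is met with equality by $\delta = (8\alpha+\Delta)/(1-8\alpha)$, as claimed.

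The step I expect to require the most care is verifying that the ``average action of block $k$'' really has SAT marginals $\bar q$: this uses the structural fact from the reductions (Algorithms~\ref{alg:sat-to-prod-2} and~\ref{alg:sat-to-prod}) that the $n$ product actions in each block have SAT marginals copied verbatim from the $n$ SAT-setting actions, while sharing their block's gap probability and cost. Once this bookkeeping is pinned down, Definition~\ref{def:sat-setting}(2) and the definition of $p'$ combine with routine algebra to deliver the stated $\delta$.
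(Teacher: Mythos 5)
Your proof is correct and follows essentially the same route as the paper: expand the expected payments of $p$ for the last product action and for each block's hypothetical average action, relate them to the gap-setting payments of $p'$ via~\eqref{eq:p2p} and the $\tfrac{1-8\alpha}{2^m}$ lower bound from Definition~\ref{def:sat-setting}, then average the $\Delta$-IC constraints of $p$ over each block and chain. One small advantage of your bookkeeping is that you carry the block cost $c_k^{\text{gap}}$ through the chain, which is what the $\delta$-IC definition in~\eqref{eq:delta-IC-orig} actually requires; the paper's intermediate inequality~\eqref{eq:delta-IC} omits the $-c_i^{\text{gap}}$ term on its right-hand side (the phrase ``which has cost zero'' holds literally only for block $1$ of Algorithm~\ref{alg:sat-to-prod}), and your version is the tighter, correctly-derivable form.
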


\begin{proof}
Let $g_i$ denote the distribution of action $a_i$ in the gap setting and let $c$ be the number of actions in this setting. 
In the product setting, by construction its last action assigns probability $\frac{g_c(S')}{2^m}$ to every set $S\cup S'$ such that $S$ contains SAT items and $S'\subseteq M'$.
Thus the expected payment for the last action given contract~$p$ is
\begin{align}
&\sum_{S\subseteq [m]} \sum_{S'\subseteq M'} \frac{g_c(S')}{2^m} p(S\cup S') = \frac{1}{1-8\alpha}\sum_{S'\subseteq M'} g_c(S')p'(S'), \label{eq:pay-last}
\end{align}
where the equality follows from the definition of $p'$ in \eqref{eq:p2p}. Note that the resulting expression in \eqref{eq:pay-last} is precisely the expected payment for the last action in the gap setting given contract $p'$, multiplied by factor $1/(1-8\alpha)$. 

Similarly, for every $i\in c$ consider the average action over the $i$th block of $n$ actions in the product setting.%
\footnote{If $c=2$ there is a single such block.} 
Again by construction, the probability this $i$th average action
assigns to $S\cup S'$ is $\ge \frac{g_i(S')(1-8\alpha)}{2^m}$, where we use that the average action of the SAT setting has probability $\ge \frac{1-8\alpha}{2^m}$ for $S$ (Definition \ref{def:sat-setting}). Thus the expected payment for the $i$th average action given contract $p$ is at least
\begin{align}
&\sum_{S\subseteq [m]}\sum_{S'\subseteq M'} \frac{g_i(S')(1-8\alpha)}{2^m} p(S\cup S') = \sum_{S'\subseteq M'} g_i(S')p'(S')& \forall i\in[c],\label{eq:pay-first}
\end{align}
where again the equality follows from \eqref{eq:p2p}. Note that the resulting expression in \eqref{eq:pay-first} is precisely the expected payment for action $a_i$ in the gap setting given contract $p'$.

We now use the assumption that in the product setting, contract $p$ $\Delta$-incentivizes the last action. This means the agent $\Delta$-prefers the last action to the $i$th average action, which has cost zero. Combining \eqref{eq:pay-last} and \eqref{eq:pay-first} we get
\begin{align}
&\frac{1+\Delta}{1-8\alpha}\sum_{S'\subseteq M'} g_c(S')p'(S') - \mathcal{C} \ge \sum_{S'\subseteq M'} g_i(S')p'(S')& \forall i\in[c], \label{eq:delta-IC}
\end{align}
where $\mathcal{C}$ denotes the cost of the last action in the product and gap settings. By definition of $\delta$-IC, Inequality \eqref{eq:delta-IC} immediately implies that in the gap setting, the last action is $\delta$-IC given contract $p'$ where $\delta= \frac{8\alpha+\Delta}{1-8\alpha}$, thus completing the proof of Claim \ref{cla:last-action}.
\end{proof}

We can now use Proposition \ref{pro:hardness-reduction-2} to prove Theorem~\ref{thm:hardness-of-approx-2}. 

\begin{proof}[Proof of Theorem~\ref{thm:hardness-of-approx-2}]
	Recall that $\frac{(\epsilon-2\Delta^{1/2})^2}{9}$ is a constant $\le (0.05)^2$.
	Assume a polynomial-time algorithm for determining whether a principal-agent setting has a (fully-IC) contract that extracts the full expected welfare, or whether no $\Delta$-IC contract can extract more than $\frac{1}{2}+\epsilon$. Then given a MAX-3SAT instance $\varphi$ for which either there is a satisfying assignment or every assignment satisfies at most $\frac{7}{8}+\frac{(\epsilon-2\Delta^{1/2})^2}{9}$ of the clauses, by Proposition~\ref{pro:hardness-reduction-2} the product setting (constructed in polynomial time) either has a full-welfare extracting contract or has no $\Delta$-IC contract that can extract more than $\frac{1}{2}+\epsilon$. Since the algorithm can determine among these two cases, it can solve the MAX-3SAT instance $\varphi$. But by \cite{Hastad01} and since $\frac{(\epsilon-2\Delta^{1/2})^2}{9}$ is a constant, we know that there is no polynomial-time algorithm for solving such MAX-3SAT instances unless $P=NP$. This completes the proof of Theorem \ref{thm:hardness-of-approx-2}.
\end{proof} 

\subsection{The general case: Proof of Theorem \ref{thm:hardness-of-approx}}
\label{sub:c-actions}

In this section we formulate and analyze the guarantees of the reduction in Algorithm \ref{alg:sat-to-prod}.

\begin{algorithm}[ht]
	\Input{A MAX-3SAT instance $\varphi$ with $n$ clauses and $m$ variables; parameters $\epsilon\in\mathbb{R}_{\ge 0}$ and $c\in\mathbb{Z}_{>0}$ where $c\ge 3$.}
	\Output{A principal-agent \emph{product setting} combining copies of a \emph{SAT setting} and a \emph{gap setting}.}
	\Begin{
		Combine multiple copies of the SAT setting corresponding to $\varphi$ (attainable in polytime by Proposition~\ref{pro:sat-setting}) with a poly-sized $(c,\epsilon,\epsilon^c)$-gap setting (exists by Proposition \ref{pro:gap-setting}) to get the product setting, as follows:
		\begin{minipage}{0.946\linewidth}%
		\begin{itemize}
			\item The product setting has $cn+1$ actions and $m+1$ items: $m$ ``SAT items'' correspond to the SAT setting items, and the last ``gap item'' corresponds to the gap setting item. 
			
			\item For every $i\in [c]$, consider the $i$th block of $n$ rows of the product setting's $(cn+1)\times (m+1)$ matrix of probabilities. The $i$th block consists of row $(i-1)\cdot n+1$ to row $i\cdot n$ and forms a submatrix of size $n\times (m+1)$. The first $m$ columns of the sub-matrix are set to a copy of the SAT setting's $n\times m$ matrix of probabilities, and the entire last column is set to the probability that action $a_i$ in the gap setting results in the item. 		
			Finally, the first $m$ entries of the last row of the product setting's matrix (i.e., row $cn+1$) are set to $\frac{1}{2}$, and the last entry (the lower-right corner of the matrix) is set to the probability that the last action in the gap setting results in the item. 
			
			\item In the product setting, the rewards for the $m$ SAT items are set to 0, and the reward for the gap item is set as in the gap setting. 
			
			\item For every $i\in [c]$, the costs of the $n$ actions in block $i$ are the cost of action $a_i$ in the gap setting; the cost of the last action in the product setting is the cost of the last action in the gap setting.  
		\end{itemize}
		\end{minipage}
	}
	\caption{Generalized polytime reduction from MAX-3SAT to principal-agent}
	\label{alg:sat-to-prod}
\end{algorithm}

\begin{proposition}[Gap preservation by Algorithm \ref{alg:sat-to-prod}]
	\label{pro:hardness-reduction}
	Let $c\in\mathbb{Z},c\ge 3$. 
	Let $\varphi$ be a MAX-3SAT instance for which either there is a satisfying assignment, or every assignment satisfies at most $7/8+\alpha$ of the clauses for $\alpha\le (0.05)^c$. 
	Let $\Delta\le (0.05)^c$.
	Consider the product setting resulting from the reduction in Algorithm~\ref{alg:sat-to-prod} run on input $\varphi,c,\epsilon=3\alpha^{1/c}+2\Delta^{1/c} \le \frac{1}{4}$. Then:
	\begin{enumerate}
		\item If $\varphi$ has a satisfying assignment, the product setting has an IC contract that extracts full expected welfare;
		\item If every assignment to $\varphi$ satisfies at most $7/8+\alpha$ of the clauses, the optimal $\Delta$-IC contract can extract no more than $\frac{1}{c}+\epsilon$ of the expected welfare.
	\end{enumerate}
\end{proposition}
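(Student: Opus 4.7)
The plan is to carry out a two-part argument mirroring Proposition~\ref{pro:hardness-reduction-2}, with one additional case needed to handle the fact that in the $c$-action product setting produced by Algorithm~\ref{alg:sat-to-prod}, a $\Delta$-IC contract extracting more than $\frac{1}{c}+\epsilon$ of the expected welfare may incentivize any of the $cn+1$ actions, not only the last one.

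For the satisfiable case, I will reuse the $c=2$ argument essentially verbatim. Using property~(1) of the SAT setting, I pick an item set $S^\star_\mathrm{SAT}\subseteq[m]$ that every SAT action assigns probability zero, combine it with the gap item $g$ to form $S^\star = S^\star_\mathrm{SAT}\cup\{g\}$ in the product setting, and define the contract paying $c_c\cdot 2^m$ on $S^\star$ and zero elsewhere (where $c_c$ denotes the cost of the last gap action). Since the last action of the product setting has uniform SAT distribution and gap probability~$1$, it assigns probability $1/2^m$ to $S^\star$ while all other actions assign probability zero; the last action therefore has expected utility exactly zero, every other action has expected utility at most zero, and tie-breaking selects the last action, yielding the full expected welfare.

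For the non-satisfiable case, I will assume for contradiction that some $\Delta$-IC contract $p$ in the product setting extracts more than $\frac{1}{c}+\epsilon$ of the expected welfare and let $a$ be the action $p$ $\Delta$-incentivizes. I will construct from $p$ a $\delta$-IC contract $p'$ on the $(c,\epsilon,\epsilon^c)$-gap setting (which exists by Proposition~\ref{pro:gap-setting}) with $\delta\le\epsilon^c$ whose expected payoff is at least that of $p$; since the two settings share the same optimal welfare by Corollary~\ref{cor:product-welfares}, this produces a $\delta$-IC gap-setting contract extracting more than $\frac{1}{c}+\epsilon$ of the welfare, contradicting Definition~\ref{def:unstruct-gap-setting}. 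I split into two cases. In Case~A, where $a$ is the last action (row $cn+1$), I define $p'$ by the uniform-averaging formula~\eqref{eq:p2p} and apply Claim~\ref{cla:last-action} to obtain $\delta$-IC for $a_c$ with $\delta=(8\alpha+\Delta)/(1-8\alpha)$; payoff preservation follows from $p_{\text{last}} = p'_c/(1-8\alpha)\ge p'_c$. In Case~B, where $a=a^{(j)}_{i^\star}$ is the $j$-th action of block $i^\star\in[c]$, I use an \emph{action-specific} averaging $p'(S') = \sum_{S\subseteq[m]} p(S\cup S')\, d^{(j)}(S)$, where $d^{(j)}$ is the SAT distribution of the $j$-th SAT action. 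A direct calculation gives $p'_i = p_{a^{(j)}_i}$ for every $i\in[c]$, so the $\Delta$-IC inequalities of $p$ pairing $a^{(j)}_{i^\star}$ with each $a^{(j)}_i$ (a genuine action of the product setting for every $i$) translate immediately into $(1+\Delta)p'_{i^\star} - c_{i^\star} \ge p'_i - c_i$ for all $i\in[c]$; hence $p'$ is $\Delta$-IC (and since $\Delta\le 2^{-c}\epsilon^c$, also $\epsilon^c$-IC) for gap action $a_{i^\star}$, and its payoff $R_{i^\star}-p'_{i^\star}$ equals the product-setting payoff $R_a-p_a$ via $p_a=p'_{i^\star}$ and Observation~\ref{obs:product-rewards}. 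A routine calculation using $\alpha,\Delta\le(0.05)^c$, $c\ge3$, and the super-additivity $(x+y)^c\ge x^c+y^c$ verifies that $\delta\le 3^c\alpha + 2^c\Delta \le \epsilon^c$ in both cases, closing the contradiction.

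The main obstacle to the generalization beyond $c=2$ is Case~B. The naive extension of the $c=2$ argument, using the uniform-averaging formula~\eqref{eq:p2p} to define $p'$ in all cases, fails here because $p_a$ depends on the specific SAT distribution $d^{(j)}$ of $a$ while $p'_{i^\star}$ uses uniform averaging, so the two are not directly comparable and payoff preservation is lost. The crucial idea is to replace uniform averaging by $d^{(j)}$-weighted averaging matched to the incentivized action: each $p'_i$ then coincides exactly with the expected payment of the genuine product-setting action $a^{(j)}_i$, which makes both the $\Delta$-IC transfer and the payoff equality immediate and obviates any use of the SAT-setting's lower-bound property (which is still needed in Case~A via Claim~\ref{cla:last-action} but not in Case~B).
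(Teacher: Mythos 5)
Your proof is correct and follows essentially the same structure as the paper's: the satisfiable case is handled identically, and the unsatisfiable case splits into the same two sub-cases, with uniform averaging plus Claim~\ref{cla:last-action} when the last action is incentivized and the action-specific $s_k$-weighted averaging (your $d^{(j)}$) when some block action $a_{i^*k}$ is incentivized. Your added remark explaining why uniform averaging fails for block actions (payoff preservation breaks when the incentivized action's SAT marginal differs from the uniform one) is a useful gloss on a step the paper presents without motivation, but the underlying construction and the $\delta \le \epsilon^c$ verification coincide with the paper's.
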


\begin{proof}
	First, if $\varphi$ has a satisfying assignment, then there is a subset of SAT items that has zero probability according to every one of the actions in the product setting except for the last action, and so we can construct a full-welfare extracting contract as in the proof of Proposition \ref{pro:hardness-reduction-2}. From now on consider the case that every assignment to $\varphi$ satisfies at most $7/8+\alpha$ of the clauses, and assume for contradiction there is a $\Delta$-IC contract $p$ for the product setting that extracts more than $\frac{1}{c}+\epsilon$ of the expected welfare. 
	
	Consider the case that $p$ $\Delta$-incentivizes the last action in the product setting. Then we can derive from it a $\delta$-IC contract $p'$ for the $(c,\epsilon,\epsilon^c)$-gap setting where $\delta\le \epsilon^c$, which extracts more than $\frac{1}{c} + \epsilon$ of the expected welfare. 
	This is a contradiction to the properties of the gap setting (Definition \ref{def:unstruct-gap-setting}). The construction of $p'$ and its analysis are as in the proof of Proposition \ref{pro:hardness-reduction-2} (where Equation \eqref{eq:p2p} defines $p'$), and so are omitted here except for the following verification: we must verify that indeed $\delta\le \epsilon^c$. We know from Claim \ref{cla:last-action} that $\delta=\frac{8\alpha+\Delta}{1-8\alpha}$. As in the proof of Proposition \ref{pro:hardness-reduction-2} this is $\le 9\alpha+4\Delta$, and it is not hard to see that
	$$
	9\alpha+4\Delta \le 
	(3\alpha^{1/c})^c+(2\Delta^{1/c})^c \le
	(3\alpha^{1/c}+2\Delta^{1/c})^c =
	\epsilon^c,
	$$
	where the first inequality uses that $c\ge 3$. 

	In the remaining case, $p$ $\Delta$-incentivizes an action $a_{i^*k}$ in the product setting which is the $k$th action in block $i^*\in[c]$ (recall each block has $n$ actions). We derive from $p$ a contract $p'_k$ (depending on $k$) for the gap setting that $\Delta$-incentivizes $a_{i^*}$ at the same expected payment. 
	As in the proof of Proposition \ref{pro:hardness-reduction}, this means that $p'_k$ extracts $>\frac{1}{c}+\epsilon$ of the expected welfare in the gap setting.
	Since $\Delta\le \delta=\frac{8\alpha+\Delta}{1-8\alpha}$ it follows from the argument above that $\Delta\le \epsilon^c$, and so we have reached a contradiction to the properties of the gap setting (Definition \ref{def:unstruct-gap-setting}).
	
	We define $p'_k$ as follows: Let $s_k$ denote the distribution of action $a_k$ in the SAT setting. For every subset $S'\subseteq M'$ of gap items,
	\begin{align}
	&p'_k(S') = \sum_{S\subseteq [m]} p(S\cup S') s_k(S)&& \forall S'\subseteq M',\label{eq:p2p-more-actions}
	\end{align}
	where $S'$ is either the singleton containing the gap item or the empty set.
	
	For the analysis, let $g_i$ denote the distribution of action $a_i$ in the gap setting. 
	In the product setting, for every $i\in[c],k\le n$ the expected payment for action $a_{ik}$ by contract $p$ is 
	\begin{equation}
	\sum_{S\in [m]}	\sum_{S'\subseteq M'} s_k(S)g_i(S') p(S\cup S').\label{eq:p-more-actions-analysis}
	\end{equation}
	In the gap setting, the expected payment for $a_{i}$ by contract $p'_k$ is
	$\sum_{S'\subseteq M'}g_i(S')p'(S')$, and by definition of $p'_k$ in~\eqref{eq:p2p-more-actions} this coincides with the expected payment in~\eqref{eq:p-more-actions-analysis}. We know that contract $p$ $\Delta$-incentivizes $a_{i^*k}$ in the product setting, in particular against any action $a_{ik}$ where $i\in [c]\setminus\{i^*\}$ (i.e., against actions in the same position $k$ but in different blocks). This implies that contract $p'_k$ $\Delta$-incentivizes $a_{i^*}$ in the gap setting against any action $a_i$, completing the proof.
\end{proof}

We can now use Proposition \ref{pro:hardness-reduction} to prove Theorem \ref{thm:hardness-of-approx}. The proof is identical to that of Theorem~\ref{thm:hardness-of-approx-2} and so is omitted here.

\section{Approximation guarantees}
\label{sec:approx}

In this section we show that for any constant $\delta$ there is a simple, namely linear, $\delta$-IC contract that extracts as expected payoff for the principal a $c_\delta$-fraction of the optimal welfare, where $c_\delta$ is a constant that depends only on $\delta$. Recall that a linear contract is defined by a parameter $\alpha \in [0,1]$, and pays the agent $p_S = \alpha \sum_{j \in S} r_j$ for every outcome $S \subseteq M$.

\begin{theorem}\label{thm:delta-ic-approx}
Consider a principal-agent setting with $n$ actions. For every $\gamma \in (0,1)$ and every $\delta > 0$ there is a $\delta$-IC linear contract with expected payoff $ALG$ where
\[
\text{ALG} \geq \left((1-\gamma) \frac{1}{\lceil \log_{1+\delta}(\frac{1}{\gamma})\rceil + 1}\right) \max_{i \in [n]} \{R_i - c_i\}.
\]
\end{theorem}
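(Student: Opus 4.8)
\textbf{Plan.} The plan is to show that the optimal welfare $W := \max_i \{R_i - c_i\}$ can be charged to a small number (roughly $\log_{1+\delta}(1/\gamma)$) of linear $\delta$-IC contracts, so that the best of them extracts at least a $1/(\text{that number})$ fraction of $W$, up to the multiplicative $(1-\gamma)$ slack. The key facts I would invoke are: (i) every linear contract with parameter $\alpha$ is automatically $\delta$-IC when $\alpha$ is not too small, because the agent's utility as a function of $\alpha$ behaves nicely; (ii) the agent, under a linear contract with parameter $\alpha$, best-responds with some action whose expected reward $R_{i(\alpha)}$ is monotone nondecreasing in $\alpha$, and the principal's payoff is $(1-\alpha) R_{i(\alpha)}$; and (iii) a geometric grid of $\alpha$-values of size $O(\log_{1+\delta}(1/\gamma))$ suffices to ``cover'' all the relevant welfare.

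\textbf{Key steps.} First I would fix the welfare-maximizing action $a_{i^*}$, so $W = R_{i^*} - c_{i^*}$, and note $c_{i^*} \le R_{i^*}$. Second, I would consider the geometric sequence of linear-contract parameters $\alpha_t = \gamma(1+\delta)^t$ for $t = 0, 1, \dots, T$ where $T = \lceil \log_{1+\delta}(1/\gamma)\rceil$, so that $\alpha_0 = \gamma$ and $\alpha_T \ge 1$ (truncate the last one to lie in $[0,1]$); this is $T+1$ contracts. Third, I would argue each $\alpha_t \ge \gamma$ is large enough that the corresponding linear contract is $\delta$-IC — this should follow from a lemma/observation about linear contracts (the scaling structure: scaling $\alpha$ by $(1+\delta)$ scales the agent's expected payment from every action by $(1+\delta)$, and one compares to the normalized-IC condition; the cost side only helps since costs are nonnegative). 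Fourth, the charging argument: let $R^{(t)}$ denote the expected reward of the agent's best-response action under $\alpha_t$. Since under $\alpha = 1$ the agent is a welfare-maximizer (he internalizes the full reward), the best response under $\alpha_T \ge 1$ has reward $\ge R_{i^*}$. The principal's payoffs from the $T+1$ contracts are $(1-\alpha_t) R^{(t)}$. I would telescope: the welfare $R_{i^*} - c_{i^*} = R_{i^*} - c_{i^*}$ is at most $\sum_{t=0}^{T} (\alpha_{t} - \alpha_{t-1}) R^{(t)}$ plus a boundary term (with the convention $\alpha_{-1} = 0$), using monotonicity of $R^{(t)}$ in $t$ and the fact that $\int_0^1 R(\alpha)\,d\alpha$ or a discrete analog upper-bounds something like $R_{i^*} - c_{i^*}$ — this is the standard ``linear contracts approximate welfare'' telescoping, as in \cite{DRT18}. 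Then each increment $(\alpha_t - \alpha_{t-1}) R^{(t)} \le \delta \alpha_{t-1} R^{(t)} + \gamma R^{(t)}$-type bound lets me relate it to the principal payoff $(1-\alpha_t)R^{(t)}$, after discarding an $O(\gamma)$ fraction and paying the factor $1/(T+1)$ by averaging. Taking the best of the $T+1$ contracts gives the claimed bound.

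\textbf{Main obstacle.} The delicate part is the telescoping/charging step: making precise how the geometric grid of $\alpha$'s, combined with the monotonicity of the best-response reward, recovers a $(1-\gamma)$ fraction of $W$ while only losing the factor $1/(\lceil\log_{1+\delta}(1/\gamma)\rceil+1)$ — in particular handling the boundary terms ($\alpha$ near $0$, which is where the $\gamma$ truncation and hence the $(1-\gamma)$ loss enters, and $\alpha$ near $1$, where $(1-\alpha)$ vanishes but $R^{(t)} \ge R_{i^*}$ saves us) and correctly accounting for the agent's costs, which appear on the ``welfare'' side but not directly in the principal's payoff. I expect the argument to mirror the known analysis that the best linear contract among a geometric grid $\Theta(\log(1/\gamma))$ in size is within a logarithmic factor of first-best, but here the grid spacing is dictated by $\delta$ (to guarantee $\delta$-IC) rather than chosen freely, which is exactly why the bound degrades as $\delta \to 0$.
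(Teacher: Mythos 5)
Your high-level plan — a geometric grid of $\alpha$-values spaced by a factor $(1+\delta)$, a telescoping decomposition of welfare over that grid, and then averaging to pick out the best single contract — is aligned in spirit with the paper's argument, and a version of it can be made to work. But as written, the charging step (your step 4) has a genuine gap that would block the proof. You propose to bound each telescoping increment $(\alpha_t - \alpha_{t-1}) R^{(t)}$ by the principal payoff $(1-\alpha_t)R^{(t)}$ of the exact-IC linear contract at $\alpha_t$. This cannot work: as $\alpha_t$ approaches $1$, the payoff $(1-\alpha_t)R^{(t)}$ goes to zero, while the increment $(\alpha_t-\alpha_{t-1})R^{(t)} = \frac{\delta}{1+\delta}\alpha_t R^{(t)}$ stays bounded away from zero. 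In particular, the last grid point has payoff $0$ but a nontrivial increment, so the sum of payoffs cannot dominate the telescoping sum.

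The idea you are missing is exactly where the $\delta$-IC relaxation earns its keep, and it is where the paper's proof differs from yours. One should not evaluate the payoff at $\alpha_t$; instead one should use the \emph{smaller} parameter $\alpha_{t-1} = \alpha_t/(1+\delta)$, which, thanks to normalization and the affine structure of the agent's utility in $\alpha$, still $\delta$-incentivizes the best response at $\alpha_t$. The principal's payoff is then $(1-\alpha_{t-1})R^{(t)}$, not $(1-\alpha_t)R^{(t)}$, and the inequality $(1-\alpha_{t-1}) \geq \alpha_t - \alpha_{t-1}$ is equivalent to $\alpha_t \leq 1$, which holds throughout the grid. Your intuition (i) is also confused: every linear contract is exactly IC for whichever action is the agent's best response, so claiming it is ``$\delta$-IC'' is vacuous. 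The substantive statement is that the contract at $\alpha/(1+\delta)$ $\delta$-incentivizes the action that is the best response at $\alpha$. The paper packages this precisely as Claim~\ref{cla:delta-ic-approx-2}, after first expressing the welfare bound in terms of the indifference points $\alpha_{h(k-1),h(k)}$ between consecutive representative actions on the upper envelope (Claim~\ref{cla:delta-ic-approx-1}, using Observation~\ref{obs:observation-6} from~\cite{DRT18}), so that the terms in the telescope are \emph{themselves} principal payoffs under $\delta$-IC contracts and the averaging is immediate. You would need to rework steps (i) and (4) along these lines to obtain a correct proof.
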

An immediate corollary of Theorem~\ref{thm:delta-ic-approx} is 
that we can compute a $\delta$-IC linear contract that achieves the claimed \final{constant-factor} approximation in polynomial time.
By Corollary~\ref{cor:hardness-of-approx} we cannot achieve a similar result for IC (rather than $\delta$-IC) \final{contracts} unless $P=NP$. 
In fact, an even stronger lower bound holds for \final{the class of exactly IC linear} \textcolor{black}{(or, more generally, separable)} \final{contracts. These contracts cannot achieve an approximation ratio better than $n$ (see \cite{DRT18}} and Appendix~\ref{appx:separable} \final{for details).}

\paragraph{Geometric understanding of linear contracts}
To prove Theorem~\ref{thm:delta-ic-approx} we will rely on the following geometric understanding of linear contracts developed in \cite{DRT18}. Fix a principal-agent setting. For a linear contract with parameter $\alpha \in[0,1]$ and an action~$a_i$, the expected reward $R_i = \sum_S q_{i,S} r_S$ is split between the principal and the agent, leaving the principal with $(1-\alpha) R_i$ in expected utility and the agent with $\alpha R_i - c_i$ (the sum of the players' expected utilities is action $a_i$'s expected welfare). The agent's expected utility for choosing action $a_i$ \emph{as a function of $\alpha$} is thus a line from $-c_i$ (for $\alpha=0$) to $R_i - c_i$ (for $\alpha=1$). Drawing these lines for each of the $n$ actions, we trace 
the agent's utility for his best action as $\alpha$ goes from $0$ to $1$. This gives us the \emph{upper envelope} diagram for linear contracts in the given principal-agent setting. 


We now analyze some properties of the actions along the upper envelope diagram (i.e., as the linear contract gives an increasingly higher fraction of the rewards to the agent). 
Let $I_N$ be the subset of $N \leq n$ actions implementable by some linear contract. 
We subdivide the interval $[0,1]$ into $N \leq n$ intervals $T_1 = [\ell_1,r_1), T_2=[\ell_2,r_2), \dots, T_N = [\ell_N,r_N]$, with $\ell_1 = 0$, $\ell_i = r_{i-1}$ for $2 \leq i \leq N$, and $r_N = 1$. The subdivision is such that there is a bijection $\tau$ between the indices of actions $a_i \in I_N$ and between those of intervals $T_{\tau(i)}$, with the following properties:

\begin{enumerate}
\item For every $i\in[N]$ and $\alpha$ in the $i$th interval $T_i$, the linear contract with parameter $\alpha$ incentivizes action $a_{\tau^{-1}(i)}$. It follows that every action $a_i\in I_N$ appears on the upper envelope once, and the smallest $\alpha$ that incentivizes it is the left endpoint $\ell_{\tau(i)}$ of this action's interval. 
\item For every $i\in[N]$ it holds that $c_{\tau^{-1}(i)} \leq c_{\tau^{-1}(i+1)}$, $R_{\tau^{-1}(i)}  \leq R_{\tau^{-1}(i+1)}$, and $R_{\tau^{-1}(i)} - c_{\tau^{-1}(i)} \leq R_{\tau^{-1}(i+1)} - c_{\tau^{-1}(i+1)}$.   
\end{enumerate}

\paragraph{Notation}
\final{Our proof of Theorem~\ref{thm:delta-ic-approx} uses the following notation:}
 
\begin{itemize}
	\item We renumber the actions as they appear on the upper envelope from left to right. By the second property above, we get actions sorted by increasing cost, increasing expected reward and increasing expected welfare. In particular, the final action (action $a_N$ after renaming) must be the action in $A_n$ with the highest expected welfare.%
	\footnote{This is easy to see for $\alpha=1$, for which the full reward is transferred to the agent who also bears the cost, and so picks the welfare-maximizing action.}
  
	\item For every linearly-implementable action $a_i\in I_N$, we denote by $\alpha_i$ the smallest parameter $\alpha$ of a linear contract that incentivizes $a_i$ (i.e., the left endpoint of $a_i$'s corresponding interval).
\end{itemize}



\paragraph{Approximation guarantee proof}
\final{With these definitions at hand, we are now ready to prove the theorem.}


\begin{proof}[Proof of Theorem~\ref{thm:delta-ic-approx}] 
Given $\gamma \in (0,1)$ and $\delta > 0$, set $\kappa = \lceil \log_{1+\delta}(\frac{1}{\gamma}) \rceil$.
Subdivide the range $[0,1]$ of $\alpha$-parameters into $\kappa + 1$ intervals:
\begin{align*}
&[0,\gamma(1+\delta)^0), [\gamma(1+\delta)^0,\gamma(1+\delta)^1),\\&\hspace*{60pt}[\gamma(1+\delta)^1,\gamma(1+\delta)^2), \dots, 
[\gamma(1+\delta)^{\kappa-1},1].
\end{align*}

For each interval $k \in [\kappa+1]$, denote by 
$a_{h(k)}$ the action $a_i$ with the 
highest expected reward for which $\alpha_i$ falls into this interval
\textcolor{black}{(for simplicity of presentation we assume without loss of generality that such an action exists for each interval).}

Note that $h(k) < h(k+1)$ due to renaming and because actions appear on the upper envelope in non-decreasing order of expected reward.
We require the following definition: For $k \ge 2$, define 
$$
\alpha_{h(k-1),h(k)} = \frac{c_{h(k)}-c_{h(k-1)}}{R_{h(k)}-R_{h(k-1)}},
$$ 
i.e., $\alpha_{h(k-1),h(k)}$ is the $\alpha$ that makes the agent indifferent between actions $h(k-1)$ and $h(k)$.
For $k=1$ define $\alpha_{h(k-1),h(k)} = 0$. 

\final{The proof now proceeds by two claims.} 
\final{The first claim derives an upper bound on $\max_{i \in [n]}(R_i - c_i) = R_N -c_N$.}  

\begin{claim}
\label{cla:delta-ic-approx-1}
$
\max_{i \in [n]} (R_i - c_i) = R_N-c_N \leq \sum_{k = 1}^{\kappa+1} (1-\alpha_{{h(k-1)},{h(k)}}) R_{h(k)}.
$
\end{claim}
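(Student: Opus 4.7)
The plan is to prove this by a telescoping argument on the costs $c_{h(k)}$. First I would verify that $h(\kappa+1) = N$: since the actions have been renamed in order of increasing expected reward, $a_N$ has the largest reward, so the interval containing $\alpha_N$ must have its $h$-value equal to $N$; and because $h(\cdot)$ is strictly increasing, this interval must be the last one.

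Next I would rewrite each summand on the right-hand side using the defining identity of $\alpha_{h(k-1),h(k)}$, namely
\[
\alpha_{h(k-1),h(k)}\, R_{h(k)} \;=\; \alpha_{h(k-1),h(k)}\, R_{h(k-1)} \;+\; \bigl(c_{h(k)} - c_{h(k-1)}\bigr) \qquad (k \ge 2),
\]
which is just a rearrangement of $\alpha_{h(k-1),h(k)} = (c_{h(k)} - c_{h(k-1)})/(R_{h(k)} - R_{h(k-1)})$. Plugging this into the right-hand side of the claim, using the convention $\alpha_{h(0),h(1)} = 0$, and telescoping the cost differences across $k = 2,\dots,\kappa+1$, the right-hand side becomes
\[
\sum_{k=1}^{\kappa+1} R_{h(k)} \;-\; \sum_{k=2}^{\kappa+1} \alpha_{h(k-1),h(k)}\, R_{h(k-1)} \;-\; \bigl(c_N - c_{h(1)}\bigr).
\]

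Finally, I would bound each $\alpha_{h(k-1),h(k)} \le 1$. This is immediate from the fact that expected welfare $R_i - c_i$ is nondecreasing along the upper envelope, which gives $c_{h(k)} - c_{h(k-1)} \le R_{h(k)} - R_{h(k-1)}$. Combined with $R_{h(k-1)} \ge 0$, this yields $\sum_{k=2}^{\kappa+1} \alpha_{h(k-1),h(k)} R_{h(k-1)} \le \sum_{k=2}^{\kappa+1} R_{h(k-1)}$, so the two $R$-sums telescope to $R_{h(\kappa+1)} = R_N$. Using $c_{h(1)} \ge 0$ to drop that boundary term, the right-hand side is at least $R_N - c_N$, which is the claim. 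The whole argument is routine bookkeeping; the only real point is the bound $\alpha_{h(k-1),h(k)} \le 1$ via welfare monotonicity on the envelope, and I do not foresee any significant obstacle beyond carefully handling the $k=1$ boundary term where $\alpha_{h(0),h(1)} = 0$.
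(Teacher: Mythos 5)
Your proof is correct and is essentially the same argument as the paper's, which proceeds by induction on $k$ using Observation~\ref{obs:observation-6}. Your telescoping identity $\alpha_{h(k-1),h(k)} R_{h(k)} = \alpha_{h(k-1),h(k)} R_{h(k-1)} + (c_{h(k)} - c_{h(k-1)})$ is exactly the algebra underlying that observation, unrolled across $k$, and both proofs hinge on the same key bound $\alpha_{h(k-1),h(k)} \leq 1$ from welfare monotonicity along the upper envelope.
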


To prove Claim~\ref{cla:delta-ic-approx-1} we rely on the following observation from \cite{DRT18}. 

\begin{observation}\label{obs:observation-6}
Consider two actions $a_i, a_{i'}$ such that $a_i$ has higher expected reward and weakly higher welfare than $a_{i'}$, i.e., $R_{i} > R_{i'}$ and $R_i - c_i \geq R_{i'} - c_{i'}$, and let $\alpha_{i',i} = (c_i- c_{i'})/(R_{i} - R_{i'})$. 
Then
\[
	(R_i - c_i) - (R_{i'} - c_{i'}) \leq (1 - \alpha_{i',i})R_i.
\]
\end{observation}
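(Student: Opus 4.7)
The proof is a short direct computation. My plan is to rewrite the left-hand side of the claimed inequality using the definition of $\alpha_{i',i}$, and then bound it by dropping a nonnegative term.

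First, I would unpack the definition. Since $\alpha_{i',i} = (c_i - c_{i'})/(R_i - R_{i'})$ and the denominator is positive by assumption $R_i > R_{i'}$, we have the identity $c_i - c_{i'} = \alpha_{i',i}(R_i - R_{i'})$. Substituting this in,
\[
(R_i - c_i) - (R_{i'} - c_{i'}) = (R_i - R_{i'}) - (c_i - c_{i'}) = (1 - \alpha_{i',i})(R_i - R_{i'}).
\]

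Next, I would observe that $\alpha_{i',i} \le 1$: the hypothesis $R_i - c_i \ge R_{i'} - c_{i'}$ is equivalent to $c_i - c_{i'} \le R_i - R_{i'}$, so dividing by the positive quantity $R_i - R_{i'}$ gives $\alpha_{i',i} \le 1$, i.e., $(1-\alpha_{i',i}) \ge 0$. Combined with $R_{i'} \ge 0$ (rewards are nonnegative in the model), this yields
\[
(1 - \alpha_{i',i})(R_i - R_{i'}) \le (1 - \alpha_{i',i}) R_i,
\]
which is exactly the desired bound.

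There is no real obstacle here; the entire content is the algebraic identity in the first step, and the inequality step just drops the nonnegative slack $(1 - \alpha_{i',i}) R_{i'}$. The one thing to verify for cleanliness is that $1 - \alpha_{i',i} \ge 0$, which I would handle as above using the weak-welfare-dominance hypothesis.
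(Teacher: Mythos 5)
Your proof is correct. The paper itself does not prove Observation~\ref{obs:observation-6} (it is imported from \cite{DRT18} without proof), but the argument you give---rewriting the left-hand side as $(1-\alpha_{i',i})(R_i - R_{i'})$, checking $\alpha_{i',i}\le 1$ from the welfare hypothesis, and dropping the nonnegative term $(1-\alpha_{i',i})R_{i'}$---is the natural and essentially unique short computation.
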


\begin{myproof}[Proof of Claim~\ref{cla:delta-ic-approx-1}]
We argue by induction that for all $k \geq 1$, $R_{h(k)} - c_{h(k)} \leq \sum_{i = 1}^{k} (1-\alpha_{{h(i-1)},{h(i)}}) R_{h(i)}.$ For $k = 1$, recall that $\alpha_{h(0),h(1)} = 0$ by definition, and it trivially
holds that $R_{h(1)} - c_{h(1)} \leq R_{h(1)}.$ Now assume that the inequality holds for $k-1$, i.e.,
\begin{align}
R_{h(k-1)} - c_{h(k-1)} \leq \sum_{i = 1}^{k-1} (1-\alpha_{{h(i-1)},{h(i)}}) R_{h(i)}.
\label{eq:IH}
\end{align}
By construction we have $R_{h(k)} - c_{h(k)} \geq R_{h(k-1)}-c_{h(k-1)}$ and $R_{h(k)} > R_{h(k-1)}$, so we can apply Observation~\ref{obs:observation-6} to actions $a_{h(k)}$ and $a_{h(k-1)}$. This shows $(R_{h(k)} - c_{h(k)}) - (R_{h(k-1)} - c_{h(k-1)}) \leq (1 - \alpha_{h(k-1),h(k)})R_{h(k)}.$ Adding this to inequality (\ref{eq:IH}) we obtain
\begin{align*}
R_{h(k)} - c_{h(k)} \leq \sum_{i = 1}^{k} (1-\alpha_{{h(i-1)},{h(i)}}) R_{h(i)},
\end{align*}
as claimed.
\end{myproof}

The second crucial observation is that while $\alpha_{h(k-1),h(k)}$ is generally smaller than $\alpha_{h(k)}$ and thus does not incentivize action $h(k)$, it still $\delta$-incentivizes it.

\begin{claim}
\label{cla:delta-ic-approx-2}
For $k = 2, \dots, \kappa+1$, the linear contract with $\alpha = \alpha_{h(k-1),h(k)}$ ensures that $\alpha R_{h(k)} - c_{h(k)} +\delta \geq \alpha R_i - c_i$ for every $i \in [n]$.
\end{claim}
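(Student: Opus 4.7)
The plan is to show, for every action $a_i$, that the gap $(\alpha R_i - c_i) - (\alpha R_{h(k)} - c_{h(k)})$ is at most $\delta$ when $\alpha = \alpha_{h(k-1), h(k)}$. Because the upper envelope identifies the agent-optimal action at each $\alpha$, every action's utility at $\alpha$ is dominated by that of the envelope action $a_j$ at $\alpha$, so it suffices to verify the bound when $a_i = a_j$. Two sub-cases are immediate: if $a_j = h(k)$ the gap is zero, and if $a_j = h(k-1)$ it is zero by the very definition of $\alpha_{h(k-1), h(k)}$ as the indifference point of $h(k-1)$ and $h(k)$. The interesting case is when $a_j$ is some other envelope action strictly between them, i.e., $\alpha_{h(k-1)} < \alpha_j < \alpha_{h(k)}$.

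The crucial geometric fact to establish is $\alpha_{h(k)} \le (1+\delta)\alpha$. First I would argue that $\alpha_j$ must lie in the $k$th interval: if it lay in the $(k-1)$st interval, then $\alpha_j > \alpha_{h(k-1)}$ combined with the strictly increasing rewards along the envelope would force $R_j > R_{h(k-1)}$, contradicting the definition of $h(k-1)$ as the highest-reward envelope action whose $\alpha$ falls in interval $k-1$. Hence $\alpha \ge \alpha_j \ge \gamma(1+\delta)^{k-2}$. For $2 \le k \le \kappa$, the right endpoint of interval $k$ gives $\alpha_{h(k)} < \gamma(1+\delta)^{k-1} \le (1+\delta)\alpha$; and for $k = \kappa+1$, the defining property $\kappa = \lceil \log_{1+\delta}(1/\gamma) \rceil$ implies $\gamma(1+\delta)^{\kappa-1} \ge 1/(1+\delta)$, so $\alpha_{h(k)} \le 1 \le (1+\delta)\gamma(1+\delta)^{\kappa-1} \le (1+\delta)\alpha$.

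Finally I would combine this bound with the envelope-crossing inequalities. At $\alpha_{h(k)}$, $h(k)$ weakly dominates $a_j$, giving $\alpha_{h(k)}(R_{h(k)} - R_j) \ge c_{h(k)} - c_j$; at $\alpha$, $a_j$ weakly dominates $h(k)$, giving $\alpha(R_{h(k)} - R_j) \le c_{h(k)} - c_j$. Chaining these yields
\[
\begin{aligned}
(\alpha R_j - c_j) - (\alpha R_{h(k)} - c_{h(k)}) &= (c_{h(k)} - c_j) - \alpha(R_{h(k)} - R_j) \\
&\le (\alpha_{h(k)} - \alpha)(R_{h(k)} - R_j) \\
&\le \delta\alpha(R_{h(k)} - R_j) \le \delta(c_{h(k)} - c_j) \le \delta,
\end{aligned}
\]
where the final step uses the normalization $c_{h(k)} \le 1$. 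The main obstacle is the interval-placement argument for $\alpha_j$: it crucially leverages both the maximality of $h(k-1)$ within its interval and the monotonicity of rewards along the envelope, and without it the key bound $\alpha_{h(k)} \le (1+\delta)\alpha$ would fail and the entire argument collapses.
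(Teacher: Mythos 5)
Your proof is correct and follows essentially the same route as the paper: the central step in both is establishing $\alpha_{h(k)} \le (1+\delta)\alpha_{h(k-1),h(k)}$ via the lower bound $\alpha_{h(k-1),h(k)} \ge \gamma(1+\delta)^{k-2}$, after which the claim follows from normalization. The minor differences are that you spell out the interval-placement argument (via the envelope action $a_j$ at $\alpha$) which the paper leaves implicit in the phrase ``by construction,'' and in the last step you invoke $c_{h(k)} \le 1$ while the paper uses $R_{h(k)} \le 1$ and chains through $\alpha_{h(k)} R_{h(k)} - c_{h(k)} \ge \alpha_{h(k)} R_i - c_i$; both are valid under the same normalization.
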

\begin{myproof}[Proof of Claim~\ref{cla:delta-ic-approx-2}]
The lines $R_{h(k)} - c_{h(k)}$ and $R_{h(k-1)} - c_{h(k-1)}$ intersect at $\alpha_{h(k-1),h(k)}$.
By construction, their intersection must fall between, on the one hand, the left endpoint $\gamma(1+\delta)^{k-2}$ of the interval in which $\alpha_{h(k)}$ falls, and $\alpha_{h(k)}$ on the other hand. 
This shows that $(1+\delta) \alpha_{h(k-1),h(k)} \geq (1+\delta) \gamma (1+\delta)^{k-2} = \gamma (1-\delta)^{k-1} \geq \alpha_{h(k)}$. Combining this with the fact that $a_{h(k)}$ is incentivized exactly at $\alpha_{h(k)}$, we obtain that $\alpha_{h(k-1),h(k)} R_{h(k)} - c_{h(k)} + \delta \ge (1+\delta)\alpha_{h(k-1),h(k)} R_{h(k)} - c_{h(k)} \geq \alpha_{h(k)} R_{h(k)} - c_{h(k)} \geq \alpha_{h(k)} R_i - c_i$ for all $i \in [n]$, where the first inequality holds since $R_{h(k)}\le 1$ by normalization. This completes the proof of Claim \ref{cla:delta-ic-approx-2}.
\end{myproof}

Using Claims \ref{cla:delta-ic-approx-1} and \ref{cla:delta-ic-approx-2}, the theorem follows from the fact that
\final{to obtain a $\delta$-IC linear contract we can} either incentivize $a_{h(1)}$ at $\alpha = \alpha_{h(1)}$ or $\delta$-incentivize one of the actions $a_{h(2)}, \dots, a_{h(\kappa+1)}$ at $\alpha = \alpha_{h(k-1),h(k)}$ with $2 \leq k \leq \kappa+1$.
Namely,
\begin{align*}
ALG 
&\geq \max\{(1-\alpha_{h(1)}) R_{h(1)}, (1-\alpha_{h(1),h(2)})R_{h(2)}, \dots, (1-\alpha_{h(\kappa),h(\kappa+1)})R_{h(\kappa+1)} \} \displaybreak[0]\\
&\geq (1-\gamma) \max\{(1-\alpha_{h(0),h(1)}) R_{h(1)}, (1-\alpha_{h(1),h(2)})R_{h(2)},\\
&\hspace*{201pt}\dots, (1-\alpha_{h(\kappa),h(\kappa+1)})R_{h(\kappa+1)} \} \displaybreak[0]\\
&\geq (1-\gamma) \frac{1}{\kappa+1} \sum_{i = 1}^{\kappa+1} (1-\alpha_{{h(k-1)},{h(k)}}) R_{h(k)} \displaybreak[0]\\
&\geq (1-\gamma) \frac{1}{\kappa+1} OPT,
\end{align*}
where for the first inequality applied to $ALG$ we use Claim~\ref{cla:delta-ic-approx-2}, for the second inequality we use $\alpha_{h(1)} \leq \gamma$ and $\alpha_{h(0),h(1)} \geq 0$, for the third inequality we lower bound the maximum by the average, and for the final inequality we use Claim~\ref{cla:delta-ic-approx-1}.
\end{proof}


\section{Black-box model}
\label{sec:black-box}

We conclude by considering a \emph{black-box model}
which concerns non-necessarily succinct principal-agent settings.
In this model, the principal knows the set of actions $A_n$, the cost $c_i$ of each action $a_i \in A_n$, the set of items $M$ and the rewards $r_j$ for each item $j \in M$, but does not know the probabilities $q_{i,S}$ that action $a_i$ assigns to outcome $S\subseteq M$. Instead, the principal has \emph{query access} to the distributions $\{q_i\}$. Upon querying distribution $q_i$ of action $a_i$, a (random) set is returned where $S$ is selected with probability $q_{i,S}$.
Our goal is to study how well a $\delta$-IC contract in this model can approximate the optimal IC contract if limited to a polynomial number of queries (where the guarantees should hold with high probability over the random samples).
Black-box models have been studied in other algorithmic game theory contexts such as signaling---see \cite{DughmiX16} for a successful example.

Let $\eta = \min\{q_{i,S} \mid i \in [n], S \subseteq M, q_{i,S} \neq 0\}$ be the minimum non-zero probability of any set of items under any of the actions. Note that then either $q_{i,S} = 0$ or $q_{i,S} \geq \eta$ for every $S$. In Section~\ref{sub:inverse-super} we address the case in which $\eta$ is inverse super-polynomial and obtain a negative result; in Section~\ref{sub:inverse-poly} we show a positive result for the case of inverse polynomial $\eta$.

\subsection{Inverse super-polynomial probabilities}
\label{sub:inverse-super} 

We show a negative result for the case where the minimum probability $\eta$ is inverse super-polynomial, by showing that $\textrm{poly}(1/\sqrt{\eta})$ samples are required to obtain a constant factor multiplicative approximation better than $\approx 1.15$. The negative result holds even for succinct settings, in which the unknown distributions are product distributions.

\begin{theorem}
Assume $\eta \leq \eta_0 = 1/625$ and $\delta \leq \delta_0 = 1/100$. Even with $n = 2$ actions and $m = 2$ items, achieving a multiplicative $\leq 1.15$ approximation to the optimal IC contract through a $\delta$-IC contract, where the approximation guarantee is required to hold with probability at least $1-\gamma$, may require at least $s \geq -\log(\gamma)/(9 \sqrt{\eta})$ queries.
\end{theorem}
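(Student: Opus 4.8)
The plan is to construct two principal-agent instances that are statistically hard to distinguish using few queries, yet force very different payments, so any $\delta$-IC contract that does well on both would need to ``know'' which instance it is in. Concretely, I would fix $n=2$ actions and $m=2$ items. Action $a_1$ is the zero-cost action and $a_2$ has a small positive cost. In both instances, action $a_2$'s product distribution is the same; the instances differ only in action $a_1$'s distribution, and only on a single outcome $S_0\subseteq M$ whose probability under $a_1$ is either $0$ or $\eta$ (both at least $\eta$ when nonzero). The key design requirement, following the ``informativeness'' intuition from Section~\ref{sec:black-box}, is that in the instance where $q_{1,S_0}=0$, outcome $S_0$ is a perfect signal that $a_2$ was played, so $a_2$ can be implemented by a cheap IC contract paying only on $S_0$ (payment $\approx c_2/q_{2,S_0}$, and one can arrange $q_{2,S_0}=\Theta(1)$ so this is small); whereas in the instance where $q_{1,S_0}=\eta$, the likelihood ratio $q_{1,S_0}/q_{2,S_0}$ is bounded away from $0$, so implementing $a_2$ (even $\delta$-implementing it, for $\delta\le\delta_0$) requires a much larger expected payment, which eats into the principal's payoff and makes the optimal achievable payoff a constant factor (at least $\approx 1.15$) smaller. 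I would tune the reward $r$, the cost $c_2$, and the probabilities so that in the first instance $\mathrm{OPT}$ is some value $V$, while in the second instance even the first-best payoff (and hence any $\delta$-IC contract's payoff, $\delta\le\delta_0$) is at most $V/1.15$.

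The next step is the indistinguishability argument. The only way to tell the two instances apart from samples is to observe outcome $S_0$ under action $a_1$ (since everything else is identical). With $s$ queries split between the two actions, the expected number of times $S_0$ is seen under $a_1$ in the ``$\eta$'' instance is at most $s\eta$, so with $s\le -\log(\gamma)/(9\sqrt\eta)$ queries and $\eta\le\eta_0$ small, the total variation distance between the two query-response distributions is $O(s\eta)=O(\sqrt\eta\cdot(-\log\gamma))$, which can be made $<\gamma$ after adjusting constants; more carefully, I would bound the probability of ever observing $S_0$ under $a_1$ in the $\eta$-instance by $1-(1-\eta)^s\le s\eta$, or use a Chernoff/coupling bound to get the $\sqrt\eta$ scaling if the instance is set up so that a single observation is not yet conclusive (e.g. $q_{1,S_0}\in\{\eta/2,\eta\}$ rather than $\{0,\eta\}$, pushing the distinguishing threshold to $\Theta(1/\sqrt\eta)$ samples by a standard hypothesis-testing lower bound). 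I would then apply Le Cam's two-point method: any (randomized) algorithm using $s$ queries that outputs a $\delta$-IC contract must, on one of the two instances, fail to achieve the $1.15$-approximation with probability $\ge \tfrac12(1-d_{TV})\ge\gamma$ whenever $d_{TV}\le 1-2\gamma$, contradicting the required success probability $1-\gamma$.

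The main obstacle I anticipate is the precise quantitative calibration: making all of the pieces — the constant $1.15$, the thresholds $\eta_0=1/625$ and $\delta_0=1/100$, the sample bound $s\ge-\log(\gamma)/(9\sqrt\eta)$ — fit together simultaneously. This requires (i) choosing the instance parameters $c_2, r, q_{2,1}, q_{2,2}, q_{1,1}, q_{1,2}$ so that the payoff gap between the two instances is genuinely $\ge 1.15$ even after allowing the adversary a $\delta_0$-IC relaxation (which slightly helps the ``bad'' instance), and (ii) getting the $1/\sqrt\eta$ rather than $1/\eta$ dependence, which forces the two candidate values of $q_{1,S_0}$ to be of the same order $\Theta(\eta)$ (so that distinguishing is a $\chi^2$-type problem with separation $\Theta(\eta)$ against variance $\Theta(\eta)$, yielding sample complexity $\Theta(1/\sqrt\eta)$ — compare the two hypotheses via their squared Hellinger distance, which is $\Theta(\eta)$, and invoke the tensorization bound $H^2(P^{\otimes s},Q^{\otimes s})\le s\,H^2(P,Q)$). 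Everything else — the IC/$\delta$-IC payment computations via the LPs of Section~\ref{sec:prelim}, and the Le Cam reduction — is routine once the instances are pinned down. I would therefore spend the bulk of the argument on an explicit two-instance construction with all constants verified, and relegate the information-theoretic distinguishing bound to a short lemma.
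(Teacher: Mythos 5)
Your high-level strategy (two indistinguishable instances, Le Cam / coupling, lower-bound the sample complexity) matches the paper's, but the instance design you sketch does not actually work, and the source of the $\sqrt\eta$ is different from what you suppose.

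\emph{The payoff-gap issue.} You want the two instances to differ only in $q_{1,S_0}$, with the two values either $\{0,\eta\}$ or $\{\eta/2,\eta\}$, while $q_{2,S_0}=\Theta(1)$ so that $a_2$ is cheap to implement in the ``good'' instance. But the minimum expected payment to incentivize $a_2$ by paying on $S_0$ alone is $q_{2,S_0}\,c_2/(q_{2,S_0}-q_{1,S_0})$; perturbing $q_{1,S_0}$ by $O(\eta)$ changes this by only a multiplicative $1+O(\eta/q_{2,S_0})=1+O(\eta)$. You cannot get a constant-factor ($\approx 1.15$) payoff gap between the two instances from an $O(\eta)$-perturbation of a single outcome probability against a constant $q_{2,S_0}$. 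The paper avoids this entirely by making the optimal IC payoff \emph{the same} ($=\beta$) in both settings; what differs is \emph{which} singleton outcome maximizes the likelihood ratio and hence should carry the payment. A contract that cannot tell the settings apart must hedge and pay on both $\{1\}$ and $\{2\}$, and the loss from hedging (computed by summing the two $\delta$-IC constraints and optimizing $\tau$, which comes out to $\tau=1+\sqrt2$) is what produces the $\approx 1.15$ factor. So the two instances also differ in $a_2$'s distribution (the swap $q_{2,1}\leftrightarrow q_{2,2}$), not $a_1$'s; $a_1$ is identical in both, which is what makes the coupling clean.

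\emph{The $\sqrt\eta$ issue.} Your proposed route to $1/\sqrt\eta$ — set $q_{1,S_0}\in\{\eta/2,\eta\}$ and invoke a squared-Hellinger/$\chi^2$ testing bound — gives a $1/\sqrt\eta$ distinguishing lower bound, but, as noted, no constant payoff gap to pair with it. The paper's $\sqrt\eta$ arises for a completely different reason, specific to product distributions over $m=2$ items: the minimum \emph{set} probability is $\eta=\tau^2\mu^2$, so \emph{item marginals} are $\mu=\sqrt\eta/\tau=\Theta(\sqrt\eta)$. In both settings a query to $a_2$ returns the empty set with probability $1-\Theta(\mu)$, and only samples in which at least one item appears carry any information about which of Setting I or II we are in. Until such an informative sample is observed the two posteriors coincide, so $s$ must be $\Omega(1/\mu)=\Omega(1/\sqrt\eta)$; quantitatively the paper uses $1-\gamma\le 1-(1-\tau^2\mu)^{2s}$, with no Hellinger machinery or Le Cam's two-point method needed. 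Your general information-theoretic framing is not wrong, but it obscures the actual mechanism, and without the paper's trick of pushing the hardness to the item-marginal scale rather than the set-probability scale you would only obtain the easier $\Omega(1/\eta)$ bound (which, while formally stronger, is not achievable simultaneously with a constant payoff gap in the way you propose).

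In short: the right construction keeps $a_1$ fixed, swaps the item roles under $a_2$, ties the $\eta$ constraint to the product $\mu^2$ so that marginals live at scale $\sqrt\eta$, and extracts the constant gap from the cost of paying on both outcomes rather than from OPT being different in the two instances.
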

\begin{proof}
We consider a scenario with two settings, both of which have $n = 2$ actions and $m = 2$ items, and which differ only in the probabilities of the items given the second action.
Let $\tau$ be some constant $> 2$ (to be fixed later), and
let $\mu=\frac{\sqrt{\eta}}{\tau}$.
Let $\beta = (1+\frac{1}{\tau^2})^{-1}$ and note that $\beta < 1$.

\medskip
\begin{center}
\begin{minipage}[t]{0.15\textwidth}
Setting I:
\end{minipage}
\begin{minipage}[t]{0.525\textwidth}
\begin{tabular}{@{}l|cc|l@{}}
\toprule
& $r_1 = \frac{\beta}{\tau^2 \mu}$ & $r_2 = \frac{\beta}{\tau^2 \mu}$ \\
\hline
$a_1:$ & $\tau \mu$ & $\tau \mu$ & $c_1 = 0$\\ 
$a_2:$ & $\tau^2 \mu$ & $\mu$ & $c_2 = \frac{\tau-1}{\tau^3} \frac{1}{1-\mu} \beta$\\ 
\bottomrule
\end{tabular}
\end{minipage}
\end{center}


\medskip

\begin{center}
\begin{minipage}[t]{0.15\textwidth}
Setting II:
\end{minipage}
\begin{minipage}[t]{0.525\textwidth}
\begin{tabular}{@{}l|cc|l@{}}
\toprule
& $r_1 = \frac{\beta}{\tau^2 \mu}$ & $r_2 = \frac{\beta}{\tau^2 \mu}$ \\
\hline
$a_1:$ & $\tau \mu$ & $\tau \mu$ & $c_1 = 0$\\
$a_2:$ & $\mu$ & $\tau^2 \mu$ & $c_2 = \frac{\tau-1}{\tau^3} \frac{1}{1-\mu} \beta$\\  
\bottomrule
\end{tabular}
\end{minipage}
\end{center}
\medskip

Note further that the minimum probability of any set of items in both settings is $q_{2,\{1,2\}}=\tau^2\mu^2=\eta$, as required by definition of $\eta$.

The expected reward achieved by the two actions in the two settings is $R_1 = 2\beta/\tau < 1$ and $R_2 = (1+1/\tau^2)\beta = 1$. Moreover, the cost of action $2$ is $c_2 \leq \beta/\tau^2$. So the welfare achieved by the two actions is $R_1 - c_1 < \beta$ and $R_2 - c_2 \geq \beta$.

In both settings the optimal IC contract incentivizes action $2$, by paying only for the set of items that maximizes the likelihood ratio. In Setting 1 this is $\{1\}$, in Setting 2 it is $\{2\}$. The payment for this set in both cases is $c_2/(\tau^2\mu(1-\mu) - \tau\mu(1-\tau\mu)) = c_2/(\tau^2\mu-\tau\mu)$. This leads to an expected payment of $\tau^2\mu(1-\mu) \cdot c_2/(\tau^2\mu-\tau\mu) = \beta/\tau^2$. The resulting payoff (and our benchmark) is therefore $R_2 - \beta/\tau^2 = \beta$.

We now argue that if we cannot distinguish between the two settings, then we can only achieve a $\approx 1.1568$ approximation. Of course, we can always pay nothing and incentivize action $1$, but this only yields a payoff of $2\beta/\tau$. We can also try to $\delta$-incentivize action $2$ in both settings, by paying for outcome $\{1\}$ and $\{2\}$. But (as we show below) the payoff that we can achieve this way is (for $\delta \rightarrow 0$ and $\mu \rightarrow 0$) at most $(1+1/\tau^2-(\tau^2+1)/((\tau-1)\tau^3)\beta$. Now $\max\{2/\tau,1+1/\tau^2-(\tau^2+1)/((\tau-1)\tau^3\}$ is minimized at $\tau = 1+\sqrt{2}$ where it is $2/(1+\sqrt{2}) \approx 0.8284$. The upper bound on the payoff from action $2$ for this choice of $\tau$ is actually increasing in both $\mu$ and $\delta$ and $\approx 0.8644 \cdot \beta$ at the upper bounds $\mu_0=\sqrt{\eta_0}/(2^2) = 1/100$ and $\delta_0 = 1/100$, implying that the best we can achieve without knowing the setting is a $\approx 1/0.8644 \approx 1.1568$ approximation.

So if we want to achieve at least a $\leq 1.15$ approximation with probability at least $1-\gamma$, then we need to be able to distinguish between the two settings with at least this probability. 
A necessary condition for being able to distinguish between the two settings is that we see at least some item in one of our queries to action $2$. So,
\[
1-\gamma \leq 1-(1-\tau^2\mu)^{2s},
\]
which implies that $s \geq \log(\gamma)/(2\log(1-\tau^2\mu) \geq -\log(\gamma)/(2 \cdot \mu \cdot\tau^2) \geq -\log(\gamma)/(18 \mu)$. Plugging in $\mu$ we get $s \ge -\log(\gamma)/(18 \frac{\sqrt{\mu}}{\tau}) > -\log(\gamma)/(9 \sqrt{\mu})$.

We still need to prove our claims regarding the payoff that we can achieve if we want to $\delta$-incentivize action $2$ in both settings.
To this end consider the IC constraints for $\delta$-incentivizing action 2 over action 1 in Setting I and Setting II, respectively:
\begin{align*}
&\tau^2\mu(1-\mu) p_{\{1\}} + (1-\tau^2\mu)\mu p_{\{2\}} - c_2 \geq \\ &\hspace*{50pt}\tau \mu(1-\tau\mu) p_{\{1\}} + (1-\tau\mu)\tau\mu p_{\{2\}} - \delta, \quad\text{and}
\\
&(1-\tau^2\mu)\mu p_{\{1\}} + \tau^2\mu(1-\mu)  p_{\{2\}} - c_2 \geq\\ &\hspace*{50pt}\tau \mu(1-\tau\mu) p_{\{1\}} + (1-\tau\mu)\tau\mu p_{\{2\}} - \delta.
\end{align*}
Adding up these constraints yields
\begin{align*}
&(\tau^2\mu(1-\mu) + (1-\tau^2\mu)\mu - 2 \tau \mu(1-\tau\mu)) \cdot (p_{\{1\}} + p_{\{2\}}) \geq 2 c_2 - 2 \delta.
\end{align*}
We maximize the minimum performance across the two settings by choosing $p_{\{1\}} = p_{\{2\}}$. Letting $p = p_{\{1\}} = p_{\{2\}}$ we thus obtain
\begin{align*}
(\tau^2\mu(1-\mu) + (1-\tau^2\mu)\mu - 2 \tau \mu(1-\tau\mu)) p &\geq c_2 - \delta.  
\end{align*}
It follows that
\[
p \geq \frac{c_2 - \delta}{\tau^2 \mu + \mu - 2 \tau \mu}.
\]
The performance of the optimal contract that $\delta$-incentivizes action 2 in both settings thus achieves an expected payoff of 
\begin{align*}
&R_2 - (\tau^2\mu(1-\mu)+(1-\tau^2\mu)\mu) \frac{c_2 - \delta}{\tau^2 \mu + \mu - 2 \tau \mu}= R_2 - \frac{\tau^2(1-2\mu)+1}{(\tau-1)^2} (c_2 - \delta).
\end{align*}
Plugging in $R_2$ and $c_2$ and letting $\delta \rightarrow 0$ and $\mu \rightarrow 0$ we obtain the aforementioned $1+1/\tau^2-(\tau^2+1)/((\tau-1)\tau^3)\beta$. Finally, to see that the expected payoff evaluated at $\tau = 1 + \sqrt{2} > 2$ is increasing in both $\delta$ and $\mu$ observe that the derivative in $\delta$ is simply the probability term $(\tau^2(1-2\mu)+1)/(\tau-1)^2$ which is positive and that both this probability term and the cost $c_2$ are decreasing in $\mu$ implying that as $\mu$ increases we subtract less.
\end{proof}

\subsection{Inverse polynomial probabilities}
\label{sub:inverse-poly}

We show a positive result for the case where the minimum probability $\eta$ is inverse polynomial. Namely, let $OPT$ denote the expected payoff of the optimal IC contract; then with $\textrm{poly}(n,m,\frac{1}{\eta},\frac{1}{\epsilon},\frac{1}{\gamma})$ queries it is possible to find with probability at least $(1-\gamma)$ a $4\epsilon$-IC contract with payoff at least $OPT - 5\epsilon$. Formally:

\begin{theorem}\label{thm:sample-positive}
Fix $\epsilon > 0$, and assume $\epsilon \leq 1/2$. Fix distributions $Q$ such that $q_{i,S} \geq \eta$ for all $i \in[n]$ and $S \subseteq M$.  Denote the expected payoff of the optimal IC contract for distributions $Q$ by $OPT$. Then there is an algorithm that with $s = (3 \log(\frac{2n}{\eta\gamma}))/(\eta \epsilon^2)$ queries to each action and probability at least $1-\gamma$, computes a contract $\tilde{p}$ which (i) is $4\epsilon$-IC on the actual distributions $Q$; and (ii) has expected payoff $\Pi$ on the actual distributions satisfying $\Pi \geq OPT - 5 \epsilon$.
\end{theorem}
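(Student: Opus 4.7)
The plan is a natural estimate-then-optimize strategy. First I would query each action independently $s$ times and form the empirical distribution $\tilde Q = (\tilde q_{i,S})$. The hypothesis $q_{i,S}\geq \eta$ on the support forces each $q_i$ to have support of size $\leq 1/\eta$, so a multiplicative Chernoff bound gives
\[
\Pr\bigl[|\tilde q_{i,S}-q_{i,S}|>\epsilon q_{i,S}\bigr] \;\leq\; 2\exp\bigl(-\eta s\epsilon^2/3\bigr),
\]
which with the prescribed $s$ is at most $\eta\gamma/n$. A union bound over the $\leq n/\eta$ pairs $(i,S)$ yields, with probability $\geq 1-\gamma$, the event $\mathcal{E}$ that $(1-\epsilon)q_{i,S}\leq \tilde q_{i,S}\leq(1+\epsilon)q_{i,S}$ for every $(i,S)$. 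I condition on $\mathcal{E}$ throughout the rest of the argument.

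Next, for each action $a_i$ I would solve the LP~\eqref{LP:min-pay} on $\tilde Q$ (polynomially many variables and constraints) to obtain a contract $\tilde p^{(i)}$ with empirical expected payment $\tilde{OPT}_i$, and output $\tilde p:=\tilde p^{(i^{\ast\ast})}$ where $i^{\ast\ast}\in\arg\max_i(\tilde R_i-\tilde{OPT}_i)$. WLOG the LP optimum satisfies $\tilde p_i^{\tilde Q}\leq \tilde R_i\leq 1+\epsilon$ (else the empirical payoff is negative and the zero contract dominates) and IC on $\tilde Q$ forces $\tilde p_{i'}^{\tilde Q}\leq \tilde p_i^{\tilde Q}+c_{i'}-c_i\leq 2$, so all expected payments relevant below are bounded by $2$. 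Property~(i) follows by combining IC on $\tilde Q$ with $\mathcal{E}$:
\begin{align*}
p_{i^{\ast\ast}}^Q - p_{i'}^Q
&\geq (1-\epsilon)\tilde p_{i^{\ast\ast}}^{\tilde Q} - (1+\epsilon)\tilde p_{i'}^{\tilde Q}\\
&\geq (\tilde p_{i^{\ast\ast}}^{\tilde Q}-\tilde p_{i'}^{\tilde Q}) - \epsilon(\tilde p_{i^{\ast\ast}}^{\tilde Q}+\tilde p_{i'}^{\tilde Q})\\
&\geq (c_{i^{\ast\ast}}-c_{i'}) - 4\epsilon,
\end{align*}
which rearranges to $4\epsilon$-IC on $Q$.

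For property~(ii), let $p^\ast$ implement $a_{i^\ast}$ optimally on $Q$ with payment $OPT_{i^\ast}=R_{i^\ast}-OPT$. The mirror-image calculation shows $p^\ast$ is $4\epsilon$-IC, but not in general exactly IC, on $\tilde Q$. The main obstacle is to upper-bound $\tilde{OPT}_{i^\ast}$ in terms of $OPT_{i^\ast}$ to within $O(\epsilon)$: concretely, to exhibit a contract $\hat p$ that is \emph{exactly} IC on $\tilde Q$ for $a_{i^\ast}$ with $\hat p_{i^\ast}^{\tilde Q}\leq OPT_{i^\ast}+O(\epsilon)$. Once available, $\tilde \Pi_{i^{\ast\ast}}\geq \tilde \Pi_{i^\ast}\geq (1-\epsilon)R_{i^\ast}-OPT_{i^\ast}-O(\epsilon)\geq OPT-O(\epsilon)$, and one final application of $\mathcal{E}$ to translate payoffs back from $\tilde Q$ to $Q$ (using the payment and reward bounds) costs a further $O(\epsilon)$, assembling to $\Pi\geq OPT-5\epsilon$ after a careful accounting of the constants.

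The hard part is exactly this $\tilde{OPT}_{i^\ast}$ bound. Uniform rescaling $\hat p=(1+\alpha)p^\ast$ does not work: the sign of the residual $4\epsilon$-violation on $\tilde Q$ can vary across constraints depending on $\mathrm{sign}(c_{i^\ast}-c_{i'})$, so no single $\alpha$ repairs all constraints simultaneously. The generic $\delta$-IC$\to$IC conversion of Proposition~\ref{pro:from-delta-to-IC} is also too lossy, costing $\Theta(\sqrt{\epsilon})$ rather than $O(\epsilon)$. My plan for this step is an LP-sensitivity analysis tailored to~\eqref{LP:min-pay}: the primal optimum is supported on $\leq n-1$ outcomes, and the optimal dual of~\eqref{LP:dual} is bounded in $\ell_1$-norm by its structure, so the multiplicative data perturbation from $Q$ to $\tilde Q$ can be turned into an $O(\epsilon)$ \emph{additive} change in the LP optimum. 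This step is by far the most delicate; the rest of the proof (concentration, algorithm, property~(i), payoff translation) is routine once $\mathcal{E}$ is established.
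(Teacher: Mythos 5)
Your concentration step (Lemma~\ref{lem:sample-bound}) and the way you establish property~(i) match the paper's in spirit, but your approach to property~(ii) diverges in a way that opens a genuine gap.

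The crux is which optimization you solve on $\tilde Q$. You solve LP~\eqref{LP:min-pay} \emph{with exact IC constraints} on $\tilde Q$, and then need to control $\tilde{OPT}_{i^\ast}$ --- the minimum payment that \emph{exactly} IC-implements $a_{i^\ast}$ on $\tilde Q$ --- in terms of $OPT_{i^\ast}$. You correctly identify this as the hard step, but the LP-sensitivity plan you sketch would not succeed: the exact-IC LP optimum is \emph{not} stable under multiplicative $(1\pm\epsilon)$ perturbations of the $q_{i,S}$. The binding coefficients in LP~\eqref{LP:min-pay} are differences $q_{i,S}-q_{i',S}$ (equivalently, likelihood-ratio gaps), which may themselves be of order $\epsilon$ even when $q_{i,S},q_{i',S}\geq\eta$; perturbing each $q$ by $(1\pm\epsilon)$ can therefore change these differences by constant \emph{relative} factors and blow up the required payment by far more than $O(\epsilon)$ additively. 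Concretely, with two probabilities near $1/2$ separated by roughly $2\epsilon$, a multiplicative $(1\pm\epsilon)$ perturbation can roughly halve the separation and hence roughly double the required payment. Nor is the dual $\ell_1$-norm structurally bounded: the $\lambda_{i'}$ in LP~\eqref{LP:dual} diverge exactly when $a_i$ is close to being a convex combination of the other actions, which is precisely the regime where perturbations hurt. Proposition~\ref{pro:delta-IC-much-better} in the paper even formalizes that the exact-IC versus near-IC gap can be a constant factor for arbitrarily small $\delta$, so forcing exactness at an intermediate step is structurally the wrong move.

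The paper sidesteps this entirely by \emph{relaxing the intermediate optimization as well}: it computes the optimal $2\epsilon$-IC contract on $\tilde Q$, not the optimal exactly-IC one. Then the optimal IC contract $p$ on $Q$ is $2\epsilon$-IC on $\tilde Q$ (Lemma~\ref{lem:ic-is-preserved-approximately}), hence feasible for the problem actually being solved, giving $\tilde\Pi\geq\tilde P$ with no sensitivity analysis needed; the computed contract is then $4\epsilon$-IC on $Q$ (Lemma~\ref{lem:delta-ic-is-preserved}), and the payoff-translation lemmas finish the bound. Since the theorem only promises a $4\epsilon$-IC output anyway, there is never any need to produce an exactly-IC contract on $\tilde Q$. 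This relaxation is the missing idea; once you adopt it, the delicate step you flagged disappears and the rest of your argument (concentration, $\mathcal{E}$, payoff translation with the $\leq 2$ payment bounds) goes through. As a minor point, in your display for property~(i) the correct bound from $\mathcal{E}$ is $p_{i'}^Q\leq\frac{1}{1-\epsilon}\tilde p_{i'}^{\tilde Q}\leq(1+2\epsilon)\tilde p_{i'}^{\tilde Q}$ rather than $(1+\epsilon)\tilde p_{i'}^{\tilde Q}$, which shifts your constants slightly; the paper avoids this by keeping payments bounded by 1 through normalization on $Q$ and paying the extra $\epsilon$'s in the lemmas.
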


To prove Theorem~\ref{thm:sample-positive}, we first prove a series of lemmas (Lemmas~\ref{lem:sample-bound} to~\ref{lem:noisy-2}). Proofs appear in Appendix~\ref{appx:black-box}.

\begin{lemma}
\label{lem:sample-bound}
Consider the algorithm that issues $s$ queries to each action $i \in N$, and sets $\tilde{q}_{i,S}$ to be the empirical probability of set $S$ under action $i$. With $s = (3 \log(\frac{2n}{\eta\gamma}))/(\eta \epsilon^2)$ queries to each action, with probability at least $1-\gamma$, for all $i \in [n]$ and $S \subseteq M$,
\begin{align*}
\quad &(1-\epsilon) q_{i,S} \leq \tilde{q}_{i,S} \leq (1+\epsilon) q_{i,S}.
\end{align*}
\end{lemma}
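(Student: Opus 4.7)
The plan is a standard two-step argument: a multiplicative Chernoff bound for each fixed pair $(i,S)$, followed by a union bound. The only subtle point is that the outcome space has $2^m$ subsets, so a naive union bound would be exponential; we must exploit the lower bound $\eta$ on the minimum nonzero probability to keep the number of relevant pairs polynomial.

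First I would fix a pair $(i,S)$ with $q_{i,S} > 0$ (hence $q_{i,S} \geq \eta$ by the assumption of the theorem). The empirical estimate $\tilde{q}_{i,S}$ is the average of $s$ i.i.d.\ Bernoulli$(q_{i,S})$ indicators (one for each query to action $a_i$). Applying the multiplicative Chernoff bound, for any $\epsilon \in (0,1]$,
\[
\Pr\bigl[\,|\tilde{q}_{i,S}-q_{i,S}|>\epsilon\, q_{i,S}\,\bigr]\;\le\;2\exp\!\bigl(-\epsilon^2 s\, q_{i,S}/3\bigr)\;\le\;2\exp\!\bigl(-\epsilon^2 s\,\eta/3\bigr).
\]
Plugging in $s=3\log(2n/(\eta\gamma))/(\eta\epsilon^2)$, the right-hand side simplifies to $\eta\gamma/n$.

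Next I would observe that for each action $a_i$, at most $1/\eta$ outcomes $S$ satisfy $q_{i,S}\geq\eta$, simply because $\sum_S q_{i,S}=1$. Hence across all $n$ actions there are at most $n/\eta$ pairs $(i,S)$ with $q_{i,S}>0$. A union bound over these pairs gives a total failure probability of at most $(n/\eta)\cdot(\eta\gamma/n)=\gamma$.

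Finally I would dispense with the case $q_{i,S}=0$: since $S$ is never returned when querying action $a_i$, the empirical estimate $\tilde{q}_{i,S}=0$ deterministically, and the sandwich inequality $(1-\epsilon)q_{i,S}\leq\tilde{q}_{i,S}\leq(1+\epsilon)q_{i,S}$ holds trivially. I do not anticipate a substantive obstacle here; the only thing to double-check is the constant in the Chernoff bound (so that one-sided and two-sided deviations both fit into the $2\exp(-\epsilon^2\mu/3)$ form for $\epsilon\leq 1$, which is valid since by assumption $\epsilon\leq 1/2$), and that the counting argument $|\{S:q_{i,S}\geq\eta\}|\leq 1/\eta$ is applied per action rather than globally.
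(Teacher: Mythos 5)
Your proof takes essentially the same route as the paper's: a multiplicative Chernoff bound for each pair $(i,S)$ with $q_{i,S}\ge\eta$, a counting argument that each action has at most $1/\eta$ sets with positive probability, a union bound over these at most $n/\eta$ pairs, and the trivial observation that the bound holds deterministically when $q_{i,S}=0$. The only cosmetic difference is that you write the Chernoff bound cleanly in relative-error form, whereas the paper's display writes $\Pr[|X_{i,S}-\mathbb{E}[X_{i,S}]|\ge\epsilon]$ where it plainly intends the relative deviation $\epsilon\,\mathbb{E}[X_{i,S}]$; your version is the correct reading.
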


\begin{lemma}\label{lem:ic-is-preserved-approximately}
Suppose that $(1-\epsilon) q_{i,S} \leq \tilde{q}_{i,S} \leq (1+\epsilon) q_{i,S}$ for all $i \in [n]$ and $S \subseteq M$. Consider contract $p$. If $a_i$ is the action that is incentivized by this contract under the actual probabilities $Q$, then the payoff of $a_i$ under the empirical distributions $\tilde{Q}$ is at least as high as that of any other action up to an additive term of $2\epsilon$.
\end{lemma}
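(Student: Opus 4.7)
The plan is to translate the multiplicative closeness of the empirical probabilities to the true probabilities into multiplicative closeness of the expected payments, and then use normalization ($p_{i'} \le R_{i'} \le 1$) to convert the resulting multiplicative slack into an additive $2\epsilon$ loss.

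First, I would introduce the notation $p_{i'} = \sum_{S \subseteq M} q_{i',S}\, p_S$ and $\tilde{p}_{i'} = \sum_{S \subseteq M} \tilde{q}_{i',S}\, p_S$ for every action $a_{i'}$. Since payments are non-negative (limited liability) and the hypothesis $(1-\epsilon) q_{i',S} \le \tilde{q}_{i',S} \le (1+\epsilon) q_{i',S}$ holds term-by-term, multiplying by $p_S \ge 0$ and summing gives
\begin{equation*}
(1-\epsilon)\, p_{i'} \;\le\; \tilde{p}_{i'} \;\le\; (1+\epsilon)\, p_{i'} \qquad \text{for every } i' \in [n].
\end{equation*}

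Next, fix any alternative action $a_{i'} \ne a_i$. Since contract $p$ incentivizes $a_i$ under the true distributions $Q$, we have $p_i - c_i \ge p_{i'} - c_{i'}$. Applying the lower bound to $\tilde{p}_i$ and the upper bound to $\tilde{p}_{i'}$ yields
\begin{align*}
(\tilde{p}_i - c_i) - (\tilde{p}_{i'} - c_{i'})
&\;\ge\; \bigl((1-\epsilon)\, p_i - c_i\bigr) - \bigl((1+\epsilon)\, p_{i'} - c_{i'}\bigr) \\
&\;=\; (p_i - c_i) - (p_{i'} - c_{i'}) - \epsilon\,(p_i + p_{i'}) \\
&\;\ge\; -\,\epsilon\,(p_i + p_{i'}),
\end{align*}
where the last inequality uses IC under $Q$.

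Finally, I would invoke normalization: without loss of generality the contract satisfies $p_{i''} \le R_{i''} \le 1$ for every action $a_{i''}$ (this is the standing assumption from Section~\ref{sec:prelim}). Thus $p_i + p_{i'} \le 2$, giving $(\tilde{p}_i - c_i) - (\tilde{p}_{i'} - c_{i'}) \ge -2\epsilon$, which is precisely the $2\epsilon$-IC condition for action $a_i$ under the empirical distributions $\tilde{Q}$. Since this holds for every $i' \ne i$, the conclusion follows.

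There is no real obstacle here; the only thing to be careful about is that the term-by-term multiplicative sandwich on $\tilde{q}_{i',S}$ lifts to $\tilde{p}_{i'}$ only because $p_S \ge 0$, and that the conversion from multiplicative error $\epsilon(p_i + p_{i'})$ to an additive $2\epsilon$ bound relies on the normalization assumption $p_{i''} \le 1$; both are standing assumptions of the model.
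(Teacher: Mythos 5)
Your proof is correct and takes essentially the same approach as the paper's: both translate the term-by-term multiplicative bounds on the probabilities into bounds on the expected payments (using $p_S \ge 0$), then invoke the IC constraint under $Q$ and the normalization $p_{i''} \le 1$ to absorb the multiplicative slack into an additive $2\epsilon$. The only cosmetic difference is that you bundle the two normalization uses into a single bound $p_i + p_{i'} \le 2$, whereas the paper applies $p_i \le 1$ and $p_{i'} \le 1$ as two separate steps in its chain of inequalities.
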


\begin{lemma}\label{lem:delta-ic-is-preserved}
Suppose that $(1-\epsilon) q_{i,S} \leq \tilde{q}_{i,S} \leq (1+\epsilon) q_{i,S}$ for all $i \in [n]$ and $S \subseteq M$. 
Consider contract $\tilde{p}$. If $a_i$ is the action that is $\delta$-incentivized by this contract under the empricial probabilities $\tilde{Q}$, then the payoff of $a_i$ under the actual distributions is at least as high as that of any other action up to an additive term of $\delta+2\epsilon$.
\end{lemma}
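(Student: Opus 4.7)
The plan is to mirror the proof of Lemma~\ref{lem:ic-is-preserved-approximately}, swapping the roles of $Q$ and $\tilde Q$ and carrying the extra $\delta$-slack through the argument.

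For every action $a_j$, I would let $\tilde P_j=\sum_S \tilde q_{j,S}\tilde p_S$ and $P_j=\sum_S q_{j,S}\tilde p_S$ denote the expected payment of $a_j$ under $\tilde Q$ and $Q$ respectively, using the same contract $\tilde p$. Since $\tilde p_S\ge 0$ and $(1-\epsilon)q_{j,S}\le \tilde q_{j,S}\le (1+\epsilon)q_{j,S}$, linearity of expectation yields $(1-\epsilon)P_j\le \tilde P_j\le (1+\epsilon)P_j$, hence $|P_j-\tilde P_j|\le \epsilon P_j$. Restricting without loss of generality to contracts that satisfy the paper's standing bound $P_j\le R_j\le 1$ for every $j$ (equivalently, $\tilde p_S\le r_S$ pointwise), this upgrades to the additive bound $|P_j-\tilde P_j|\le \epsilon$ for every action.

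By hypothesis, $a_i$ is $\delta$-incentivized by $\tilde p$ under $\tilde Q$, i.e.\ $\tilde P_i-c_i\ge \tilde P_{i'}-c_{i'}-\delta$ for every $i'\neq i$. Using the additive bound once at $a_i$ (via $P_i\ge \tilde P_i-\epsilon$) and once at $a_{i'}$ (via $\tilde P_{i'}\ge P_{i'}-\epsilon$), I would chain
\[
P_i-c_i \;\ge\; \tilde P_i-c_i-\epsilon \;\ge\; \tilde P_{i'}-c_{i'}-\delta-\epsilon \;\ge\; P_{i'}-c_{i'}-(\delta+2\epsilon),
\]
which is exactly the claimed $(\delta+2\epsilon)$-approximate dominance of $a_i$ over every other action under the true distribution $Q$.

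The main subtlety is not the chain itself but justifying the normalization bound $P_j\le 1$ \emph{uniformly} over all actions (not only for the incentivized one). The cleanest resolution is the standing w.l.o.g.\ restriction to contracts with $\tilde p_S\le r_S$ pointwise, which forces both $P_j\le R_j$ and $\tilde P_j\le \tilde R_j$ simultaneously; otherwise, one can keep the multiplicative form $|\tilde P_j-P_j|\le \epsilon P_j$ throughout and absorb the resulting $\epsilon P_j\le\epsilon$ only at the final step, reaching the identical conclusion.
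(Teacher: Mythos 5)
Your proof is correct and mirrors the paper's argument: the paper writes out the same chain of five inequalities in one display (using the multiplicative bounds $(1\pm\epsilon)$, the $\delta$-IC constraint under $\tilde Q$, and normalization to turn the multiplicative slack into the additive $2\epsilon$), while you simply factor out the two additive bounds $|P_j-\tilde P_j|\le\epsilon$ upfront before chaining. Your explicit flagging of the uniform normalization requirement and the caveat about $\tilde p$ being optimized on $\tilde Q$ rather than $Q$ is a sound observation — the paper's displayed proof also quietly relies on $P_j\le 1$ under the actual $Q$ for both $a_i$ and every competitor $a_{i'}$, and your remark that this needs either a pointwise restriction on $\tilde p$ or absorption of an $O(\epsilon^2)$ correction is a legitimate tightening, not a deviation.
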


\begin{lemma}\label{lem:noisy-1}
Suppose that $(1-\epsilon) q_{i,S} \leq \tilde{q}_{i,S} \leq (1+\epsilon) q_{i,S}$ for all $i \in [n]$ and $S \subseteq M$. If action $a_i$ achieves payoff $\tilde\Pi$ under contract $\tilde{p}$ when evaluated on the empirical distributions $\tilde{Q}$, then it achieves payoff $\Pi \geq \tilde{\Pi} - 2\epsilon$ when evaluated on the actual distributions $Q$.
\end{lemma}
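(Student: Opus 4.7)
The plan is to rewrite both payoffs in a form that makes their dependence on the distribution explicit. Writing $\Pi = \sum_S q_{i,S}(r_S - \tilde{p}_S)$ (evaluated on the actual distribution $q_i$) and $\tilde\Pi = \sum_S \tilde{q}_{i,S}(r_S - \tilde{p}_S)$ (evaluated on the empirical distribution $\tilde{q}_i$), their difference factors as
\[
\tilde\Pi - \Pi \;=\; \sum_S (\tilde{q}_{i,S} - q_{i,S})(r_S - \tilde{p}_S).
\]

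The next step is to apply the hypothesis $|\tilde{q}_{i,S} - q_{i,S}| \leq \epsilon\, q_{i,S}$ termwise, combined with the triangle inequality and the fact that both $r_S$ and $\tilde{p}_S$ are nonnegative (rewards are nonnegative by definition and payments by limited liability). This yields
\[
|\tilde\Pi - \Pi| \;\leq\; \epsilon \sum_S q_{i,S}(r_S + \tilde{p}_S) \;=\; \epsilon(R_i + p_i),
\]
where $p_i = \sum_S q_{i,S}\tilde{p}_S$ denotes the expected payment of $a_i$ under the \emph{actual} distribution. This reduces the task to showing $R_i + p_i \leq 2$.

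The final step invokes the two normalizing conventions of Section~\ref{sec:prelim}: the setting is normalized so that $R_i \leq 1$, and contracts are restricted without loss of generality to those satisfying $p_i \leq R_i$. Together these give $R_i + p_i \leq 2$, so $|\tilde\Pi - \Pi| \leq 2\epsilon$, which in particular yields $\Pi \geq \tilde\Pi - 2\epsilon$ as desired.

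The main subtlety I expect is justifying the WLOG bound $p_i \leq R_i$ for the contract $\tilde{p}$: since $\tilde{p}$ was constructed only from empirical data, the natural bound the algorithm can enforce is $\tilde{p}_i \leq \tilde{R}_i$, not a bound directly on $p_i$. To handle this cleanly, I would argue that if $p_i > R_i$ under the actual distribution, then $\tilde{p}$ is dominated by the all-zero contract (which has nonnegative payoff $R_i$), so we may restrict attention to contracts satisfying $p_i \leq R_i$ when analyzing payoffs. With this normalization in hand the calculation above is immediate.
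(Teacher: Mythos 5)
Your proof follows essentially the same calculation as the paper's: split the payoff into expected reward minus expected payment, apply the multiplicative $(1\pm\epsilon)$ bounds termwise (using nonnegativity of rewards and of payments), and absorb the error via the normalizations $R_i \le 1$ and $p_i \le R_i$. The only cosmetic difference is that you bound $|\tilde{\Pi}-\Pi| \le \epsilon(R_i + p_i)$ two-sidedly via the triangle inequality, whereas the paper writes a one-sided chain $\tilde{\Pi} \le (1+\epsilon)R_i - (1-\epsilon)p_i \le \Pi + 2\epsilon$; the underlying arithmetic is identical.

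The subtlety you flag at the end is genuine and the paper's proof does not engage with it---it simply invokes the WLOG restriction $p_i \le R_i$ from Section~\ref{sec:prelim} without revisiting it in the black-box context, where $\tilde{p}$ is chosen based only on $\tilde{Q}$ and so the algorithm can at best enforce $\sum_S \tilde{q}_{i,S}\tilde{p}_S \le \tilde{R}_i$. Under that empirical-side restriction the multiplicative bounds give $p_i \le \frac{1+\epsilon}{1-\epsilon}R_i$, so $R_i+p_i \le \frac{2}{1-\epsilon}$, which only degrades the constant in front of $\epsilon$ (to at most $4\epsilon$ when $\epsilon\le\tfrac12$) and does not affect the qualitative conclusion of Theorem~\ref{thm:sample-positive}. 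Your proposed ``dominated by the all-zero contract'' workaround is less satisfactory as stated: the lemma must hold for the \emph{particular} contract $\tilde{p}$ that the algorithm selected on account of its high empirical payoff $\tilde{\Pi}$, and observing that a different contract has higher actual payoff does not by itself restore the desired inequality $\Pi \ge \tilde{\Pi}-2\epsilon$ for that $\tilde{p}$. The cleaner route is the direct estimate $p_i \le \frac{1+\epsilon}{1-\epsilon}R_i$ above.
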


\begin{lemma}\label{lem:noisy-2}
Assume $\epsilon \leq 1/2$. Suppose that $(1-\epsilon) q_{i,S} \leq \tilde{q}_{i,S} \leq (1+\epsilon) q_{i,S}$ for all $i \in [n]$ and $S \subseteq M$.  If action $a_i$ achieves payoff $P$ under contract $p$ when evaluated on the actual distributions $Q$, then it achieves payoff $\tilde{P} \geq P - 3\epsilon$ when evaluated on the empirical distributions $Q$.
\end{lemma}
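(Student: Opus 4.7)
The plan is a direct linearity-of-expectation calculation on the fixed action $a_i$ and fixed contract $p$, propagating the multiplicative sampling error $(1-\epsilon)q_{i,S} \leq \tilde q_{i,S} \leq (1+\epsilon)q_{i,S}$ separately through the expected reward and the expected payment. First I would write $P = R_i - p_i$ with $R_i = \sum_S q_{i,S} r_S$ and $p_i = \sum_S q_{i,S} p_S$, and $\tilde P = \tilde R_i - \tilde p_i$ in the analogous way under $\tilde Q$. Then
\[
P - \tilde P \;=\; \sum_S (q_{i,S} - \tilde q_{i,S})\,r_S \;-\; \sum_S (q_{i,S} - \tilde q_{i,S})\,p_S,
\]
so the lemma reduces to bounding each of these two signed sums.

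Next, since the hypothesis yields $|q_{i,S} - \tilde q_{i,S}| \leq \epsilon\, q_{i,S}$ pointwise, linearity of expectation bounds the two sums in absolute value by $\epsilon R_i$ and $\epsilon p_i$ respectively. By the normalization $R_i \leq 1$ and the WLOG restriction $p_i \leq R_i$ adopted in Section~\ref{sec:prelim}, both of these are at most $\epsilon$. Combining them with the right signs gives $\tilde P \geq P - 2\epsilon$, which already implies the claimed $\tilde P \geq P - 3\epsilon$. The assumption $\epsilon \leq 1/2$ becomes pertinent if one prefers to argue in the alternative direction via $R_i \leq \tilde R_i / (1-\epsilon) \leq (1+2\epsilon)\tilde R_i$ and the analogous inequality $p_i \leq (1+2\epsilon) \tilde p_i$; that route produces a $2\epsilon \tilde R_i + \epsilon \tilde p_i$ type bound that more transparently realises the factor~$3$ stated in the lemma, rather than the tighter factor~$2$ above.

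The main obstacle, or really the only subtlety, is the bookkeeping on the payment side: individual payments $p_S$ may be arbitrarily large nonnegative reals, so one cannot bound them termwise. What saves the argument is that we are comparing a single action $a_i$ under the two distributions, so the multiplicative sampling error pushes through to the expectation and the WLOG bound $p_i \leq R_i \leq 1$ immediately converts it into an additive loss of order $\epsilon$. This mirrors the proof of Lemma~\ref{lem:noisy-1}, with the two $\epsilon$ contributions from reward and payment adding, up to a small slack absorbed using $\epsilon \leq 1/2$, into the stated additive loss of $3\epsilon$.
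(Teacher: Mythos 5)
Your proof is correct, and in fact sharper than what the lemma asks for: you obtain $\tilde{P}\ge P-2\epsilon$ and you do not need the hypothesis $\epsilon\le 1/2$ at all. The decomposition you use,
\[
P-\tilde{P}=\sum_S (q_{i,S}-\tilde q_{i,S})\,r_S \;-\; \sum_S (q_{i,S}-\tilde q_{i,S})\,p_S,
\]
combined with the pointwise two-sided bound $|q_{i,S}-\tilde q_{i,S}|\le\epsilon q_{i,S}$ and the normalization/WLOG facts $R_i\le 1$ and $p_i\le R_i$, bounds each signed sum by $\epsilon$ and then invokes the triangle inequality. The paper's own proof instead upper-bounds $P$ directly by replacing $q_{i,S}$ with $\tilde q_{i,S}/(1-\epsilon)$ in the reward sum and $\tilde q_{i,S}/(1+\epsilon)$ in the (subtracted) payment sum, and then uses the numeric inequalities $1/(1-\epsilon)\le 1+2\epsilon$ and $1/(1+\epsilon)\ge 1-\epsilon$; that route is exactly the ``alternative direction'' you sketch in your second paragraph, and it is why the lemma is stated with the (looser) constant $3$ and the assumption $\epsilon\le 1/2$. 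So the content is essentially the same --- both rely on multiplicative closeness of $q$ and $\tilde q$ plus normalization of $R_i$ and $p_i\le R_i$ --- but your first route is cleaner, symmetric, and gives a tighter constant without the extra hypothesis; the paper's route is the one that naturally produces the stated $3\epsilon$.
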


We are now ready to prove the theorem.

\begin{proof}[Proof of Theorem~\ref{thm:sample-positive}]
Compute empirical probabilities $\tilde{Q}$ by querying each action $s$ times. By Lemma~\ref{lem:sample-bound}, with probability at least $1-\gamma$, the empirical probabilities obtained in this way will satisfy
$(1-\epsilon) q_{i,S} \leq \tilde{q}_{i,S} \leq (1+\epsilon) q_{i,S}$ for all $i \in [n]$ and $S \subseteq M$. 

Suppose we compute the optimal $2\epsilon$-IC contract $\tilde{p}$ on the empirical distributions $\tilde{Q}$. Denote the expected payoff achieved by this contract on $\tilde{Q}$ by $\tilde{\Pi}$, and the expected payoff it achieves on $Q$ by~$\Pi$. 
Likewise, consider the optimal IC contract $p$ on the actual distributions $Q$. Denote the expected payoff $OPT$ achieved by this contract on the actual distributions $Q$ by $P$, and the expected payoff it achieves on $\tilde{Q}$ by $\tilde{P}$.

Note that by Lemma~\ref{lem:delta-ic-is-preserved}, contract $\tilde{p}$ which is $2\epsilon$-IC on $\tilde{Q}$ is $4\epsilon$-IC on $Q$. Furthermore, by Lemma~\ref{lem:ic-is-preserved-approximately}, contract $p$ which is IC on $Q$ is $2\epsilon$-IC on $\tilde{Q}$. This implies that $\tilde{\Pi} \geq \tilde{P}$. Together with Lemma~\ref{lem:noisy-1} and Lemma~\ref{lem:noisy-2} we obtain
\begin{align*}
\Pi \geq \tilde{\Pi} - 2\epsilon \geq \tilde{P} - 2\epsilon \geq P - 5\epsilon,
\end{align*}
which proves the theorem.
\end{proof}

\bibliographystyle{plainnat}
\bibliography{contracts_bib}

\begin{thebibliography}{51}
\providecommand{\natexlab}[1]{#1}
\providecommand{\url}[1]{\texttt{#1}}
\expandafter\ifx\csname urlstyle\endcsname\relax
  \providecommand{\doi}[1]{doi: #1}\else
  \providecommand{\doi}{doi: \begingroup \urlstyle{rm}\Url}\fi

\bibitem[Akerlof(1970)]{Akerlof70}
George~A Akerlof.
\newblock The market for ``lemons'': Quality uncertainty and the market
  mechanism.
\newblock \emph{The Quarterly Journal of Economics}, 84\penalty0 (3):\penalty0
  488--500, 1970.

\bibitem[Akerlof(1982)]{Akerlof82}
George~A. Akerlof.
\newblock Labor contracts as partial gift-exchange.
\newblock \emph{The Quarterly Journal of Economics}, 97:\penalty0 543--569,
  1982.

\bibitem[Azar and Micali(2018)]{MA18}
Pablo~D. Azar and Silvio Micali.
\newblock Computational principal-agent problems.
\newblock \emph{Theoretical Economics}, 13:\penalty0 553--578, 2018.

\bibitem[Babaioff and Winter(2014)]{BabaioffWinter14}
Moshe Babaioff and Eyal Winter.
\newblock Contract complexity.
\newblock In \emph{EC'14}, page 911, 2014.

\bibitem[Babaioff et~al.(2012)Babaioff, Feldman, Nisan, and
  Winter]{Babaioff-et-al12}
Moshe Babaioff, Michal Feldman, Noam Nisan, and Eyal Winter.
\newblock Combinatorial agency.
\newblock \emph{Journal of Economic Theory}, 147\penalty0 (3):\penalty0
  999--1034, 2012.

\bibitem[Babaioff et~al.(2014)Babaioff, Immorlica, Lucier, and
  Weinberg]{BabaioffILW14}
Moshe Babaioff, Nicole Immorlica, Brendan Lucier, and S.~Matthew Weinberg.
\newblock A simple and approximately optimal mechanism for an additive buyer.
\newblock In \emph{FOCS'14}, pages 21--30, 2014.

\bibitem[Babaioff et~al.(2017)Babaioff, Gonczarowski, and Nisan]{BabaioffGN17}
Moshe Babaioff, Yannai~A. Gonczarowski, and Noam Nisan.
\newblock The menu-size complexity of revenue approximation.
\newblock In \emph{STOC'17}, pages 869--877, 2017.

\bibitem[Cai(2013)]{Cai13}
Yang Cai.
\newblock \emph{Mechanism design: A new algorithmic framework}.
\newblock PhD thesis, Massachusetts Institute of Technology (MIT), 2013.

\bibitem[Cai et~al.(2012{\natexlab{a}})Cai, Daskalakis, and Weinberg]{CaiDW12a}
Yang Cai, Constantinos Daskalakis, and S.~Matthew Weinberg.
\newblock An algorithmic characterization of multi-dimensional mechanisms.
\newblock In \emph{STOC'12}, pages 459--478, 2012{\natexlab{a}}.

\bibitem[Cai et~al.(2012{\natexlab{b}})Cai, Daskalakis, and Weinberg]{CaiDW12b}
Yang Cai, Constantinos Daskalakis, and S.~Matthew Weinberg.
\newblock Optimal multi-dimensional mechanism design: Reducing revenue to
  welfare maximization.
\newblock In \emph{FOCS'12}, pages 130--139, 2012{\natexlab{b}}.

\bibitem[Cai et~al.(2013)Cai, Daskalakis, and Weinberg]{CaiDW13}
Yang Cai, Constantinos Daskalakis, and S.~Matthew Weinberg.
\newblock Understanding incentives: Mechanism design becomes algorithm design.
\newblock In \emph{FOCS'13}, pages 618--627, 2013.

\bibitem[Cai et~al.(2016)Cai, Devanur, and Weinberg]{CaiDW16}
Yang Cai, Nikhil~R. Devanur, and S.~Matthew Weinberg.
\newblock A duality based unified approach to bayesian mechanism design.
\newblock In \emph{STOC'16}, pages 926--939, 2016.

\bibitem[Caillaud and Hermalin(2000)]{CaillaudHermalin00}
Bernard Caillaud and Benjamin~E. Hermalin.
\newblock Hidden-information agency.
\newblock Lecture notes available from
  \url{http://faculty.haas.berkeley.edu/hermalin/mechread.pdf}, 2000.

\bibitem[Carr and Vempala(2002)]{CarrV02}
Robert~D. Carr and Santosh Vempala.
\newblock Randomized metarounding.
\newblock \emph{Random Structures and Algorithms}, 20\penalty0 (3):\penalty0
  343--352, 2002.

\bibitem[Carroll(2013)]{Carroll13}
Gabriel Carroll.
\newblock A quantitative approach to incentives: Application to voting rules.
\newblock Working paper, 2013.

\bibitem[Carroll(2015)]{Carroll15}
Gabriel Carroll.
\newblock Robustness and linear contracts.
\newblock \emph{American Economic Review}, 105\penalty0 (2):\penalty0 536--563,
  2015.

\bibitem[Cheng et~al.(2015)Cheng, Cheung, Dughmi, Emamjomeh{-}Zadeh, Han, and
  Teng]{ChengCDEHT15}
Yu~Cheng, Ho~Yee Cheung, Shaddin Dughmi, Ehsan Emamjomeh{-}Zadeh, Li~Han, and
  Shang{-}Hua Teng.
\newblock Mixture selection, mechanism design, and signaling.
\newblock In \emph{FOCS'15}, pages 1426--1445, 2015.

\bibitem[Daskalakis and Weinberg(2012)]{DaskalakisW12}
Constantinos Daskalakis and S.~Matthew Weinberg.
\newblock Symmetries and optimal multi-dimensional mechanism design.
\newblock In \emph{EC'12}, pages 370--387, 2012.

\bibitem[Dughmi(2014)]{Dughmi14}
Shaddin Dughmi.
\newblock On the hardness of signaling.
\newblock In \emph{FOCS'14}, pages 354--363, 2014.

\bibitem[Dughmi and Xu(2016)]{DughmiX16}
Shaddin Dughmi and Haifeng Xu.
\newblock Algorithmic {B}ayesian persuasion.
\newblock In \emph{STOC'16}, pages 412--425, 2016.

\bibitem[Dughmi et~al.(2014)Dughmi, Immorlica, and Roth]{DughmiIR14}
Shaddin Dughmi, Nicole Immorlica, and Aaron Roth.
\newblock Constrained signaling in auction design.
\newblock In \emph{SODA'14}, pages 1341--1357, 2014.

\bibitem[Dughmi et~al.(2017)Dughmi, Hartline, Kleinberg, and
  Niazadeh]{DughmiHKN17}
Shaddin Dughmi, Jason~D. Hartline, Robert Kleinberg, and Rad Niazadeh.
\newblock Bernoulli factories and black-box reductions in mechanism design.
\newblock In \emph{STOC'17}, pages 158--169, 2017.

\bibitem[D\"utting et~al.(2019)D\"utting, Roughgarden, and Talgam-Cohen]{DRT18}
Paul D\"utting, Tim Roughgarden, and Inbal Talgam-Cohen.
\newblock Simple versus optimal contracts.
\newblock In \emph{EC'19}, pages 369--387, 2019.

\bibitem[Emek and Feldman(2012)]{EmekF12}
Yuval Emek and Michal Feldman.
\newblock Computing optimal contracts in combinatorial agencies.
\newblock \emph{Theoretical Computer Science}, 452:\penalty0 56--74, 2012.

\bibitem[Englmaier and Leider(2012)]{EnglmaierL12}
Florian Englmaier and Stephen Leider.
\newblock Contractual and organizational structure with reciprocal agents.
\newblock \emph{American Economic Journal: Microeconomics}, 4\penalty0
  (2):\penalty0 146--183, 2012.

\bibitem[Feldman et~al.(2012)Feldman, Kortsarz, and Nutov]{FeldmanKN12}
Moran Feldman, Guy Kortsarz, and Zeev Nutov.
\newblock Improved approximation algorithms for directed {S}teiner forest.
\newblock \emph{Journal of Computer and System Sciences}, 78\penalty0
  (1):\penalty0 279--292, 2012.

\bibitem[Fleischer et~al.(2011)Fleischer, Goemans, Mirrokni, and
  Sviridenko]{FleischerGMS11}
Lisa Fleischer, Michel~X. Goemans, Vahab~S. Mirrokni, and Maxim Sviridenko.
\newblock Tight approximation algorithms for maximum separable assignment
  problems.
\newblock \emph{Mathematics of Operations Research}, 36\penalty0 (3):\penalty0
  416--431, 2011.

\bibitem[Gonczarowski(2018)]{Gonczarowski18}
Yannai~A. Gonczarowski.
\newblock Bounding the menu-size of approximately optimal auctions via
  optimal-transport duality.
\newblock In \emph{STOC'18}, pages 123--131, 2018.

\bibitem[Gonczarowski and Weinberg(2018)]{GonczarowskiW18}
Yannai~A. Gonczarowski and S.~Matthew Weinberg.
\newblock The sample complexity of up-to-{\(\epsilon\)} multi-dimensional
  revenue maximization.
\newblock In \emph{FOCS'18}, pages 416--426, 2018.

\bibitem[Grossman and Hart(1983)]{GrossmanHart83}
Sanford~J. Grossman and Oliver~D. Hart.
\newblock An analysis of the principal-agent problem.
\newblock \emph{Econometrica}, 51\penalty0 (1):\penalty0 7--45, 1983.

\bibitem[Hartline and Lucier(2015)]{HL}
Jason~D. Hartline and Brendan Lucier.
\newblock Non-optimal mechanism design.
\newblock \emph{American Economic Review}, 105\penalty0 (20):\penalty0
  3102--3124, 2015.

\bibitem[Hartline et~al.(2015)Hartline, Kleinberg, and Malekian]{HartlineKM15}
Jason~D. Hartline, Robert Kleinberg, and Azarakhsh Malekian.
\newblock Bayesian incentive compatibility via matchings.
\newblock \emph{Games and Economic Behavior}, 92:\penalty0 401--429, 2015.

\bibitem[H{\aa}stad(2001)]{Hastad01}
Johan H{\aa}stad.
\newblock Some optimal inapproximability results.
\newblock \emph{Journal of the {ACM}}, 48\penalty0 (4):\penalty0 798--859,
  2001.

\bibitem[Ho et~al.(2016)Ho, Slivkins, and Vaughan]{HoSV16}
Chien{-}Ju Ho, Aleksandrs Slivkins, and Jennifer~Wortman Vaughan.
\newblock Adaptive contract design for crowdsourcing markets: Bandit algorithms
  for repeated principal-agent problems.
\newblock \emph{Journal of Artificial Intelligence Research}, 55:\penalty0
  317--359, 2016.

\bibitem[Jain et~al.(2003)Jain, Mahdian, and Salavatipour]{JainMS03}
Kamal Jain, Mohammad Mahdian, and Mohammad~R. Salavatipour.
\newblock Packing {S}teiner trees.
\newblock In \emph{SODA'03}, pages 266--274, 2003.

\bibitem[Jiang and Leyton{-}Brown(2015)]{JiangL15}
Albert~Xin Jiang and Kevin Leyton{-}Brown.
\newblock Polynomial-time computation of exact correlated equilibrium in
  compact games.
\newblock \emph{Games and Economic Behavior}, 91:\penalty0 347--359, 2015.

\bibitem[Karmarkar and Karp(1982)]{KarmarkarK82}
Narendra Karmarkar and Richard~M. Karp.
\newblock An efficient approximation scheme for the one-dimensional bin-packing
  problem.
\newblock In \emph{FOCS'82}, pages 312--320, 1982.

\bibitem[Kleinberg and Kleinberg(2018)]{KleinbergK18}
Jon~M. Kleinberg and Robert Kleinberg.
\newblock Delegated search approximates efficient search.
\newblock In \emph{EC'18}, pages 287--302, 2018.

\bibitem[Koszegi(2014)]{Koszegi14}
Botond Koszegi.
\newblock Behavioral contract theory.
\newblock \emph{Journal of Economic Literature}, 52\penalty0 (4):\penalty0
  1075--1118, 2014.

\bibitem[Kovalyov and Pesch(2010)]{KP10}
Mikhail~T. Kovalyov and Erwin Pesch.
\newblock A generic approach to proving {NP}-hardness of partition type
  problems.
\newblock \emph{Discrete Applied Mathematics}, 158:\penalty0 1908--1912, 2010.

\bibitem[Moran(1981)]{Moran81}
Shlomo Moran.
\newblock General approximation algorithms for some arithmetical combinatorial
  problems.
\newblock \emph{Theorertical Computer Science}, 14:\penalty0 289--303, 1981.

\bibitem[Myerson(1981)]{Myerson81}
Roger~B. Myerson.
\newblock Optimal auction design.
\newblock \emph{Mathematics of Operations Research}, 6\penalty0 (1):\penalty0
  58--73, 1981.

\bibitem[Ng et~al.(2010)Ng, Barketau, Cheng, and Kovalyov]{NBCK10}
C.~T. Ng, M.~S. Barketau, T.~C.~E. Cheng, and Mikhail~T. Kovalyov.
\newblock ``{P}roduct partition'' and related problems of scheduling and
  systems reliability: Computational complexity and approximation.
\newblock \emph{European Journal of Operational Research}, 207:\penalty0
  601--604, 2010.

\bibitem[Nutov et~al.(2006)Nutov, Beniaminy, and Yuster]{NutovBY06}
Zeev Nutov, Israel Beniaminy, and Raphael Yuster.
\newblock A (1-1/\emph{e})-approximation algorithm for the generalized
  assignment problem.
\newblock \emph{Operations Research Letters}, 34\penalty0 (3):\penalty0
  283--288, 2006.

\bibitem[Papadimitriou(2006)]{Papadimitriou06}
Christos~H. Papadimitriou.
\newblock The complexity of finding {N}ash equilibria.
\newblock In Noam Nisan, Tim Roughgarden, \'Eva Tardos, and Vijay~V. Vazirani,
  editors, \emph{Algorithmic Game Theory}, chapter~2, pages 29--51. Cambridge
  University Press, 2006.

\bibitem[Papadimitriou and Roughgarden(2008)]{PapadimitriouR08}
Christos~H. Papadimitriou and Tim Roughgarden.
\newblock Computing correlated equilibria in multi-player games.
\newblock \emph{Journal of the {ACM}}, 55\penalty0 (3):\penalty0 14:1--14:29,
  2008.

\bibitem[Papadimitriou and Steiglitz(1982)]{PapadimitriouS82}
Christos~H. Papadimitriou and Kenneth Steiglitz.
\newblock \emph{Combinatorial optimization: Algorithms and complexity}.
\newblock Prentice-Hall, 1982.

\bibitem[{Royal Swedish Academy of Sciences}(2016)]{Nobel16}
{Royal Swedish Academy of Sciences}.
\newblock {S}cientific background on the 2016 {N}obel {Pr}ize in {E}conomic
  {S}ciences, 2016.

\bibitem[Salani\'e(2005)]{Salanie05}
Bernard Salani\'e.
\newblock \emph{The Economics of Contracts: A Primer}.
\newblock MIT Press, 2005.

\bibitem[Walton and Carroll(2019)]{WaltonC19}
Daniel Walton and Gabriel Carroll.
\newblock When are robust contracts linear?
\newblock Working paper, 2019.

\bibitem[Weinberg(2014)]{Weinberg14}
S.~Matthew Weinberg.
\newblock \emph{Algorithms for strategic agents}.
\newblock PhD thesis, Massachusetts Institute of Technology (MIT), 2014.

\end{thebibliography}

\appendix

\section{Basic properties of IC and $\mathbf{\delta}$-IC contracts} 
\label{appx:prelim}
In this appendix we state and prove several additional results concerning IC and $\delta$-IC contracts. 

\subsection{Intractability of the ellipsoid method}
\label{appx:ellipsoid}

We start by showing the intractability of the ellipsoid method for MIN-PAYMENT, except for the special case of $n=2$.
Recall LP~\eqref{LP:min-pay} for the MIN-PAYMENT problem. Its dual is as follows, where $\{\lambda_{i'}\}$ are $n-1$ nonnegative variables (one for every action other than $i$): 
\begin{align*}
\max~& \sum_{i'\ne i} {\lambda_{i'}(c_i-c_{i'})} \\
\text{s.t.}~& \big(\sum_{i'\ne i} {\lambda_{i'}}\big) - 1 \le \sum_{i'\ne i} {\lambda_{i'}\frac{q_{i',S}}{q_{i,S}}} &&\forall S\subseteq E,q_{i,S}>0, \\
& \lambda_{i'} \ge 0 &&\forall i'\ne i,i'\in [n].
\end{align*}

Consider applying the ellipsoid method to solve LP~\eqref{LP:min-pay} for action $a_i$. The separation oracle problem is: 
Given an instantiation of the dual variables $\{\lambda_{i'}\}$, consider the \emph{combination distribution} $\sum_{i'\ne i} \lambda_{i'} q_{i'}$, which is a convex combination of the product distributions $\{q_{i'}\}$. 
To find a violated constraint of the dual LP we need to find a set~$S$ for which the likelihood ratio between the combination distribution and the product distribution $q_i$ is sufficiently small.

Note that a combination distribution is \emph{not} itself a product distribution.%
\footnote{For example, consider a fifty-fifty mix between the following two product distributions over two items: a point mass on the empty set, and a point mass on the grand bundle. This combination distribution has probability $\frac{1}{2}$ for the empty set and probability $\frac{1}{2}$ for the grand bundle, and the item marginals are $\frac{1}{2}$. A product distribution with item marginals of $\frac{1}{2}$ has probability $\frac{1}{4}$ for every set.} 
Therefore solving the separation oracle is not easy and in fact it is an NP-hard problem even for $n=3$, as formalized in Proposition \ref{pro:separation-NP-hard}. In the special case of $n=2$, the combination distribution \emph{is} a product distribution. By taking $S$ to be all items that are more likely according to $q_i$ than according to the combination distribution, we minimize the likelihood ratio and solve the separation oracle. (This is one way to conclude that OPT-CONTRACT with $n=2$ is tractable.)

\begin{proposition}
	\label{pro:separation-NP-hard}
	Solving the separation oracle of dual LP~\eqref{LP:dual} is NP-hard for $n\ge 3$.
\end{proposition}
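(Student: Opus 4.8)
The plan is a reduction from PARTITION that already works with $n=3$ actions; the case of general $n\geq 3$ then follows by padding the mixture with $n-3$ product distributions of weight~$0$, which do not affect the objective $\rho(S)=\frac{\sum_{i'}\alpha_{i'}q_{i',S}}{q_{i,S}}$ that the oracle must minimize. The engine of the reduction is that $\log q_{\ell,S}$ is affine in the indicator vector $\mathbb 1_S$: writing $\mathrm{lgt}(q)=\log\frac{q}{1-q}$, one has $\log\frac{q_{\ell,S}}{q_{i,S}}=K_\ell+\sum_{j\in S}\big(\mathrm{lgt}(q_{\ell,j})-\mathrm{lgt}(q_{i,j})\big)$ for a constant $K_\ell$ depending only on the item marginals. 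Taking the reference action $q_i$ to be the uniform distribution ($q_{i,j}=\tfrac12$ for all $j$) and choosing the two mixture distributions so that $\mathrm{lgt}(q_{1,j})=\sigma_j$ and $\mathrm{lgt}(q_{2,j})=-\sigma_j$ for item weights $\sigma_j\in(0,1)$ to be specified, the objective collapses to $\rho(S)=A_1e^{\Sigma(S)}+A_2e^{-\Sigma(S)}$ with $\Sigma(S)=\sum_{j\in S}\sigma_j$ and $A_1,A_2>0$ fixed. As a function of the real variable $\Sigma$ this is strictly convex, minimized at $\Sigma^{\star}=\tfrac12\log(A_2/A_1)$, with value $2\sqrt{A_1A_2}\,\cosh(\Sigma-\Sigma^{\star})$.

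Given a PARTITION instance with positive integers $s_1,\dots,s_m$, $\sum_j s_j=2N$, set $\sigma_j=s_j/(2N)$, so that a yes-instance is exactly one admitting an $S$ with $\Sigma(S)=\tfrac12$, while any achievable $\Sigma(S)\neq\tfrac12$ differs from $\tfrac12$ by at least $\tfrac1{2N}$. Choosing the mixture weights $\alpha_1,\alpha_2$ so that $\Sigma^{\star}=\tfrac12$ (i.e. $A_2/A_1=e$), a yes-instance gives $\rho^{\star}=2\sqrt{A_1A_2}$, whereas a no-instance gives $\rho^{\star}\geq 2\sqrt{A_1A_2}\cosh(\tfrac1{2N})>2\sqrt{A_1A_2}$. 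To express this as the separation oracle for dual LP~\eqref{LP:dual} one fixes the common scale $\Lambda=\sum_{i'}\lambda_{i'}$ of the dual variables: the constraints $(\sum_{i'}\lambda_{i'})-1\leq\sum_{i'}\lambda_{i'}q_{i',S}/q_{i,S}$ all hold iff $\rho^{\star}\geq 1-\tfrac1\Lambda$, and (since one checks $2\sqrt{A_1A_2}<1$) $\Lambda$ can be chosen so that $1-\tfrac1\Lambda$ lies strictly between the yes- and no-values; a single oracle call then decides PARTITION.

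The step that requires genuine care — and the main obstacle — is precision. The ``ideal'' probabilities $q_{1,j}=\frac{e^{\sigma_j}}{1+e^{\sigma_j}}$, $q_{2,j}=\frac{1}{1+e^{\sigma_j}}$ and the weights are transcendental, so they must be replaced by rationals with polynomially many bits; since $\mathrm{lgt}$ is Lipschitz on the range $(\tfrac1{1+e},\tfrac{e}{1+e})$ in play, rounding each value to $\mathrm{poly}(m,\log N)$ bits perturbs $\Sigma^{\star}$ and every $\Sigma(S)$ by far less than $\tfrac1{2N}$, so the yes/no separation survives (it stays of order $\sqrt{A_1A_2}/N^2$, detectable with $\mathrm{poly}$ bits of output precision). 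One also checks $q_{i,S}>0$ for every $S$ so that $\rho^{\star}$ is well defined, and notes that the argument genuinely uses $n\geq 3$: for $n=2$ the mixture is itself a product distribution and $\rho(S)$ is minimized coordinatewise, consistent with the tractability of the $n=2$ case noted in the text. Finally I would remark that a different reduction — one product distribution per clause of a 3-CNF formula, with item probabilities $0,\tfrac12,1$ dictated by the literals, exploiting that such a product vanishes on $S$ iff $S$ satisfies the clause, so the mixture vanishes iff $S$ satisfies the whole formula — needs no precision argument and upgrades the conclusion to strong NP-hardness, at the cost of producing instances whose number of actions grows with the formula.
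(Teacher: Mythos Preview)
Your reduction is correct, and the precision discussion---while terse---identifies the right obstacle and the right fix (the yes/no gap is $\Theta(1/N^2)$, and $\sqrt{A_1A_2}$ stays bounded below by a positive constant since $\sum_j\sigma_j=1$ forces $K_1+K_2\ge -1/4$, so rounding each probability to $\mathrm{poly}(m,\log N)$ bits preserves the separation). The paper, however, does not prove this proposition directly at all. Its entire argument reads: ``it is enough to point the reader to Corollary~\ref{cor:implement-cost-hard}, which establishes the NP-hardness of MIN-PAYMENT.'' The implicit logic is a Turing reduction: a polynomial-time separation oracle for dual LP~\eqref{LP:dual} would, via the ellipsoid method, yield a polynomial-time algorithm for MIN-PAYMENT; since MIN-PAYMENT is NP-hard already for $n=3$ (shown in Appendix~\ref{appx:constant} by reduction from MIN-MAX PRODUCT PARTITION), so is the oracle.

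Your approach is a direct, self-contained many-one reduction that exposes the combinatorial structure of the oracle problem---after the logit transform it becomes a subset-sum minimization. The cost is the precision analysis, which the paper's route avoids entirely because its underlying reduction uses rational probabilities $\tfrac{1}{a_j+1}$ from the outset. The paper's proof is one sentence here but leans on a separate nontrivial reduction it needs anyway; yours stands alone and gives the stronger many-one (rather than Turing) hardness. Your closing remark about a clause-based SAT construction with probabilities in $\{0,\tfrac12,1\}$ is exactly the SAT-setting device the paper deploys in Section~\ref{sec:hardness} for its inapproximability results.
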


\begin{proof}
	Rather than prove Proposition~\ref{pro:separation-NP-hard} directly, it is enough to point the reader to Corollary~\ref{cor:implement-cost-hard}, which establishes the NP-hardness of MIN-PAYMENT. 
\end{proof}

\begin{remark}
	Proposition \ref{pro:separation-NP-hard} immediately holds for $\delta$-IC as well, i.e., for the separation oracle of dual LP~\eqref{LP:dual-tight}. This dual corresponds to primal LP~\eqref{LP:min-pay-relax} solving MIN-PAYMENT for $\delta$-IC contracts. This is simply because the separation oracle problem of dual LP~\eqref{LP:dual-tight} is identical to that of  dual LP~\eqref{LP:dual}.
\end{remark}

\subsection{Implementability and tractability of separable contracts}
\label{appx-sub:delta-IC-properties}

Next we state and prove two results concerning implementability and computability of $\delta$-IC contracts. Proposition~\ref{pro:delta-implementable} characterizes $\delta$-implementability and Proposition~\ref{pro:separable-poly-time} establishes tractability of $\delta$-IC separable contracts.

\begin{proposition}
	\label{pro:delta-implementable}
	For every $\delta>0$, every action $a_i$ can be $\delta$-implemented up to tie-breaking.
\end{proposition}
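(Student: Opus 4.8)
The plan is to express the existence of a $\delta$-implementing contract for $a_i$ as feasibility of a linear program, and to prove feasibility via LP duality; the crux is that the $\delta$-relaxation destroys the unbounded direction that the exact-IC dual may possess.

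First I would recall the $\delta$-relaxed MIN-PAYMENT LP for $a_i$, namely LP~\eqref{LP:min-pay-relax}: minimize $\sum_S q_{i,S}p_S$ subject to $(1+\delta)\big(\sum_S q_{i,S}p_S\big)-c_i \ge \sum_S q_{i',S}p_S - c_{i'}$ for all $i'\ne i$, and $p_S\ge 0$. As observed in the proof of Theorem~\ref{thm:FPTAS}, every feasible point of this LP is a contract under which $a_i$ is $\delta$-IC, so (up to the tie-breaking convention, discussed below) it suffices to show this LP is feasible, i.e.\ has a finite optimum. I would then pass to its dual, LP~\eqref{LP:dual-tight}, with nonnegative variables $\{\lambda_{i'}\}_{i'\ne i}$, and note that $\lambda\equiv 0$ is dual-feasible, since there every dual constraint reads $(1+\delta)(0-1)\le 0$; hence the dual is feasible, and by LP duality the primal is feasible iff the dual objective is bounded above.

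The main step — and, I expect, the only one that is not bookkeeping — is to show the dual is bounded. I would argue its feasible region is a bounded polytope by showing its recession cone is trivial: a recession direction is a vector $d\ge 0$ with $(1+\delta)\sum_{i'\ne i}d_{i'}\le \sum_{i'\ne i}d_{i'}\,q_{i',S}/q_{i,S}$ for every $S$ with $q_{i,S}>0$; multiplying the $S$-th inequality by $q_{i,S}$, summing over all such $S$, and using $\sum_S q_{i,S}=1$ together with $\sum_S q_{i',S}=1$, gives $(1+\delta)\sum_{i'}d_{i'}\le \sum_{i'}d_{i'}$, hence $\delta\sum_{i'}d_{i'}\le 0$, so $d=0$ because $\delta>0$ and $d\ge 0$. (The contrast with exact IC is instructive: without the factor $(1+\delta)$, a direction $d=w$ with $\sum_{i'}w_{i'}q_{i'}=q_i$ turns this inequality into an identity and so imposes no bound — precisely the obstruction to implementing every action exactly.) Boundedness of the dual then yields a feasible primal point $p^\star$, a contract that $\delta$-incentivizes $a_i$ with finite expected payment.

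It remains to handle the ``up to tie-breaking'' clause: under $p^\star$ the action $a_i$ is $\delta$-IC, and if other actions are also $\delta$-IC then the standard principal-favoring tie-breaking convention resolves the agent's choice in favor of $a_i$; if one insists on uniqueness, a small generic perturbation of the payments of $p^\star$ can be used to single out $a_i$ without destroying $\delta$-incentive compatibility. The only delicate point in the whole argument is the recession-cone computation; everything else is routine.
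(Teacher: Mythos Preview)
Your proof is correct and essentially follows the paper's approach: both arguments set up the $\delta$-relaxed LP, pass to the dual, and use that $(1+\delta)\sum_{i'}\lambda_{i'}\le \sum_{i'}\lambda_{i'}$ (obtained by multiplying each dual constraint by $q_{i,S}$ and summing over~$S$) forces $\lambda\equiv 0$. The only cosmetic difference is that the paper writes the primal as a pure feasibility LP (objective~$0$), so its dual constraints are already homogeneous and the $\lambda\equiv 0$ conclusion applies directly to the feasible set rather than to the recession cone; your route through LP~\eqref{LP:min-pay-relax} and LP~\eqref{LP:dual-tight} arrives at the identical computation one step later. (One minor technical point: when summing only over $S$ with $q_{i,S}>0$ you should use $\sum_S q_{i',S}\le 1$ rather than $=1$, but the inequality goes the right way and your conclusion is unaffected.)
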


\begin{proof}
	Action $a_i$ can be $\delta$-implemented if and only if LP \ref{LP:delta-implement} has a feasible solution. 
	\begin{eqnarray}
	\min & 0 & \label{LP:delta-implement}\\
	\text{s.t.} & (1+\delta)\left(\sum_{S\subseteq E} {q_{i,S} p_S}\right) - c_i \ge \sum_{S\subseteq E} {q_{i',S} p_S} - c_{i'} & \forall i'\ne i,i'\in[n] \nonumber\\
	& p_S \ge 0 & \forall S\subseteq E.\nonumber
	\end{eqnarray}
	Consider the dual:
	\begin{eqnarray}
	\max & \sum_{i'\ne i} {\lambda_{i'}(c_i-c_{i'})} & \label{LP:dual-delta-implement}\\
	\text{s.t.} & (1+\delta)q_{i,S}\sum_{i'\ne i} {\lambda_{i'}} \le \sum_{i'\ne i} {\lambda_{i'}q_{i',S}} & \forall S\subseteq E,q_{i,S}>0 \nonumber\\
	& \lambda_{i'} \ge 0 & \forall i'\ne i,i'\in [n].\nonumber
	\end{eqnarray}
	Since $q_i$ and $\{q_{i'}\}$ are distributions and $\delta>0$, the only feasible solution to the dual LP~\eqref{LP:dual-delta-implement} is $\lambda_{i'}=0$ for every $i'\ne i$. The dual is feasible and bounded, hence the primal must be feasible, completing the proof.
\end{proof}

\begin{remark}
Proposition \ref{pro:delta-implementable} may seem surprising at first glance, but it is arguably less striking in comparison to implementability by IC contracts. Consider the computational problem IMPLEMENTABLE: The input is a succinct principal-agent setting and an action $a_i$, and the output is whether $a_i$ is implementable by an IC contract. The contract theory literature has characterized implementable actions as those whose distribution is \emph{not} a convex combination of other distributions with a lower combined cost (see,e.g.,~\cite{DRT18}). In general, for any set of product distributions corresponding to actions $a_{i'}\ne a_i$, only a trivial family of convex combinations preserve the product structure required in order to reconstruct the distribution of $a_i$; thus the answer to IMPLEMENTABLE is almost always YES.
\end{remark}

\begin{proposition}
	\label{pro:separable-poly-time}
	Let $\delta\ge 0$. Given a principal-agent setting, an optimal linear (resp., separable) $\delta$-IC contract can be found in polynomial time.
\end{proposition}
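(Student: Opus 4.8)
Let $\delta \ge 0$. Given a principal-agent setting, an optimal linear (resp., separable) $\delta$-IC contract can be found in polynomial time.

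---

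The plan is to handle the two cases separately, since the linear case reduces to a one-dimensional optimization and the separable case to a polynomially-sized linear program.

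For linear contracts, the idea is to exploit the geometric picture described in Section~\ref{sec:approx}: a linear contract is parametrized by a single $\alpha \in [0,1]$, and the agent's expected utility for action $a_i$ is the affine function $\alpha R_i - c_i$. The upper envelope of these $n$ lines is piecewise linear with at most $n$ breakpoints, and the breakpoints are the pairwise intersection points $\alpha_{i,i'} = (c_i - c_{i'})/(R_i - R_{i'})$, all of which are explicitly computable from the input (there are at most $\binom{n}{2}$ of them). First I would compute all these candidate values, sort them, and together with $0$ and $1$ obtain a partition of $[0,1]$ into $O(n^2)$ subintervals on each of which the identity of the (exactly) incentivized action is fixed. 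For the $\delta$-IC relaxation, the set of $\alpha$ that $\delta$-incentivize a given action $a_i$ is itself an interval (or empty), with endpoints obtained by shifting the exact thresholds; more precisely, $\alpha$ $\delta$-incentivizes $a_i$ iff $\alpha R_i - c_i + \delta \ge \alpha R_{i'} - c_{i'}$ for all $i'$, a conjunction of linear inequalities in $\alpha$, hence an interval with explicitly computable endpoints. On each such interval the principal's payoff $(1-\alpha)R_i$ is affine and monotone decreasing in $\alpha$, so it is maximized at the left endpoint. Thus I would, for each action $a_i$, compute its $\delta$-IC interval, evaluate $(1-\alpha)R_i$ at its left endpoint, and return the best action/$\alpha$ pair over all $i$. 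This is $O(n^2)$ arithmetic operations on poly$(n,m)$-bit numbers.

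For separable contracts, the contract is an $m$-vector $(p_1,\dots,p_m)$ of nonnegative per-item payments, inducing $p_S = \sum_{j \in S} p_j$. The key observation is that, although the outcome space is exponential, the expected payment to the agent under action $a_i$ is $\sum_S q_{i,S} p_S = \sum_{j \in M} q_{i,j} p_j$ by linearity of expectation — a linear function of the $m$ variables $p_j$ with explicitly known coefficients $q_{i,j}$. Hence for each target action $a_i$ I would write the polynomially-sized LP: minimize $\sum_j q_{i,j} p_j$ (or, to directly maximize payoff, maximize $R_i - \sum_j q_{i,j} p_j$) subject to the $\delta$-IC constraints $\sum_j q_{i,j} p_j - c_i \ge \sum_j q_{i',j} p_j - c_{i'} - \delta$ for every $i' \ne i$ (for the normalized model; in the unnormalized model replace $+\delta$ by the multiplicative $(1+\delta)$ form from \eqref{eq:delta-IC-orig}), together with $p_j \ge 0$ and the normalization constraint $\sum_j q_{i,j} p_j \le R_i$. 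This LP has $m$ variables and $O(n)$ constraints, so it is solvable in polynomial time. Iterating over all $n$ choices of the incentivized action $a_i$ and taking the maximum of $R_i$ minus the LP optimum yields the best separable $\delta$-IC contract. A minor point to address is tie-breaking: since we are maximizing the principal's payoff directly within the feasible region for each fixed incentivized action, the standard tie-breaking-in-favor-of-the-principal assumption is automatically respected by taking the max over $i$.

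I do not expect a genuine obstacle here — the result is essentially a bookkeeping exercise. The one place requiring a little care is ensuring that ``$a_i$ is $\delta$-IC and survives tie-breaking'' is captured correctly: feasibility of the per-$i$ LP certifies $\delta$-implementability of $a_i$, and maximizing over $i$ the value $R_i$ minus the minimum expected payment implements the tie-break convention, exactly paralleling how OPT-CONTRACT reduces to $n$ instances of MIN-PAYMENT. For the linear case, the only subtlety is that the $\delta$-IC feasible set for a fixed action could be empty (the action is not $\delta$-linearly-implementable), which is detected directly when its interval comes out empty, and by Proposition~\ref{pro:delta-implementable} at least the final (welfare-maximizing) action is always $\delta$-implementable — though not necessarily by a \emph{linear} contract, so I would simply skip actions with empty intervals. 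Both algorithms are manifestly polynomial in $n$ and $m$ (and do not even depend on $1/\delta$), which completes the proof.
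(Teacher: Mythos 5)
Your proof is correct and follows the same approach as the paper's: for each candidate action $a_i$, formulate a polynomial-sized LP (one variable $\alpha$ in the linear case, $m$ variables $p_1,\dots,p_m$ in the separable case) with $n-1$ $\delta$-IC constraints, and iterate over all $n$ actions. The paper's proof is a one-line LP statement; your version merely unwinds the one-variable LP into its explicit interval/endpoint form, which is an equivalent (and slightly more self-contained) presentation of the same argument.
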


\begin{proof}
	The problem of finding an optimal linear (resp., separable) $\delta$-IC contract for incentivizing any action $a_i$ can be formulated as a polynomial-sized LP with 1 variable (resp., $m$ variables) representing the contract's parameter $\alpha$ (resp., the item payments $\{p_j\}$), and $n-1$ $\delta$-IC constraints. 
\end{proof}

\subsection{Connections between IC and $\delta$-IC contracts}
\label{appx-sub:IC-vs-delta}

We conclude this appendix with two results on the connection between IC and $\delta$-IC contracts. 
In Proposition \ref{pro:from-delta-to-IC} we show that from any contract that $\delta$-incentivizes action $a_i$, we can derive an IC contract with approximately the same expected payoff for the principal, up to a multiplicative factor of $(1-\sqrt{\delta})$ and an additive loss of $(\sqrt{\delta}-\delta)$. 
This is achieved by transferring to the agent a small fraction of the principal's expected payoff	from the original $\delta$-IC contract. Intuitively, such a transfer makes the agent's incentives ``more aligned'' with those of the principal, thus achieving incentive compatibility.  
In Proposition~\ref{pro:delta-IC-much-better} we show that the additive loss in Proposition~\ref{pro:from-delta-to-IC} is necessary: there can be a constant-factor gap between what a $\delta$-IC contract can achieve for the principal and what the optimal IC contract can achieve, even as $\delta\to 0$.  
Together, Propositions~\ref{pro:from-delta-to-IC} and~\ref{pro:delta-IC-much-better} paint an interesting and complete picture in comparison to auctions, where the relation between optimal $\epsilon$-IC auctions and optimal IC auctions is an open question. 

To state Proposition~\ref{pro:from-delta-to-IC},
denote by $\ell_{\alpha=1}$ the linear contract with parameter $\alpha=1$ (that transfers the full reward from principal to agent).

\begin{proposition}
	\label{pro:from-delta-to-IC} 
	Fix a principal-agent setting and $\delta>0$. Let $p$ be a contract that $\delta$-incentivizes action $a_i$. 
	Then the IC contract $p'$ defined as $(1-\sqrt{\delta})p + \sqrt{\delta}\ell_{\alpha=1}$ achieves for the principal expected payoff of at least $(1-\sqrt{\delta})(R_i-p_i) - (\sqrt{\delta}-\delta)$, where $R_i-p_i$ is the expected payoff of contract $p$.
\end{proposition}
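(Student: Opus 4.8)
The plan is to work directly with the mixed contract $p' = (1-\sqrt{\delta})p + \sqrt{\delta}\,\ell_{\alpha=1}$ and track what each action yields under it. First I would record the easy structural facts: assuming $\delta < 1$ (which we may, as otherwise the claimed bound is vacuous), $\sqrt{\delta}\in[0,1]$, so $p'$ is a convex-type combination of two limited-liability contracts and hence itself has non-negative payments; moreover it stays succinct, since $p'_S = (1-\sqrt{\delta})p_S + \sqrt{\delta}r_S$ is the polynomially-supported $p$ plus a separable contract. Next, for every action $a_{i'}$ the expected payment under $p'$ is $(1-\sqrt{\delta})p_{i'} + \sqrt{\delta}R_{i'}$, so the agent's expected utility from $a_{i'}$ is $(1-\sqrt{\delta})p_{i'} + \sqrt{\delta}R_{i'} - c_{i'}$ and the principal's expected payoff is $R_{i'} - (1-\sqrt{\delta})p_{i'} - \sqrt{\delta}R_{i'} = (1-\sqrt{\delta})(R_{i'}-p_{i'})$.

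Now let $a_j$ be the action implemented by $p'$ (the agent's utility-maximizing action, ties broken toward the principal). Two inequalities are available. Since $a_j$ maximizes the agent's utility under $p'$, comparing it to $a_i$ gives $(1-\sqrt{\delta})p_j + \sqrt{\delta}R_j - c_j \ge (1-\sqrt{\delta})p_i + \sqrt{\delta}R_i - c_i$, which rearranges to $(1-\sqrt{\delta})(p_j-p_i) - (c_j-c_i) \ge \sqrt{\delta}(R_i-R_j)$. And since $p$ $\delta$-incentivizes $a_i$, we have $p_i - c_i \ge p_j - c_j - \delta$, i.e.\ $c_j - c_i \ge (p_j-p_i) - \delta$. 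Substituting the second relation into the first to eliminate the cost difference of $a_j$ yields $\sqrt{\delta}(R_i-R_j) \le -\sqrt{\delta}(p_j-p_i) + \delta$, hence $(R_i - p_i) - (R_j - p_j) \le \sqrt{\delta}$, i.e.\ $R_j - p_j \ge (R_i - p_i) - \sqrt{\delta}$.

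Combining this with the payoff formula from the first paragraph, the principal's expected payoff under $p'$ is $(1-\sqrt{\delta})(R_j - p_j) \ge (1-\sqrt{\delta})\big((R_i-p_i) - \sqrt{\delta}\big) = (1-\sqrt{\delta})(R_i - p_i) - (\sqrt{\delta}-\delta)$, which is the claimed bound. Finally I would note that $p'$ is indeed an IC (in fact IR) contract: $a_1$ has zero cost and non-negative expected payment under $p'$, so the agent's utility at the implemented action is at least $0$.

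The only real content is the two-line algebraic elimination in the second paragraph, and the subtlety there is that $R_i - R_j$ can have either sign, so one cannot argue monotonically; it is precisely the $\delta$ slack from $\delta$-IC together with the mixing weight $\sqrt{\delta}$ (chosen so that $\sqrt{\delta}\cdot\sqrt{\delta} = \delta$) that makes the cross terms collapse to exactly the stated additive loss. Everything else is bookkeeping, so I expect no further obstacles.
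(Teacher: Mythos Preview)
Your argument is correct. The algebra in the second paragraph checks out: combining the agent's optimality of $a_j$ under $p'$ with the $\delta$-IC of $a_i$ under $p$ yields $R_j - p_j \ge (R_i - p_i) - \sqrt{\delta}$, and multiplying by $(1-\sqrt{\delta})$ gives the stated bound.

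Your route differs from the paper's. The paper argues by a case split over every competing action $a_{i'}$, according to whether $R_i - (1+\sqrt{\delta})p_i > R_{i'} - p_{i'}$. In the first case it uses the $\delta$-IC constraint to show $a_i$ is still strictly preferred under $p'$; in the second case it uses the case hypothesis alone to lower-bound the principal's payoff from $a_{i'}$ by $(1-\sqrt{\delta})(R_i - p_i) - (1-\sqrt{\delta})\sqrt{\delta}\,p_i$. You instead fix the single action $a_j$ actually chosen under $p'$ and combine the two incentive inequalities directly, eliminating the cost difference to get the key bound on $R_j - p_j$. This avoids the case analysis entirely and is a bit cleaner; the paper's split, on the other hand, yields the marginally sharper additive loss $(1-\sqrt{\delta})\sqrt{\delta}\,p_i$ (depending on $p_i$ rather than the worst case $p_i \le 1$), though only the weaker bound $\sqrt{\delta}-\delta$ is claimed in the statement.
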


\begin{proof}
	The expected payoff of action $a_i$ under the interpolation contract $p'$ is
	\[
	R_{i} -  [(1-\sqrt{\delta}) p_{i} + \sqrt{\delta} R_{i}] = (1-\sqrt{\delta}) (R_{i} -  p_{i}).
	\]
	We will argue that for every action $a_{i'}$ with $i' \neq i$, either $i'$ is not incentivized by $p'$ (Case 1) or its expected payoff is sufficiently high (Case 2).
	
	{\bf Case 1:} Assume $R_{i} - (1 + \sqrt{\delta}) p_{i} > R_{i'} - p_{i'}.$
	We claim that in this case $a_i$ is preferred over $a_{i'}$ under contract $p'$. Namely,
	\begin{align*}
	(1-\sqrt{\delta})p_{i} + \sqrt{\delta} R_{i} - c_{i} 
	&= (1+\delta) p_{i} - c_{i} + \sqrt{\delta}(R_{i} - (1+\sqrt{\delta})p_{i})\\
	&\geq p_{i'} - c_{i'}  + \sqrt{\delta}(R_{i} - (1+\sqrt{\delta})p_{i})\\
	&> p_{i'} - c_{i'}  + \sqrt{\delta}(R_{i'} - p_{i'})\\
	&= (1-\sqrt{\delta})p_{i'} + \sqrt{\delta} R_{i'} - c_{i'},
	\end{align*}
	where we used that action $a_i$ is $\delta$-incentivized under $p$ for the first inequality, and the second inequality holds by assumption because we are in Case 1.
	
	{\bf Case 2:} Assume now that $R_{i} - (1 + \sqrt{\delta}) p_{i} \leq R_{i'} - p_{i'}.$
	In this case the expected payoff achieved by action $a_{i'}$ is high. Namely,
	\begin{align*}
	R_{i'} - (1-\sqrt{\delta})p_{i'} - \sqrt{\delta} R_{i'} 
	&= (1-\sqrt{\delta}) (R_{a'_i} - p_{a'_i}) \\
	&\geq (1-\sqrt{\delta}) (R_{i} - (1 + \sqrt{\delta}) p_{i})\\
	&= (1-\sqrt{\delta}) (R_{i} - p_{i}) - (1-\sqrt{\delta}) \sqrt{\delta} p_{i},
	\end{align*}
	where the inequality holds by assumption because we are in Case 2.
\end{proof}

\begin{proposition}
	\label{pro:delta-IC-much-better}
	For any $\delta\in(0,\nicefrac{1}{2}]$, there exists a principal-agent setting where the optimal contract extracts expected payoff $OPT$ but a $\delta$-IC contract extracts expected payoff $\ge \frac{4}{3}OPT$ (and $OPT$ can be arbitrarily large).
\end{proposition}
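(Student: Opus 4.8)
The plan is to exhibit a single explicit (unnormalized) instance realizing the claimed gap. The natural candidate is the geometric ``gap setting'' of Section~\ref{sub:gap-avg-sat} (built as in \cite{DRT18}): one item, $c$ actions, where $a_i$ assigns probability $\gamma^{c-i}$ to the item, has cost $c_i = \gamma^{-(i-1)}-i+(i-1)\gamma$, and the item-reward is $\gamma^{-(c-1)}$, everything multiplied by an overall scale $t>0$. Since there is a single item, every contract is equivalent to a linear one, so I would work entirely in the ``upper-envelope'' picture of Section~\ref{sec:approx}: the line for $a_i$ is $\alpha R_i-c_i$, the breakpoints are $\alpha_{i-1}=1-\gamma^{i-1}$, and $a_i$ is incentivized exactly on the segment $[\alpha_{i-1},\alpha_i)$.

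Next I would pin down the two quantities. For the IC optimum: the cheapest IC linear contract implementing $a_i$ takes $\alpha=\alpha_{i-1}=1-\gamma^{i-1}$ and yields the principal $(1-\alpha_{i-1})R_i=\gamma^{i-1}\cdot t\gamma^{-(i-1)}=t$; as this is the same for every $i$, $\mathrm{OPT}=t$. For the $\delta$-IC optimum: relaxing to $\delta$-IC widens each envelope segment to the left, so $a_i$ is $\delta$-IC at every $\alpha\ge\alpha_i^\delta:=\max_{j<i}\frac{c_i-c_j-\delta}{R_i-R_j}$. Using $c_i-c_j=(R_i-R_j)-t(i-j)(1-\gamma)$ and $R_i-R_j=t\gamma^{1-i}(1-\gamma^{i-j})$, the payoff of the cheapest $\delta$-IC contract implementing $a_i$ simplifies — after cancelling $R_i$ and writing $1-\gamma^{k}=(1-\gamma)\sum_{\ell<k}\gamma^{\ell}$ — to $\min_{1\le k\le i-1}\frac{tk+\delta/(1-\gamma)}{\sum_{\ell<k}\gamma^{\ell}}$, which is maximized at $a_2$ (the $k=1$ term), giving $t+\frac{\delta}{1-\gamma}$ whenever $\alpha_2^\delta\ge0$; in the complementary regime (the scaled cost $tc_2$ is $\le\delta$) one simply pays nothing and $a_2$ is already $\delta$-IC, with payoff $R_2=t/\gamma$. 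In either regime the best $\delta$-IC contract strictly beats $\mathrm{OPT}=t$.

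It then remains to calibrate $\gamma$, $t$ and $c$ so that the ratio of the $\delta$-IC value to $\mathrm{OPT}$ is at least $\tfrac43$, the linear parameter actually used is feasible (lies in $[0,1]$), the structural side conditions hold (no dominated actions, $R_i-c_i\ge0$, and tie-breaking among $\delta$-IC actions returns the intended payoff), and $\mathrm{OPT}=t$ is as large as one wishes; one may then translate to a normalized statement via the scale-invariance of the $\delta$-IC notion discussed in Section~\ref{sub:gap-avg-sat}, and finally invoke Lemma~\ref{lem:from-delta-to-IR} to make the exhibited contract $\delta$-IC and IR at an additive $\delta$ loss. I expect this calibration to be the crux. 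The $\delta$-IC advantage produced above is additive, of order $\delta/(1-\gamma)$, and does not grow when the instance is scaled, so forcing the multiplicative ratio to stay at $\tfrac43$ while driving $\mathrm{OPT}=t\to\infty$ pushes $\gamma\to1$, and one must check the feasibility constraint $\alpha^\delta\ge0$ can still be met — which is likely to require using a gap setting with more actions and $\delta$-implementing an action higher up the chain (so that the cumulative cost crosses the $\delta$ threshold at the right place) rather than the bare two-action instance. Verifying that such a choice exists for every $\delta\in(0,\tfrac12]$ is the step I would spend the most effort on.
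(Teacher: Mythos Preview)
Your construction (the single-item gap setting) is a different route from the paper's: the paper builds a two-action, \emph{two-item} instance with a carefully chosen product distribution so that the likelihood-ratio-maximizing outcome $\{2\}$ has probability exactly~$1$ under~$a_2$ and a tiny probability under~$a_1$, then directly computes $OPT=\epsilon$ (from either action) and exhibits a single-outcome contract that $\delta$-incentivizes~$a_2$ with payoff $\tfrac{4}{3}\epsilon$. No upper-envelope machinery is used.

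The gap in your argument is the $\delta$-IC notion you compute with. You use the \emph{additive} constraint $\alpha R_i-c_i+\delta\ge\alpha R_j-c_j$ throughout, which is why the $\delta$-IC advantage you derive, $t+\delta/(1-\gamma)$, is additive and does not scale with~$t$. This is precisely the obstruction you flag in your ``calibration'' paragraph. Your proposed fix---more actions, $\delta$-implementing a higher action---does not help: the $k=1$ constraint against the immediately preceding action is present for every $a_i$ with $i\ge2$ and always caps the payoff at $t+\delta/(1-\gamma)$, so the maximum over~$i$ of your $\min_k$ expression is still $t+\delta/(1-\gamma)$; and in the ``complementary regime'' $c_i\le\delta$ one can check that the unscaled cost $1/\gamma^{i-1}-i+(i-1)\gamma$ stays bounded away from zero once $1/\gamma^{i-1}\ge4/3$, so $t$ is again bounded by~$O(\delta)$. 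Invoking ``scale-invariance'' at the end cannot rescue this, because the additive notion is exactly the one that is \emph{not} scale-invariant.

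The right move is to work with the \emph{multiplicative} notion $(1+\delta)p_i-c_i\ge p_{i'}-c_{i'}$ from the outset (this is what the paper uses in its proof, and what Section~\ref{sub:gap-avg-sat} prescribes for unnormalized settings). With that notion the two-action gap setting already suffices: the cheapest $\delta$-IC linear contract for~$a_2$ has $\alpha=\tfrac{(1-\gamma)^2}{1+\delta-\gamma}$ and payoff $\tfrac{\delta/\gamma+1-\gamma}{1+\delta-\gamma}$, which exceeds $\tfrac{4}{3}$ whenever $\delta\ge\tfrac{\gamma(1-\gamma)}{3-4\gamma}$---satisfiable for every $\delta\in(0,\tfrac12]$ by taking~$\gamma$ small enough. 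Scaling then gives arbitrary~$OPT$ with the ratio preserved. So your instance works, but only after switching notions; the ``more actions'' detour is unnecessary.
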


\begin{proof}
	Consider the following principal-agent setting parameterized by $\delta$ and $\epsilon>0$. 
	Let $\mathcal{M}=\epsilon/\delta$. 
	There are $n=2$ actions and $m=2$ items. The probabilities of the items given the actions is described by the following matrix
	$$
	\begin{pmatrix}
	\frac{1}{4} & \frac{2\epsilon}{3(\mathcal{M}+\epsilon)} \\
	0 & 1
	\end{pmatrix},
	$$
	where the first column corresponds to item 1 and the second column to item 2.
	Set the rewards to be $r_1=\frac{4\epsilon}{3}$ for item 1 and $r_2=\mathcal{M}+\epsilon$ for item 2 (notice $r_1<r_2$), and the costs to be $c_1=0$ and $c_2=\mathcal{M}-\frac{\mathcal{M}\epsilon}{2(\mathcal{M}+\epsilon)}>0$. 
	Observe that the expected rewards are $R_1=\epsilon$ and $R_2=\mathcal{M}+\epsilon$. 
	
	\begin{claim}
		\label{cla:delta-better-1}
		$OPT=\epsilon$.
	\end{claim}
	
	\begin{myproof}[Proof of Claim \ref{cla:delta-better-1}]
		The expected payoff from letting the agent chose the zero-cost action $a_1$ is $R_1=\epsilon$. Can we get any better by incentivizing $a_2$?
		The optimal contract for incentivizing the costly action in a 2-action setting is well-understood (see e.g.~\cite{DRT18}): The only positive payment should be for the single subset of items maximizing the likelihood that the agent has chosen action $a_2$; in our case this is the subset $\{2\}$ containing item 2 only. Observe that its probability given action 1 is $\frac{\epsilon}{2(\mathcal{M}+\epsilon)}$. The 2-action characterization also specifies the payment for this outcome, setting it at 
		$p_{\{2\}} = c_2 / \left(1-\frac{\epsilon}{2(\mathcal{M}+\epsilon)}\right)=\mathcal{M}$. Subtracted from $R_2$ we get expected payoff of $\epsilon$ from optimally incentivizing $a_2$.
	\end{myproof}
	
	\begin{claim}
		\label{cla:delta-better-2}
		Contract $p$ that pays $\mathcal{M}-\frac{\epsilon}{3}$ for outcome $S=\{2\}$ and 0 otherwise $\delta$-incentivizes action $a_2$ with expected payoff $R_2-p_2=\frac{4}{3}\epsilon$. 
	\end{claim}
	
	\begin{myproof}[Proof of Claim \ref{cla:delta-better-2}]
		We show action $a_2$ is $\delta$-IC: The agent's expected utility from $a_1$ is $\frac{\epsilon}{2(\mathcal{M}+\epsilon)}p_2=\frac{\epsilon(3\mathcal{M}-\epsilon)}{6(\mathcal{M}+\epsilon)}$, and from $a_2$ given contract $(1+\delta)p$ it is $(1+\delta)p_2-c_2=(1+\frac{\epsilon}{\mathcal{M}})(\mathcal{M}-\frac{\epsilon}{3})-\mathcal{M}+\frac{\mathcal{M}\epsilon}{2(\mathcal{M}+\epsilon)}=\frac{\epsilon(2\mathcal{M}-\epsilon)}{3\mathcal{M}}+\frac{\mathcal{M}\epsilon}{2(\mathcal{M}+\epsilon)}$. 
		It can be verified that the former is less than the latter for $\delta \le \frac{1}{2}$. 
	\end{myproof}
	
	Putting these claims together completes the proof of Proposition \ref{pro:delta-IC-much-better}.
\end{proof}

\section{Hardness with a constant number of actions}
\label{appx:constant}
In this appendix we show NP-hardness of the two computational problems related to optimal contracts when the number of actions $n$ is constant. Appendices \ref{appx-sub:min-max-prob} and \ref{appx-sub:reduction} prove hardness of $\delta$-OPT-CONTRACT (Proposition~\ref{pro:implement-cost-hard}), from which hardness of $\delta$-MIN-PAYMENT follows by the reduction in Section \ref{sec:prelim} (Corollary~\ref{cor:implement-cost-hard}).

\begin{proposition}
	\label{pro:implement-cost-hard}
	\emph{$\delta$-OPT-CONTRACT} is NP-hard even for $n=3$ actions.
\end{proposition}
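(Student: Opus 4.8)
The plan is to reduce from an $NP$-hard problem that captures the difficulty of optimizing a linear-type objective over an exponentially large outcome space subject to a constant number of incentive constraints. A natural candidate is PARTITION (or a subset-sum variant), encoded via the product-distribution structure with $n=3$ actions and $m$ items. The idea is to build a principal-agent setting where the $\delta$-IC constraints for incentivizing the ``middle'' action $a_2$ essentially require picking out a subset $S \subseteq M$ whose probability under $a_2$ stands in a precise likelihood-ratio relationship to its probabilities under $a_1$ and $a_3$; because products $q_{i,S} = \prod_{j \in S} q_{i,j}\prod_{j \notin S}(1-q_{i,j})$ turn into sums in the log-domain, carefully chosen probabilities let a subset-sum target correspond to whether $a_2$ can be $\delta$-incentivized at payoff exceeding a threshold. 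We want the reduction to be robust: the gap should survive the $\delta$ slack, so the probabilities must be chosen so that the log-likelihood-ratio landscape has a constant-size gap between ``yes'' instances (where some subset hits the target exactly, making $a_2$ cheaply implementable and yielding high principal payoff) and ``no'' instances (where every subset misses, forcing either a much larger payment to the agent or reverting to the trivial action $a_1$).

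**Key steps, in order.** First, I would fix the reduction: given a PARTITION instance with integers $w_1,\dots,w_m$ summing to $W$, set $n=3$, create one item per integer, and define the three product distributions so that $\log(q_{3,j}/q_{2,j})$ and $\log(q_{1,j}/q_{2,j})$ are affine functions of $w_j$ with slopes of opposite sign (up to scaling), so that the dual separation constraint $(\lambda_1+\lambda_3)-1 \le \lambda_1 q_{1,S}/q_{2,S} + \lambda_3 q_{3,S}/q_{2,S}$ for the minimizing $S$ is governed by $\sum_{j\in S} w_j$. Second, I would choose rewards $r_j$ and costs $c_1,c_2,c_3$ (with $c_1=0$, no dominated actions, welfare nonnegativity respected) so that the first-best action is $a_3$ but incentivizing $a_3$ is prohibitively expensive, while incentivizing $a_2$ is profitable \emph{if and only if} MIN-PAYMENT for $a_2$ is low, which happens exactly when the subset-sum target $W/2$ is achievable. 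Third, I would carry out the two-way analysis: in a ``yes'' instance exhibit an explicit low-cost $\delta$-IC contract (in fact IC) for $a_2$ achieving principal payoff above the threshold; in a ``no'' instance, use the dual LP~\eqref{LP:dual-tight} (the strengthened dual, which already accommodates $\delta$) together with Lemma~\ref{lem:sep-FPTAS}-style reasoning to lower-bound the minimum expected payment needed to $\delta$-incentivize $a_2$, showing it exceeds $R_2$ minus the threshold, so that $a_2$ is not worth incentivizing and the best the principal can do is $R_1$ via $a_1$. Finally I would verify the arithmetic is polynomial-precision and that $\delta>0$ can be taken as a fixed small constant (the gap is constant, so any sufficiently small constant $\delta$ works, and indeed one can let $\delta$ be chosen by the reduction).

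**Main obstacle.** The delicate part is engineering the probabilities so that (i) the likelihood-ratio function over subsets genuinely encodes subset-sum with a \emph{constant multiplicative gap} that is not eroded by the $(1+\delta)$ relaxation in the $\delta$-IC constraints, and (ii) the weighted combination distribution $\lambda_1 q_1 + \lambda_3 q_3$ interacting with $q_2$ does not accidentally create a cheap implementation of $a_2$ on ``no'' instances via some unintended subset. Controlling the worst-case subset $S$ in the denominator requires the log-ratios to be monotone in a controlled way and the number-theoretic target to be isolated; I expect this to need a somewhat careful choice (e.g., base-$B$ encodings with $B$ large enough to separate achievable from near-miss subset sums, combined with small perturbations ensuring uniqueness of the optimizing $S$ up to tie-breaking). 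Everything else—the LP duality bookkeeping, the no-dominated-actions and normalization sanity checks, and extending from $n=3$ to all constant $n\ge 3$ by adding dummy dominated-free actions—should be routine given the machinery already developed in Section~\ref{sec:constant}.
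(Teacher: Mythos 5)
Your broad strategy matches the paper's: reduce from a partition-type problem, encode the instance in the product structure of the probability distributions, and make the profitability of the costly action hinge on whether some subset $S$ achieves a precise likelihood-ratio target. But the details diverge in ways that matter, and there is a real gap in how you propose to close the ``no'' case.

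The paper reduces not from PARTITION but from a variant of MIN-MAX PRODUCT PARTITION (MIN-MAX-PROB): given integers $a_1,\dots,a_m$, does some $S$ achieve $\prod_{j\in S}a_j = \sqrt{\prod_j a_j}$? This is the natural fit because product distributions already multiply; setting $q_{1,j}=1/(a_j+1)$ gives $q_{1,S}=\ell\prod_{j\notin S}a_j$ for a fixed constant $\ell$, with no log-domain translation and no precision headaches. Your proposal to make $\log(q_{i,j}/q_{i',j})$ affine in the PARTITION weights $w_j$ would force the $q_{i,j}$ to be exponentials of integer multiples, which is exactly the kind of precision issue you would need to explicitly control; the paper sidesteps it entirely by working multiplicatively from the start. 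More importantly, the paper's choice of $q_2=1-q_1$ (complementary distributions) and $q_3$ uniform (probability $1/2$ on every item except item~1, which has probability~1) creates a symmetry $q_{1,S}=\ell a_{\overline S}$, $q_{2,S}=\ell a_S$, and the crucial ``no''-case lower bound drops out of the AM--GM inequality: $a_S+a_{\overline S}\ge 2\sqrt{\prod_j a_j}$, with equality iff the partition is balanced. Your sketch does not identify any analogue of this step; ``base-$B$ encodings to separate near-miss sums'' is a plausible direction but it is the entire difficulty, not a detail.

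Two smaller points. First, you propose to handle the ``no'' case by invoking the strengthened dual LP~\eqref{LP:dual-tight} and Lemma~\ref{lem:sep-FPTAS}-style reasoning, but that machinery is built for the algorithmic (FPTAS) result, not for proving lower bounds on the minimum payment. The paper's ``no''-case argument is a direct primal argument: for any contract, aggregate the two $\delta$-IC constraints (against $a_1$ and $a_2$), observe the max is at least the average, and apply AM--GM strictly. Nothing from Section~\ref{sec:constant} is needed. Second, you worry about the gap being a ``constant'' that survives the $(1+\delta)$ slack; that concern is misplaced. The paper's threshold is $r - c/(\Delta(1+\delta))$ with $\delta$ explicitly baked in, and the YES/NO separation is a strict rational inequality with polynomial bit-length, which is all that NP-hardness requires. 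The gap need not be constant. You should also note the paper incentivizes $a_3$ (the single costly action, with the other two zero-cost), whereas you propose to incentivize the ``middle'' action $a_2$ against $a_1$ and $a_3$; the paper's choice is what makes the complementary-distribution symmetry usable, and it is not clear your asymmetric bracketing setup admits an equally clean closing argument.
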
  

\begin{corollary}
	\label{cor:implement-cost-hard}
	\emph{$\delta$-MIN-PAYMENT} is NP-hard even for $n=3$ actions.
\end{corollary}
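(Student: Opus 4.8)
The plan is to give a polynomial-time Turing reduction from $\delta$-OPT-CONTRACT to $\delta$-MIN-PAYMENT, following verbatim the reduction from OPT-CONTRACT to MIN-PAYMENT sketched in Section~\ref{sec:prelim}, and to check that it preserves the bound $n=3$ on the number of actions. Concretely, suppose we are given an algorithm $\mathcal{A}$ that solves $\delta$-MIN-PAYMENT on instances with $n=3$ actions. Given a $3$-action succinct principal-agent setting, which we view as a $\delta$-OPT-CONTRACT instance, I would first compute the expected rewards $R_i=\sum_{j\in M} q_{i,j} r_j$ for $i\in\{1,2,3\}$ directly from the input matrix $Q$ and reward vector $r$; this is a polynomial-time computation.

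Next I would call $\mathcal{A}$ once for each action $a_i$ (three calls total), obtaining $p^*_i=OPT_i$, the minimum expected payment with which $a_i$ can be $\delta$-implemented up to tie-breaking, using the convention $p^*_i=\infty$ when $a_i$ is not $\delta$-implementable without loss to the principal. Finally I would output $\max_{i\in\{1,2,3\}}\{R_i-p^*_i\}$. By the definition of $\delta$-OPT-CONTRACT and the observation $OPT=\max_{i\in[n]}\{R_i-OPT_i\}$ recalled in Section~\ref{sec:constant}, this is exactly the optimal $\delta$-IC expected payoff, i.e., the correct answer to the $\delta$-OPT-CONTRACT instance. Since the reduction performs only a constant number of oracle calls plus polynomial-time arithmetic, a polynomial-time algorithm for $\delta$-MIN-PAYMENT with $n=3$ would yield one for $\delta$-OPT-CONTRACT with $n=3$, contradicting Proposition~\ref{pro:implement-cost-hard} (unless $P=NP$); this proves the corollary.

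I do not expect any real obstacle here: the statement is an immediate consequence of Proposition~\ref{pro:implement-cost-hard} together with the problem definitions. The only points requiring a sentence of care are (i) that "NP-hard" for these function problems is understood in the Turing-reduction (FNP-hardness) sense, so invoking the oracle a constant number of times is legitimate; and (ii) that the $\infty$/infeasibility convention and the restriction to payoff-preserving contracts are handled consistently between the two problems, exactly as set up in Section~\ref{sec:prelim}.
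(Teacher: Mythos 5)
Your argument is correct and is exactly the approach the paper takes: the appendix explicitly says that hardness of $\delta$-MIN-PAYMENT "follows by the reduction in Section~\ref{sec:prelim}" from Proposition~\ref{pro:implement-cost-hard}, which is the same Turing reduction ($n$ oracle calls to $\delta$-MIN-PAYMENT, then take $\max_i\{R_i-p^*_i\}$) that you spell out. Your two caveats about FNP-hardness and the $\infty$/infeasibility convention are reasonable bookkeeping points, consistent with how the paper itself treats these problems.
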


\subsection{The computational problem MIN-MAX-PROB}
\label{appx-sub:min-max-prob}

It will be convenient to reduce to $\delta$-OPT-CONTRACT from a computational problem we call MIN-MAX-PROB, which is a variant of MIN-MAX PRODUCT PARTITION \cite{KP10} and thus NP-hard.
\begin{itemize}
	\item Input: A product distribution $q$ over $m$ items such that for every item $j$, its probability $q_j$ is equal to $\frac{1}{a_j+1}$ where $a_j$ is an integer $\in [3,a_{\max}]$ ($\log a_{\max}$ is polynomial in $m$). 
	\item Output: YES iff there exists a subset of items $S^*$ such that $q_{S^*}=\ell A$, where $A=\sqrt{\prod_j a_j}$ and $\ell=\prod_j q_j$. 
\end{itemize}
We now take a closer look at MIN-MAX-PROB. Denote $a_S=\prod_{j\in S}{a_j}$. 

\begin{observation}
	\label{obs:prob-vs-prod}
	The probability of subset $S$ is $q_S = \ell a_{\overline S}$. 
\end{observation}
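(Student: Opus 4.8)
The plan is a one-line computation straight from the definition of the outcome probability of a product distribution. Recall that under the product distribution $q$, an outcome $S\subseteq M$ is realized with probability
\[
q_S=\Big(\prod_{j\in S}q_j\Big)\Big(\prod_{j\notin S}(1-q_j)\Big).
\]
Since $q_j=\frac{1}{a_j+1}$, we have $1-q_j=\frac{a_j}{a_j+1}$, so the first step is simply to substitute and rewrite $q_S=\big(\prod_{j\in S}\frac{1}{a_j+1}\big)\big(\prod_{j\notin S}\frac{a_j}{a_j+1}\big)$.

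Next I would regroup the denominators: every item $j$ contributes exactly one factor $\frac{1}{a_j+1}$ to this expression (on its own if $j\in S$, and hidden inside $\frac{a_j}{a_j+1}$ if $j\notin S$), and the product of these factors over all $m$ items is precisely $\ell=\prod_j q_j$. After pulling $\ell$ out, what remains is the product of the numerators $a_j$ over the items $j\notin S$, i.e.\ $\prod_{j\in\overline S}a_j=a_{\overline S}$. Hence $q_S=\ell\,a_{\overline S}$, as claimed.

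There is no genuine obstacle here; the only bookkeeping point is to make sure that the factors $\frac{1}{a_j+1}$ coming from items \emph{in} $S$ are combined with the $\frac{1}{a_j+1}$ coming from items \emph{not} in $S$, so that together they form $\ell$ ranging over all $m$ items, leaving only the $a_j$'s for $j\notin S$. This identity is the workhorse for the reduction: in particular $q_{S^*}=\ell A$ will be equivalent to $a_{\overline{S^*}}=A=\sqrt{\prod_j a_j}$, which is the product-partition condition we want to encode.
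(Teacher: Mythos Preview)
Your proof is correct and follows the same route as the paper: substitute $1-q_j=\frac{a_j}{a_j+1}=q_ja_j$, pull the factor $\prod_j q_j=\ell$ over all items, and the remaining product $\prod_{j\notin S}a_j$ is $a_{\overline S}$.
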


\begin{proof}
	For every item $j$, the probability it is excluded is
	\begin{equation*}
	1-q_j=1-\frac{1}{a_j+1}=\frac{a_j}{a_j+1}=q_ja_j.
	\end{equation*}
	So the probability of the outcome being precisely $S$ is
	\begin{align*}
	q_{S} 
	&= \left(\prod_{j\in S} q_j \right)\left(\prod_{j\notin S} (1-q_j)\right) \\
	&= \left(\prod_{j\in S} q_j\right) \left(\prod_{j\notin S} q_ja_j\right)\\
	&= \left(\prod_{j=1}^m q_j \right) \left(\prod_{j\notin S} a_j \right)
	= \ell a_{\overline S},
	\end{align*}
as claimed.
\end{proof}

Observation \ref{obs:prob-vs-prod} immediately implies:

\begin{observation}
	\label{obs:min-max}
	For every subset $S$, $a_S+a_{\overline S}=a_S+\frac{A^2}{a_S}\ge 2A$,
	where equality holds iff $a_S=a_{\overline S}=A$. Equivalently, $q_S+q_{\overline S}\ge 2\ell A$, where equality holds iff $q_S=q_{\overline S}=\ell A$.
\end{observation}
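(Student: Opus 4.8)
The plan is to derive both inequalities from the arithmetic-mean--geometric-mean inequality applied to the single pair $a_S$ and $a_{\overline S}$. First I would record the identity underlying the statement: since $A=\sqrt{\prod_j a_j}$, we have $a_S\cdot a_{\overline S}=\big(\prod_{j\in S}a_j\big)\big(\prod_{j\notin S}a_j\big)=\prod_j a_j=A^2$, so that $a_{\overline S}=A^2/a_S$, which is exactly the middle equality claimed. Then AM-GM---equivalently, expanding $(\sqrt{a_S}-\sqrt{a_{\overline S}})^2\ge 0$---yields $a_S+a_{\overline S}\ge 2\sqrt{a_S a_{\overline S}}=2A$, with equality precisely when $a_S=a_{\overline S}$; combining this with $a_S a_{\overline S}=A^2$ forces the common value to be $A$, i.e.\ $a_S=a_{\overline S}=A$.

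For the equivalent statement in terms of probabilities, I would invoke Observation~\ref{obs:prob-vs-prod}, which gives $q_S=\ell a_{\overline S}$, and apply the same observation to the complement $\overline S$ (using $\overline{\overline S}=S$) to get $q_{\overline S}=\ell a_S$. Hence $q_S+q_{\overline S}=\ell(a_S+a_{\overline S})\ge 2\ell A$, and since $\ell>0$ the equality condition transfers verbatim, giving equality iff $q_S=q_{\overline S}=\ell A$.

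There is essentially no obstacle here---the argument is a one-line application of AM-GM plus bookkeeping via Observation~\ref{obs:prob-vs-prod}. The only point to state carefully is the equality characterization: it is not enough to say $a_S=a_{\overline S}$; one must observe that together with the product constraint $a_S a_{\overline S}=A^2$ this pins the common value down to exactly $A$, and likewise $q_S=q_{\overline S}=\ell A$ on the probability side.
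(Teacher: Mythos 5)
Your proof is correct and follows the same route as the paper: AM-GM applied to the pair $a_S$, $a_{\overline S}=A^2/a_S$, followed by transferring to probabilities via Observation~\ref{obs:prob-vs-prod}. In fact you are slightly more thorough than the paper's own proof, which only spells out the inequality and leaves the equality characterization (that the common value is forced to be $A$) implicit.
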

\begin{proof}
	The inequality in the observation holds by the inequality of arithmetic and geometric means (AM-GM inequality), which states that for any two non-negative numbers $w, z$, $(w+z)/2 \geq \sqrt{wz}$. Namely, for $z = a_S$, $w = A^2/a_S$, and $A = \sqrt{zw}$ the AM-GM inequality states that $a_S+A^2/a_S = z + w \geq 2 \sqrt{wz} = 2 \sqrt{a_S \cdot A^2/a_S} = 2 A$ as claimed.
\end{proof}

Observation \ref{obs:min-max} shows the connection between MIN-MAX-PROB and the NP-hard problem MIN-MAX PRODUCT PARTITION: $q$ is a YES instance (there exists a subset of items $S$ such that $q_{S}=\ell A$) iff $a_S=A$.

The following observation will be useful in the reduction to $\delta$-OPT-CONTRACT.

\begin{observation}
	\label{obs:delta}
	Let $\Delta = 1-\ell A 2^{m-1}$, then $0 < \Delta < 1$. 
\end{observation}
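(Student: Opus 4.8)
The plan is to prove the equivalent statement $0 < \ell A 2^{m-1} < 1$. The left inequality will be immediate: $\ell = \prod_{j=1}^m q_j$ is a product of strictly positive probabilities, $A = \sqrt{\prod_j a_j}>0$, and $2^{m-1}>0$, so $\ell A 2^{m-1}>0$ and hence $\Delta = 1 - \ell A 2^{m-1} < 1$.

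For the right inequality (equivalently, $\Delta > 0$), the key step is to rewrite the quantity using the definitions $\ell = \prod_{j=1}^m \frac{1}{a_j+1}$ and $A = \prod_{j=1}^m \sqrt{a_j}$, pulling out one factor of $\tfrac12$:
\[
\ell A 2^{m-1} \;=\; 2^{m-1}\prod_{j=1}^m \frac{\sqrt{a_j}}{a_j+1} \;=\; \frac{1}{2}\prod_{j=1}^m \frac{2\sqrt{a_j}}{a_j+1}.
\]
Then I would apply the AM--GM inequality to each factor in the remaining product: $a_j+1 \ge 2\sqrt{a_j\cdot 1} = 2\sqrt{a_j}$, with equality only when $a_j = 1$. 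Since the input to MIN-MAX-PROB guarantees $a_j \ge 3$ for every $j$, this inequality is strict, so $\frac{2\sqrt{a_j}}{a_j+1} < 1$ for each $j$. Multiplying the per-coordinate bounds yields $\ell A 2^{m-1} < \tfrac12$, whence $\Delta = 1 - \ell A 2^{m-1} > \tfrac12 > 0$.

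There is no genuine obstacle here; the only care needed is the algebraic rearrangement that extracts the factor $\tfrac12$ and pairs each $2\sqrt{a_j}$ with the matching $a_j+1$, after which AM--GM (already invoked in this appendix, cf. the proof of Observation~\ref{obs:min-max}) closes the argument. (The bound obtained is in fact slightly stronger than stated, $\Delta > \tfrac12$, but $0<\Delta<1$ is all that is needed downstream.)
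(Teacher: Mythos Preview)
Your proof is correct and follows essentially the same strategy as the paper: bound $\ell A = \prod_j \frac{\sqrt{a_j}}{a_j+1}$ factor-by-factor by $1/2^m$, giving $\ell A\,2^{m-1} \le \tfrac12 < 1$. The only difference is cosmetic---you obtain the per-coordinate bound $\frac{\sqrt{a_j}}{a_j+1} < \tfrac12$ directly via AM--GM ($a_j+1 > 2\sqrt{a_j}$ for $a_j \ge 3$), whereas the paper reaches it in two steps, first bounding $\sqrt{a_j} \le \sqrt{a_j+1}$ and then using $\sqrt{a_j+1} \ge 2$ from $a_j \ge 3$.
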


\begin{proof}
	By definition,
	$$
	\ell A = \frac{\sqrt{\prod a_j}}{\prod (a_j+1)} 
	\le \frac{\prod \sqrt{a_j +1}}{\prod (a_j +1)}=\frac{1}{\prod \sqrt{a_j+1}}\le \frac{1}{2^{m}}< \frac{1}{2^{m-1}},
	$$
	where the second-to-last inequality follows since $a_j\ge 3$ and so $\sqrt{a_j+1}\ge 2$.
	We conclude that $\ell A 2^{m-1} < 1$, completing the proof. 
\end{proof}


\subsection{Proof of Proposition~\ref{pro:implement-cost-hard}}
\label{appx-sub:reduction}

We now use hardness of MIN-MAX-PROB to establish hardness of $\delta$-OPT-CONTRACT.

\begin{proof}[Proof of Proposition \ref{pro:implement-cost-hard}]
	The proof is by reduction from MIN-MAX-PROB, as follows.
	
	{\bf Reduction.} Given an instance $q$ of MIN-MAX-PROB, construct a principal-agent setting with $n=3$ actions. 
	\begin{itemize}
		\item For action $a_1$, set its product distribution $q_1$ to be $q$. 
		\item For action $a_2$, set its product distribution $q_2$ to be $1-q$ (i.e., $q_{1,j}+q_{2,j}=1$ for every item $j$). 
		\item For action $a_3$, set its product distribution $q_3$ to be such that $q_{3,1}=1$ (i.e., this action's outcome always includes item $1$), and $q_{3,j}=\frac{1}{2}$ for every other item $j>1$. 
	\end{itemize}
	Set costs $c_1,c_2$ to zero and set $c_3$ to be $c=(a_{\max}+1)^{-1}$. The only nonzero reward is $r=r_1$ for item 1; set $r$ to be any number strictly greater than $\Delta^{-1}$. 
	
	{\bf Analysis.} 
	First notice that the reduction is polynomial in $m$; in particular, the number of bits of precision required to describe the probabilities, cost $c$ and reward $r$ is polynomial.
	
	The analysis will show that the expected payoff the principal can extract by a $\delta$-IC contract if $q$ is a YES instance is strictly larger than if $q$ is a NO instance. 
	We introduce some notation: Let $\mathcal{S}^1=\{S\subseteq [m]\mid 1\in S\}$, i.e., $\mathcal{S}^1$ is the collection of all item subsets containing item~1. Given a contract $p$, let $P = \sum_{S\in \mathcal{S}^1} p_S$ (the total payment for subsets in $\mathcal{S}^1$). Observe that the expected payment to the agent if he chooses action $a_3$ is $\frac{P}{2^{m-1}}$. 
	
	\begin{claim}
		\label{cla:a3-min-pay}
		Action $a_3$ can be weakly $\delta$-incentivized with expected payment $\frac{c}{\Delta(1+\delta)}$ if and only if $q$ is a YES instance of MIN-MAX-PROB. 
	\end{claim}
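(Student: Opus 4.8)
The plan is to read Claim~\ref{cla:a3-min-pay} as a statement about the optimal value of the payment-minimization problem for incentivizing $a_3$, and to reduce that problem to a clean question about the integer products $a_S=\prod_{j\in S}a_j$. First I would argue that some payment-minimizing $\delta$-IC contract for $a_3$ is supported on $\mathcal{S}^1$, the outcomes containing item~$1$: since $q_{3,1}=1$, action $a_3$ never realizes an outcome outside $\mathcal{S}^1$, so deleting any payment on outcomes outside $\mathcal{S}^1$ leaves the expected payment $p_3$ to $a_3$ unchanged while only decreasing the expected payments $p_1,p_2$ to $a_1,a_2$ (every outcome has positive probability under $q_1,q_2$), which can only help satisfy the $\delta$-IC constraints. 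Writing $P=\sum_{S\in\mathcal{S}^1}p_S$ and using that $a_3$ is uniform on $\mathcal{S}^1$, we get $p_3=P/2^{m-1}$.

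Next I would use Observation~\ref{obs:prob-vs-prod} together with the identity $q_{2,S}=q_{1,\overline S}$ (immediate from $q_{2,j}=1-q_{1,j}$) to rewrite $p_1=\ell\sum_{S\in\mathcal{S}^1}a_{\overline S}\,p_S$ and $p_2=\ell\sum_{S\in\mathcal{S}^1}a_{S}\,p_S$, where $\ell=\prod_j q_{1,j}$. The crux is a lower bound on $\max(p_1,p_2)$ in terms of the budget $P$: since $a_S a_{\overline S}=A^2$, Observation~\ref{obs:min-max} gives $a_S+a_{\overline S}\ge 2A$, hence $p_1+p_2=\ell\sum_S(a_S+a_{\overline S})p_S\ge 2A\ell P$ and therefore $\max(p_1,p_2)\ge A\ell P$, with equality possible only if all payment sits on outcomes $S$ with $a_S=a_{\overline S}=A$. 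Both $\delta$-IC constraints for $a_3$ (against $a_1$ and against $a_2$) rearrange to $p_3\ge$ some fixed increasing function of $\max(p_1,p_2)$, so a payment-minimizing contract seeks $p_1=p_2=A\ell P$ (attainable precisely in the YES case), which by Observation~\ref{obs:delta} equals $(1-\Delta)p_3$. Substituting $p_1=(1-\Delta)p_3$ into the $\delta$-IC constraint and solving for $p_3$ gives the value $\frac{c}{\Delta(1+\delta)}$ stated in the claim.

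It remains to connect this optimum to MIN-MAX-PROB. Equality $\max(p_1,p_2)=A\ell P$ is attainable iff some $S\in\mathcal{S}^1$ has $a_S=A$, and such an $S$ exists iff $q$ is a YES instance: if a witnessing set omits item~$1$, its complement contains item~$1$ and has the same product $A$ because $a_Sa_{\overline S}=A^2$. This yields the ``if'' direction constructively (put all of $P$ on one such $S^*$, and pick $P$ so the $a_3$-versus-$a_1$ constraint is tight). For the ``only if'' direction, when $q$ is a NO instance every outcome has $a_S\ne A$; since $a_S$ and $A^2$ are integers, $a_S+A^2/a_S>2A$ for every $S$, so $\max(p_1,p_2)>A\ell P$ for \emph{every} contract, and hence every $\delta$-IC contract for $a_3$ has expected payment strictly above $\frac{c}{\Delta(1+\delta)}$.

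The step I expect to be the main obstacle is the ``only if'' direction: one must rule out that a clever \emph{spreading} of payments across several outcomes of $\mathcal{S}^1$ could push $\max(p_1,p_2)$ down to $A\ell P$ in a NO instance --- this is exactly what the bound $p_1+p_2\ge 2A\ell P$, valid for arbitrary nonnegative payment vectors, is for --- and, for the downstream reduction (where the large reward $r>\Delta^{-1}$ amplifies the payment gap into a payoff gap), to keep this argument quantitatively under control rather than merely strict.
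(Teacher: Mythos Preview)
Your plan matches the paper's proof essentially step for step: both bound $\max(p_1,p_2)$ from below via $p_1+p_2=\ell\sum_S p_S(a_S+a_{\overline S})\ge 2\ell A\sum_S p_S$ (Observation~\ref{obs:min-max}), strict in the NO case, and both construct the same single-payment contract on a balanced $S^*\in\mathcal{S}^1$ for the YES direction; your preliminary reduction to contracts supported on $\mathcal{S}^1$ is a valid simplification that the paper sidesteps with the cruder final step $\sum_S p_S\ge P$. One minor algebraic point you share with the paper: substituting $\max(p_1,p_2)=(1-\Delta)p_3$ into $(1+\delta)p_3-c\ge\max(p_1,p_2)$ yields $p_3\ge c/(\Delta+\delta)$ rather than $c/(\Delta(1+\delta))$, so the NO-case bound as derived does not literally reach the threshold in the claim's statement---but this is harmless for the enclosing reduction in Proposition~\ref{pro:implement-cost-hard}, which only needs a strict YES/NO separation.
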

	
	\begin{myproof}[Proof of Claim \ref{cla:a3-min-pay}]
		Fix a $\delta$-IC contract $p$ that weakly $\delta$-incentivizes action $a_3$. By Observation~\ref{obs:prob-vs-prod}, the agent's expected utility from action $a_1$ is $\ell\sum_S p_S a_{\overline S}$ and from action $a_2$ is $\ell\sum_S p_S a_{S}$. The agent's expected utility from action $a_3$  (after boosting by $(1+\delta)$) is $\frac{P(1+\delta)}{2^{m-1}} - c$. 
		
		Assume first that $q$ is a NO instance. If $p$ weakly incentivizes action $a_3$ then
		\begin{eqnarray*}
			\frac{P(1+\delta)}{2^{m-1}} - c &\ge& \ell\cdot \max\left\{\sum_S p_S a_{S},\sum_S p_S a_{\overline S}\right\}\nonumber\\ 
			&\ge& \frac{\ell}{2}\left( \sum_S p_S a_{S} + \sum_S p_S a_{\overline S} \right)\nonumber\\
			&=& \frac{\ell}{2} \sum_S p_S (a_{S} +a_{\overline S}) > \ell A\sum_S p_S \ge \ell AP,
		\end{eqnarray*}
		where the second-to-last inequality is by Observation \ref{obs:min-max}, and is strict by our assumption that $q$ is a NO instance. Rearranging $\frac{P(1+\delta)}{2^{m-1}} - c > \ell A P$ we get 
		\begin{equation*}
		c < \frac{P(1+\delta)}{2^{m-1}} - \ell A P(1+\delta) = \frac{P(1+\delta)}{2^{m-1}}\left(1 - \ell A 2^{m-1} \right) = \frac{P\Delta(1+\delta)}{2^{m-1}}.
		\end{equation*}
		By Observation \ref{obs:delta} we can divide both sides by $\Delta(1+\delta) > 0$ to establish $\frac{P}{2^{m-1}} > \frac{c}{\Delta(1+\delta)}$, completing the proof of the first direction.
		
		Assume now that $q$ is a YES instance. Then there exists $S^*$ such that $a_{S^*}=a_{\overline {S^*}}=A$, and without loss of generality $S^*\in \mathcal{S}^1$ (otherwise take its complement). Consider the following contract: Let $p_{S^*}=\frac{c2^{m-1}}{\Delta(1+\delta)}$ and set all other payments to 0. 
		The expected payment to the agent for action $a_3$ is	$\frac{p_{S^*}}{2^{m-1}}=\frac{c}{\Delta(1+\delta)}$ as required, and the agent's expected utility (after boosting by $(1+\delta)$) is $\frac{p_{S^*}(1+\delta)}{2^{m-1}}-c=\frac{c}{\Delta}-c =\frac{c(1-\Delta)}{\Delta}$. Plugging in $\Delta= 1-\ell A2^{m-1}$, we get that the expected utility from action $a_3$ is $\ell A\frac{c2^{m-1}}{\Delta}=\ell A p_{S^*}$. This is equal to the expected utility from action $a_1$, since $\ell\sum_S p_S a_{\overline S}=\ell p_{S^*} a_{\overline {S^*}}=\ell A p_{S^*}$ 
		Similarly, the expected utility from action $a_2$ is also $\ell A p_{S^*}$. We conclude that $p$ weakly $\delta$-incentivizes $a_3$,
		completing the proof of Claim \ref{cla:a3-min-pay}.
	\end{myproof}
	%
	
	
	We now use Claim \ref{cla:a3-min-pay} to complete the hardness proof by showing that the expected payoff the principal can extract if $q$ is a YES instance is strictly larger than if $q$ is a NO instance. 
	
	For a YES instance, by Claim \ref{cla:a3-min-pay} action $a_3$ can be weakly $\delta$-incentivized with expected payment $\frac{c}{\Delta(1+\delta)}$. 
	We argue that the values chosen in the reduction for $c$ and $r$ guarantee that action $a_3$ has the (strictly) highest expected payoff for the principal, so the agent breaks ties in favor of~$a_3$: Since the only positive reward is $r_1=r$ and since $q_{3,1}=1$, the expected payoff from $a_3$ is $q_{3,1}r_1-\frac{c}{\Delta(1+\delta)}=r-\frac{c}{\Delta(1+\delta)}$. The expected reward (and thus also payoff) from $a_1$ is at most $q_{1,1}r_1\le \frac{r}{4}$ (using that $a_1+1\ge 4$), and the expected reward from $a_2$ is at most $q_{2,1}r_1\le (1-\frac{1}{a_{\max}+1})r$. Since $\frac{r}{4} \le (1-\frac{1}{a_{\max}+1})r$ (using that $a_{\max}\ge 3$), it suffices to show $r-\frac{c}{\Delta(1+\delta)}\ge r-\frac{c}{\Delta}> (1-\frac{1}{a_{\max}+1})r$, or simplifying, $r>\frac{c(a_{\max}+1)}{\Delta}$. Since the reduction sets $c=(a_{\max}+1)^{-1}$ and $r >\Delta^{-1}$, the argument is complete.
	
	For a NO instance, by Claim \ref{cla:a3-min-pay} the expected payoff from $a_3$ is strictly lower than $r-\frac{c}{\Delta(1+\delta)}$. By the analysis of the YES case we know that the expected rewards from $a_1,a_2$ are strictly lower than $r-\frac{c}{\Delta}$ (and by limited liability the principal's expected payoff is bounded by the expected reward). This completes the proof of Proposition \ref{pro:implement-cost-hard}.
\end{proof}

\section{An FPTAS for the separation oracle}
\label{appx:separation-oracle}
 In this appendix we establish the separation oracle FPTAS stated in Lemma~\ref{lem:FPTAS}. 



\begin{proof}[Proof of Lemma \ref{lem:FPTAS}]
	We adapt an FPTAS of Moran~\cite{Moran81} (see also subsequent papers such as \cite{NBCK10}).
	Let 
	$$
	\Delta=(1+\epsilon)^{1/2m}.
	$$
	
	{\bf FPTAS algorithm.} 
	The algorithm proceeds in iterations from $0$ to $m$. 
	In iteration $j$, the partial solutions in that iteration are subsets of the first $j$ items. 
	For a partial solution $S\subseteq \{1,\dots,j\}$, recall that $q_{\ell,S}$ is the marginal probability to draw $S$ among the first $k$ items if the sample is distributed according to $q_\ell$.
	
	The partial solutions in iteration $j$ are partitioned into families $Y_{j,1},\dots,Y_{j,r_j}$. The partition is such that for every family $r\in[r_j]$ and partial solutions $S,S'\in Y_{j,r}$, for every distribution $\ell\in [k]\cup \{i\}$, the ratio between $q_{\ell,S}$ and $q_{\ell,S'}$ is at most $\Delta$.
	
	In the first iteration $j=0$, the only solution is the empty set. 
	The solutions in iteration $j+1$ are generated from the families in iteration $j$ as follows: One arbitrary partial solution $S$ is chosen from every family $Y_{j,r}$ to ``represent'' it, and for each such $S$ two partial solutions $S\cup \{j+1\}$ and $S$ are added to the solutions of iteration $j+1$ (i.e., with and without the $(j+1)$st item).
	
	The algorithm outputs the minimum objective $\frac{1}{q_{i,S}} \sum_{k} {\alpha_{k} q_{k,S}}$ among the solutions $S$ in iteration~$m$.
	
	{\bf Analysis.}
	We first argue that $ALG\le (1+\epsilon)OPT^s$. Let $S^*$ be the optimal solution, 
	and denote the subset of 
	$S^*$ containing only items among the first $j$ by 
	$S^*_j$. 
	By induction, in iteration $j$ there is a partial solution $S'_j$ such that $\Delta^{-j}\cdot q_{\ell,S^*_j} \le q_{\ell,S'_j}\le \Delta^j \cdot q_{\ell,S^*_j}$ for every distribution $\ell\in [k]\cup \{i\}$. Denote $S'=S'_m$. Then $ALG\le 
	\frac{1}{q_{i,S'}} \sum_{k} {\alpha_{k} q_{k,S'}} \le 
	\Delta^{2m}\cdot \frac{1}{q_{i,S^*}} \sum_{k} {\alpha_{k} q_{k,S^*}}=(1+\epsilon)OPT^s$.
	
	
	
	It remains to show that the FPTAS runs in polynomial time. The running time is $O(\sum_j r_j)$.
	In the input distributions $\{q_k\},q_i$, denote the range of every \emph{nonzero} probability by $[q_{\min},1]$ ($q_{\min}$ can be exponentially small). For every distribution $\ell\in [k]\cup \{i\}$, the probabilities that are not $0$ are at least $q_{\min}^m$. So a partition ``in jumps of $\Delta$'' requires $O(t)$ parts, where $t$ is the smallest integer satisfying $q_{\min}^m\cdot \Delta^ t \ge 1$. So 
	$$
	t = \left\lceil \frac{m \log (q_{\min}^{-1})}{\log {\Delta}} \right\rceil = \left\lceil \frac{2m^2 \log (q_{\min}^{-1})}{\log {(1+\epsilon)}} \right\rceil
	\le \left\lceil \frac{2m^2 \log (q_{\min}^{-1})}{\epsilon} \right\rceil,
	$$ 
	where the last inequality uses $\log (1+\epsilon)\ge \epsilon$ for $\epsilon\in(0,1]$. Since the partition to $r_j$ families maintains ``jumps of $\Delta$'' for $n$ distributions, $r_k=O(t^n)$. We invoke the assumption that $n$ is constant to complete the analysis and the proof of Lemma \ref{lem:FPTAS}.
\end{proof}

\section{Hardness of MIN-PAYMENT}
\label{appx:hardness-min-payment}
In this appendix we show the following counterpart to Corollary~\ref{cor:hardness-of-approx}. 

\begin{proposition}
	\label{pro:min-pay-hardness}
	For any constant $c\in \mathbb{R},c\ge 1$, 
	it is NP-hard to approximate the minimum expected payment for 
	implementing a given action to within a multiplicative factor $c$. 
\end{proposition}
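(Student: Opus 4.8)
The plan is to reduce from gap MAX-3SAT, in the same spirit as Corollary~\ref{cor:hardness-of-approx}, but the construction here is simpler and uses only the SAT setting of Definition~\ref{def:sat-setting} (no gap setting). Fix the target factor $c\ge 1$ and a small rational $\alpha\in(0,\tfrac{1}{9c})$; by \cite{Hastad01} it is NP-hard to decide whether a MAX-3SAT instance $\varphi$ is satisfiable or every assignment satisfies at most $7/8+\alpha$ of its clauses. Given $\varphi$ with $n$ clauses and $m$ variables, I would first invoke Proposition~\ref{pro:sat-setting} to build the SAT setting (a zero-cost ``clause action'' $a_k$ per clause, over $m$ zero-reward ``variable items''), and then bolt on one ``bonus item'' $b$ of large reward $r^\star$ and one distinguished action $X$ of cost $c_X=1$: set $q_{X,j}=\tfrac12$ for every variable item $j$ and $q_{X,b}=1$, and give each clause action $q_{a_k,b}=1-\xi$ for a tiny fixed rational $\xi\le\alpha$ (its variable marginals unchanged). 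The reduction is clearly polynomial. The claim to establish is that MIN-PAYMENT for the designated action $X$ equals $1$ when $\varphi$ is satisfiable and is at least $\tfrac{1}{8\alpha+\xi}>c$ otherwise; then any multiplicative $c$-approximation decides the gap problem, proving the proposition.

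For the satisfiable case I would take a satisfying assignment $S_0$ (viewed as a set of variable items), observe it falsifies no clause so $q_{a_k,S_0}=0$ for all $k$, and use the contract that pays $2^m$ on the single outcome $S_0\cup\{b\}$ and $0$ elsewhere: this gives $X$ expected payment exactly $1$ and every clause action expected payment $0$, so $X$ is IC; taking $r^\star$ large (e.g.\ $r^\star>2/\xi$) makes $R_X=r^\star$ exceed $R_{a_k}=(1-\xi)r^\star$ by enough that $X$ also wins the tie-break, so $X$ is implemented. The matching lower bound $p^\star_X\ge 1$ is immediate from $p_X-1\ge p_{a_k}\ge 0$, so $p^\star_X=1$.

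For the gap case I would take any contract implementing $X$, assume w.l.o.g.\ it pays only on outcomes containing $b$ (payments elsewhere only inflate the $p_{a_k}$ without raising $p_X$), and average the IC inequalities $p_X-1\ge p_{a_k}$ over all $k$ to obtain $p_X-1\ge \bar p$, where $\bar p$ is the expected payment of the average clause action. Definition~\ref{def:sat-setting}(2) states that the average of the clause distributions assigns probability at least $\tfrac{1-8\alpha}{2^m}$ to every variable-item set; combined with $q_{a_k,b}=1-\xi$ and with $q_{X,\cdot}$ being uniform (times the point mass at $b$), this yields $\bar p\ge (1-8\alpha)(1-\xi)\,p_X$, hence $p_X\ge 1/\bigl(1-(1-8\alpha)(1-\xi)\bigr)\ge 1/(8\alpha+\xi)$. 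Since $\xi\le\alpha$ this is at least $\tfrac{1}{9\alpha}>c$. (If no implementing contract exists then $p^\star_X=\infty$, which is consistent with the bound.)

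Combining the two cases, an algorithm approximating MIN-PAYMENT for $X$ within a factor $c$ would output a value in $[1,c]$ on satisfiable instances and a value exceeding $c$ (or report infeasibility) on gap instances, thereby deciding an NP-hard problem. The one genuinely delicate point I expect is the tie-breaking in the satisfiable case, which is exactly why $X$ is given a strictly larger expected reward through the bonus item $b$; the gap-case bound is then a one-line averaging computation in which H\r astad's gap parameter $\alpha$ lands in the denominator, which is what drives the ratio past every constant. I would also remark that the standing normalization and no-dominated-actions assumptions are inessential for this hardness and can be restored by scaling and infinitesimal perturbations that do not affect the two key values, a routine point I would not belabor.
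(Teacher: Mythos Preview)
Your proof is correct and follows essentially the same route as the paper: reduce from gap MAX-3SAT via the SAT setting of Definition~\ref{def:sat-setting}, adjoin a distinguished action with uniform $\tfrac12$ marginals and positive cost, and in the unsatisfiable case average the IC constraints so that H\r astad's parameter $\alpha$ lands in the denominator of the minimum payment. The paper's version is slightly leaner: it simply sets the item rewards to~$1$ and adds one new action $a_{n+1}$ with cost~$\mathcal{C}$ and probability $\tfrac12$ for every item, obtaining minimum payment $\mathcal{C}$ in the satisfiable case and $\ge \mathcal{C}/(8\epsilon)$ otherwise. Your bonus item~$b$ with $q_{a_k,b}=1-\xi$ and the accompanying tie-breaking analysis are not needed, because MIN-PAYMENT is defined ``up to tie-breaking'' (cf.\ LP~\eqref{LP:min-pay}, which encodes only the weak IC constraints); the averaging step then gives $p_X\ge 1/(8\alpha)$ directly rather than $1/(8\alpha+\xi)$. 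That said, your extra item does no harm and incidentally also secures $R_X>c_X$, so the argument stands as written.
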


\begin{proof}
The proof is by reduction from MAX-3SAT. Given an instance of MAX-3SAT, the goal is to determine whether the instance is satisfiable or whether at most $\frac{7}{8}+\epsilon$ of the clauses can be satisfied, where $\epsilon$ is an arbitrarily small constant. 

{\bf Reduction.} Given $\varphi$, we obtain the SAT principal-agent setting corresponding to $\varphi$ (Proposition \ref{pro:sat-setting}), but we set the reward for every item to be $1$ rather than $0$. We add an action $a_{n+1}$ with cost $\mathcal{C}$ and product distribution $q_{n+1}$ with probability $\frac{1}{2}$ for every item. 

{\bf Analysis.} As in the analysis in the proof of Proposition \ref{pro:hardness-reduction-2}, if $\varphi$ has a satisfying assignment then we can implement $a_{n+1}$ at cost $\mathcal{C}$. 
Otherwise recall that by Definition \ref{def:sat-setting}, the average action over the first $n$ actions leads to every item set $S$ with probability at least $\frac{1-8\epsilon}{2^m}$. 
Consider a contract $p$ and let $P=\sum_S p_S$. 
The expected utility of the agent for choosing $a_{n+1}$ is $\nicefrac{P}{2^m} - \mathcal{C}$. Consider again the average action over the first $n$ actions.
The expected payment to the agent for ``choosing'' this action (i.e., the expected payment over the average distribution) is at least $\frac{1-8\epsilon}{2^m}P=\frac{P}{2^m} - \frac{8\epsilon P}{2^m}$, and
there is some action $a_i$ (with cost $0$) for which the expected payment is as high. 
To incentivize $a_{n+1}$ over $a_i$ it must hold that $\frac{P}{2^m} - \mathcal{C} \ge \frac{P}{2^m} - \frac{8\epsilon P}{2^m}$, i.e., $\frac{P}{2^m} \ge \frac{\mathcal{C}}{8\epsilon}$.
We conclude that if there is no assignment satisfying more than $\frac{7}{8}+\epsilon$ of the clauses, the expected payment for implementing $a_{n+1}$ is $\frac{\mathcal{C}}{8\epsilon}$ rather than~$\mathcal{C}$. Approximating the expected payment within a multiplicative factor $\frac{1}{8\epsilon}$ would thus solve the MAX-3SAT instance we started with, and we can make $\epsilon$ as small a constant as we want.
\end{proof}

\section{Proofs omitted from Section~\ref{sec:hardness}} 
\label{appx:hardness}
In this appendix we provide proofs for Propositions~\ref{pro:gap-setting-two-actions},~\ref{pro:gap-setting}, and~\ref{pro:sat-setting}.

\subsection{Existence of gap settings}
\label{appx:gap-settings}

We start by establishing the existence of gap settings for $2$ actions (Proposition~\ref{pro:gap-setting-two-actions}) and $c$ actions (Proposition~\ref{pro:gap-setting}).

\begin{proof}[Proof of Proposition \ref{pro:gap-setting-two-actions}]
	For the gap setting constructed above with $c=2$ actions and $\gamma=\epsilon$, consider a $\delta$-IC contract. Since the expected reward of the first action $a_1$ is 1, and the maximum expected welfare is $2-\gamma\ge 2-\frac{4\epsilon}{1+2\epsilon}$, if a contract is to extract more than $\frac{1}{2-4\epsilon/(1+2\epsilon)}=\frac{1}{2}+\epsilon$ of the expected welfare then it must $\delta$-incentivize the last action $a_c$ (a limited liability contract cannot extract more than the expected reward from an agent choosing $a_1$, since $a_1$ is zero-cost). Let $p$ be the payment for the item and let $p_0$ be the payment for the empty set. 
	For any action $a_{i^*}$ that the contract $\delta$-incentivizes, 
	the following inequality must hold for every~$i\in [c]$:  
	\begin{eqnarray}
	(1+\delta)\left(\gamma^{c-i^*} p + (1-\gamma^{c-i^*}) p_0\right) &-& \frac{1}{\gamma^{i^*-1}} + i^* - (i^*-1)\gamma ~\ge\nonumber\\ 
	\left(\gamma^{c-i} p + (1-\gamma^{c-i}) p_0\right) &-& \frac{1}{\gamma^{i-1}} + i - (i-1)\gamma.\label{eq:IC-gap}
	\end{eqnarray}
	Observe that for the contract to $\delta$-incentivize $a_c$ at minimum expected payment, it must hold that $p_0=0$. 
	We can now plug $p_0=0$ into inequality \eqref{eq:IC-gap} and choose $i^*=c,i=i^*-1$. We get a lower bound on the expected payment for $\delta$-incentivizing $a_c$:
	$$
	p\ge \frac{(1-\gamma)^2}{\gamma(1+\delta-\gamma)}.
	$$ 
	The principal's expected payoff is thus $\le \frac{1}{\gamma} - \frac{(1-\gamma)^2}{\gamma(1+\delta-\gamma)}\le \frac{1}{1+\gamma^2-\gamma}$, where the last inequality uses $\delta\le f(\epsilon) = \gamma^2$.	
	We get an upper bound of $\frac{1}{1+\gamma^2-\gamma}$ on what the best $\delta$-IC contract can extract out of $2-\gamma$ for the principal. The ratio is thus at most $\frac{1}{2}+\epsilon$ (using $\gamma\le \frac{1}{4}$), and this completes the proof of Proposition~\ref{pro:gap-setting-two-actions}.
\end{proof}

\begin{proof}[Proof of Proposition \ref{pro:gap-setting}]
	For the gap setting constructed above with $c$ actions and $\gamma=\epsilon$, consider a $\delta$-IC contract. 
	As in the proof of Proposition \ref{pro:gap-setting-two-actions}, this contract cannot extract more than $\frac{1}{c}+\epsilon$ of the expected welfare by $\delta$-incentivizing action $a_1$.
	Assume from now on that the contract $\delta$-incentivizes action $a_{i^*}$ for $i^*\ge 2$ at minimum expected payment. 
	As in the proof of Proposition \ref{pro:gap-setting-two-actions}, Inequality~\eqref{eq:IC-gap} must hold for $i^*$ and every $i\in[c]$.
	
	Assume first that the contract's payment $p_0$ for the empty set is zero. (This assumption is without loss of generality for the case of $c=2$ actions, as well as for $c\ge 3$ and fully-IC optimal contracts by Proposition 6 in \cite{DRT18}.) Plugging $p_0=0$ into Inequality~\eqref{eq:IC-gap} and choosing $i=i^*-1$, we get a lower bound on the expected payment for $\delta$-incentivizing $a_{i^*}$ (in particular making it preferable to $a_{i^*-1}$): 
	\begin{equation}
	\gamma^{c-i^*}p \ge \frac{(1-\gamma^{i^*-1})(1-\gamma)}{\gamma^{i^*-1}(1+\delta-\gamma)}.\label{eq:bound-for-zero-p0}
	\end{equation}
	The principal's expected payoff is thus $\le \frac{1}{\gamma^{i^*-1}} - \frac{(1-\gamma^{i^*-1})(1-\gamma)}{\gamma^{i^*-1}(1+\delta-\gamma)} \le \frac{\gamma^c+\gamma^{i^*-1}(1-\gamma)}{\gamma^{i^*-1}(1+\gamma^{c}-\gamma)}
	= \frac{\gamma^c}{\gamma^{i^*-1}(1+\gamma^{c}-\gamma)} + \frac{1-\gamma}{1+\gamma^{c}-\gamma}$, 
	where the last inequality uses $\delta \le f(\epsilon) = \gamma^c$.
	Maximizing this expression by plugging in $i^*=c$, we get an upper bound of $\frac{1}{1+\gamma^{c}-\gamma}$ on what the best $\delta$-IC contract can extract out of $c-(c-1)\gamma$ for the principal. 
	The ratio can thus be shown to be at most $\frac{1}{c} + \epsilon$, as required (using that $c\ge 3$ and $\gamma\le \frac{1}{4}$; see Claim \ref{cla:calc}). 
	
	Now consider the case that $p_0>0$. We argue that in this case, plugging $i=i^*-1$ into Inequality~\eqref{eq:IC-gap} gives a lower-bound on $\gamma^{c-i^*}p$ that is only higher than that in Inequality~\eqref{eq:bound-for-zero-p0}. To see this, consider the contribution of $p_0>0$ to the left-hand side of Inequality~\eqref{eq:IC-gap}, which is $(1+\delta)(1-\gamma^{c-i^*})p_0$. Compare this to its contribution to the right-hand side of Inequality~\eqref{eq:IC-gap}, which is $(1-\gamma^{c-i})p_0$. For $\delta\le \gamma^c$, $\gamma\le \frac{1}{4}$ and $i=i^*-1$ it holds that $(1+\delta)(1-\gamma^{c-i^*}) \le 1-\gamma^{c-i}$. 
	This completes the proof of Proposition \ref{pro:gap-setting} up to Claim~\ref{cla:calc}.
\end{proof}

\begin{claim}
For every $\gamma\in(0,\frac{1}{4}]$ and $c\in\mathbb{Z},c\ge 3$,
\label{cla:calc}
$$
\frac{1}{1+\gamma^{c}-\gamma} \cdot \frac{1}{c-(c-1)\gamma}\le \frac{1}{c} + \gamma.
$$
\end{claim}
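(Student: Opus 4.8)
The plan is to clear denominators and reduce to a polynomial inequality in $\gamma$, then split that polynomial into a leading quadratic and a small higher‑order tail. Since $\gamma\in(0,\tfrac14]$ and $c\ge 3$, every factor $c$, $1+c\gamma$, $1+\gamma^c-\gamma\ (>1-\gamma>0)$ and $c-(c-1)\gamma\ (>1)$ is positive, so the claim is equivalent to
\[
c\le(1+c\gamma)(1+\gamma^c-\gamma)\bigl(c-(c-1)\gamma\bigr).
\]

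First I would expand the right‑hand side. Multiplying out, subtracting $c$ and dividing by $\gamma>0$, the claim becomes $g(\gamma)+\gamma^{c-1}h(\gamma)\ge 0$, where
\[
g(\gamma)=(c-1)^2-(2c^2-2c+1)\gamma+c(c-1)\gamma^2,\qquad h(\gamma)=c+(c^2-c+1)\gamma-c(c-1)\gamma^2.
\]
(For $c=3$ the monomial $c\gamma^{c-1}$ has the same degree as $c(c-1)\gamma^2$, but the displayed identity still holds verbatim.) It then suffices to show $g\ge 0$ and $h\ge 0$ on $[0,\tfrac14]$ for every $c\ge 3$.

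For $g$: it is an upward parabola with $g(0)=(c-1)^2>0$ and vertex at $\gamma=\frac{2c^2-2c+1}{2c(c-1)}=1+\frac1{2c(c-1)}>1$, so $g$ is decreasing on $[0,\tfrac14]$ and $\min_{[0,1/4]}g=g(\tfrac14)=\tfrac1{16}(9c^2-25c+12)$; since $9c^2-25c+12$ is increasing in $c$ for $c\ge 2$ and equals $18$ at $c=3$, it is positive for all $c\ge 3$. For $h$: it is concave in $\gamma$, hence on $[0,\tfrac14]$ its minimum is attained at an endpoint, and $h(0)=c>0$ while $16\,h(\tfrac14)=3c^2+13c+4>0$; thus $\gamma^{c-1}h(\gamma)\ge 0$. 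Adding the two bounds gives $g(\gamma)+\gamma^{c-1}h(\gamma)\ge 0$, which is exactly the reformulated claim.

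The only real work is the bookkeeping in the expansion and noticing that the higher‑order tail $c\gamma^{c-1}+(c^2-c+1)\gamma^c-c(c-1)\gamma^{c+1}$ factors neatly as $\gamma^{c-1}h(\gamma)$ with $h$ concave and positive on $[0,\tfrac14]$; there is no conceptual obstacle, and the inequality is tight only in the limit $\gamma\to0$, where both sides of the claim tend to $1/c$.
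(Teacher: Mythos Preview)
Your proof is correct. The algebraic expansion checks out, and the separate nonnegativity of $g$ and $h$ on $[0,\tfrac14]$ is verified cleanly.

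The paper takes a different route: it splits into the cases $c=3$ and $c\ge 4$. For $c\ge 4$ it relaxes $1+\gamma^c-\gamma$ to $1-\gamma$ and $c-(c-1)\gamma$ to $c(1-\gamma)$, reducing the claim to $\tfrac{1}{(1-\gamma)^2}\le 1+c\gamma$, i.e., $c\ge \tfrac{2-\gamma}{(1-\gamma)^2}$; plugging in $\gamma=\tfrac14$ gives the threshold $\tfrac{28}{9}<4$. The case $c=3$ is then handled separately (and somewhat informally) by asserting that $13\gamma+6\gamma^4\le 4+9\gamma^2+7\gamma^3$ on $(0,\tfrac14]$. Your approach is more uniform: a single decomposition $g(\gamma)+\gamma^{c-1}h(\gamma)\ge 0$ covers all $c\ge 3$ at once, at the price of a longer explicit expansion. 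The paper's approach is shorter for $c\ge 4$ but leaves the boundary case $c=3$ to a bare polynomial check; yours actually supplies a complete argument for that case as well.
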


\begin{proof}
	We first establish the claim for $c=3$. We need to show $\frac{1}{1+\gamma^{3}-\gamma} \cdot \frac{1}{3-2\gamma}\le \frac{1}{3} + \gamma$. Simplifying, we need to show $13\gamma+6\gamma^4\le 4+9\gamma^2+7\gamma^3$, which holds for every $\gamma\le\frac{1}{4}$. 
	
	We now consider $c\ge 4$: It is sufficient to show $\frac{1}{1-\gamma} \cdot \frac{1}{c-c\gamma}\le \frac{1}{c} + \gamma$. Multiplying by $c$ we get $\frac{1}{(1-\gamma)^2} \le 1 + c\gamma$. This holds if and only if $c\ge \frac{2-\gamma}{(1-\gamma)^2}$. The right-hand side is an increasing function in the range $0<\gamma \le \frac{1}{4}$ and so we can plug in $\gamma=\frac{1}{4}$ and verify. Since $c\ge 4\ge \frac{28}{9}$, the proof is complete.
\end{proof}

\subsection{Proof of Proposition~\ref{pro:sat-setting}}
\label{appx:sat-setting}
Next we show that Algorithm~\ref{alg:sat-setting-from-phi} provides an efficient reduction from MAX-3SAT instances to gap settings (Proposition~\ref{pro:sat-setting}).

\begin{proof}[Proof of Proposition \ref{pro:sat-setting}]
	We first argue that there is a satisfying assignment to the MAX-3SAT instance if and only if there is a set $S$ with $0$-probability in every one of the product distributions.
	First note that there is a natural 1-to-1 correspondence between subsets $\{S\}$ of items and truth assignments to the variables: for every variable $j$, if item $j\in S$ then assign TRUE and otherwise FALSE.
	Now consider a set $S$ and its corresponding assignment. $S$ has $0$-probability in the $i$th product distribution iff either an item in $S$ has probability 0 or an item in $\overline{S}$ has probability 1 according to this distribution. Therefore,
	in clause $i$, either one of the TRUE variables appears as a positive literal or one of the FALSE variables appears as a negative literal. And this is a necessary and sufficient condition for the clause to be satisfied. We conclude that $S$ has $0$-probability in every product distribution if and only if the corresponding assignment satisfies every clause, establishing condition (1) of Definition~\ref{def:sat-setting}.
	To show condition (2), assume that at most $\frac{7}{8}+\alpha$ of the clauses can be satisfied. Consider the average action whose distribution results from averaging over all actions. This distribution has for every $S$ a probability at least $(\frac{1}{8}-\alpha)\cdot\frac{8}{2^m}=\frac{1-8\alpha}{2^m}$, since the probability of $S$ is $\frac{8}{2^m}$ in every distribution corresponding to a clause which the assignment corresponding to $S$ does not satisfy. This completes the proof.
\end{proof}

\section{Approximation by separable contracts} 
\label{appx:separable}
In this appendix we examine the gap between separable and optimal contracts.

Recall that a contract $p$ is \emph{separable} if there are payments $p_1,...,p_m$ such that $p(S) = \sum_{j\in S} p_j$ for every $S \subseteq M$.
By linearity of expectation, the expected payment for action $a_i$ given a separable contract $p$ is $\sum_{j} q_{i,j}p_j$.

As we have shown in Proposition~\ref{pro:separable-poly-time} the optimal separable contract can be computed in polynomial time via linear programming.
Thus we know that separable (and other simple computationally-tractable) contracts cannot achieve a constant approximation to OPT unless $P=NP$ (Corollary~\ref{cor:hardness-of-approx}). 

In fact, an even stronger lower bound holds---they cannot achieve an approximation better than $n$, unless we relax the IC requirement to $\delta$-IC. We provide a proof of this general lower bound for the case of $n=2$.

\begin{proposition}
	For every $\epsilon > 0$ there is a principal-agent instance with $n=2$ actions and $m = 2$ items, in which the best separable contract only provides a $2-\epsilon$ approximation to $OPT$.
\end{proposition}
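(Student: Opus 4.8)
The plan is to exhibit, for each small $\epsilon>0$, a two-action two-item instance in which the optimal contract can pay for the \emph{combined} outcome $\{1,2\}$, whose likelihood ratio between the two actions is $\rho^2$, whereas a separable contract pays per item and can only exploit the per-item ratio $\rho$; by calibrating the cost of the expensive action so that the best separable contract is exactly indifferent to the trivial contract incentivizing the free action, the resulting payoff gap tends to $2$ as $\rho\to\infty$.

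\emph{The construction.} It suffices to treat $\epsilon\in(0,1)$ (for larger $\epsilon$ the statement is weaker). I would set $\rho$ to be an integer with $\rho+1\ge 2/\epsilon$ (e.g.\ $\rho=\lceil 2/\epsilon\rceil$, so in particular $\rho\ge 2$) and $\beta=\tfrac{1}{2\rho}$. Take $M=\{1,2\}$ with rewards $r_1=1$, $r_2=0$; action $a_1$ with $q_{1,1}=q_{1,2}=\beta$ and cost $c_1=0$; action $a_2$ with $q_{2,1}=q_{2,2}=\tfrac12$ and cost $c_2=\tfrac{(\rho-1)^2}{2\rho^2}$. First I would check this is an admissible succinct setting: all probabilities lie in $(0,1)$; the expected rewards are $R_1=\beta<\tfrac12=R_2$, which with $c_1=0<c_2$ rules out dominated actions; and $R_i-c_i\ge 0$, the only nontrivial check being $R_2-c_2=\tfrac12-\tfrac{(\rho-1)^2}{2\rho^2}\ge0$.

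\emph{Computing the two optima.} Since $n=2$, the minimum-payment contract implementing $a_2$ pays only for the single outcome maximizing $q_{2,S}/q_{1,S}$ (the two-action characterization, cf.\ the MIN-PAYMENT discussion and \cite{DRT18}); I would verify this LP fact by noting the feasible polyhedron's vertices are the single-outcome contracts. Here $q_{2,S}/q_{1,S}$ equals $\rho^2$ at $S=\{1,2\}$ and is at most $\rho$ for every other usable $S$, and a direct computation gives minimum expected payment $c_2/(1-4\beta^2)=\tfrac{\rho-1}{2(\rho+1)}$, hence principal payoff $R_2-\tfrac{\rho-1}{2(\rho+1)}=\tfrac1{\rho+1}>\beta=R_1$; so $OPT=\tfrac1{\rho+1}$, realized (up to tie-breaking, which favours $a_2$ since it gives the principal the larger payoff). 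For a separable contract $(p_1,p_2)\in\mathbb{R}_{\ge0}^2$ with expected payment $q_{i,1}p_1+q_{i,2}p_2$ to $a_i$: incentivizing $a_1$ yields payoff at most $R_1=\beta$ (zero payment is best), while incentivizing $a_2$ forces $(\tfrac12-\beta)(p_1+p_2)\ge c_2$, so the expected payment $\tfrac12(p_1+p_2)$ is at least $c_2/(1-2\beta)=\tfrac{\rho-1}{2\rho}$, leaving payoff at most $R_2-\tfrac{\rho-1}{2\rho}=\tfrac1{2\rho}=\beta$; thus the best separable contract has payoff exactly $\beta$, achieved e.g.\ by $p_1+p_2=\tfrac{\rho-1}{\rho}$, the induced tie being broken toward $a_2$ because the principal's payoff there, $\beta$, exceeds her payoff $\beta/\rho$ at $a_1$. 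Consequently $OPT/(\text{best separable})=\tfrac{1/(\rho+1)}{1/(2\rho)}=\tfrac{2\rho}{\rho+1}=2-\tfrac{2}{\rho+1}\ge 2-\epsilon$.

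\emph{Expected obstacle.} The only genuine work is pinning down the two optima and the tie-breaking: that the optimal contract is the single-outcome contract on $\{1,2\}$ (so it ``sees'' the product ratio $\rho^2$), that the best separable contract is the single-item contract making $a_2$ exactly IC (so it ``sees'' only $\rho$), and that in the latter case the tie between $a_1$ and $a_2$ is resolved in favour of $a_2$. The conceptual point is that $c_2$ is chosen so the separable payment consumes, asymptotically as $\rho\to\infty$, twice the fraction of $R_2$ consumed by the optimal payment, which is exactly what produces the clean ratio $\tfrac{2\rho}{\rho+1}$.
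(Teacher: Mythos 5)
Your construction is correct and follows essentially the same route as the paper's: both exhibit a $2$-action, $2$-item instance in which the uniform-probability costly action can only be incentivized by a separable contract at roughly twice the expected payment needed by the optimal single-outcome contract, because the separable contract can exploit only the per-item likelihood ratio ($\rho$) rather than the product ($\rho^2$), and the cost is calibrated so the separable payoff ties the payoff from the zero-cost action. Your symmetric choice $q_{1,1}=q_{1,2}$ together with $r_2=0$ makes the arithmetic somewhat cleaner than the paper's, but the underlying idea and the structure of the argument are identical.
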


\begin{proof}
	For $\delta \in (0,1)$ consider the following $n=2$ actions and $m = 2$ items instance. The probabilities $q_{i,j}$ for the two actions $i \in \{1,2\}$ and items $j \in \{1,2\}$ are 
	\begin{align*}
	&q_{1,1} = \frac{\delta}{2}, \quad q_{1,2} = 1 - \frac{\delta}{2} \quad \text{and} \quad q_{2,1} = \frac{1}{2}, \quad q_{2,2} = \frac{1}{2}.
	\end{align*}
	The rewards $r_j$ for the two items $j \in \{1,2\}$ are
	\begin{align*}
	r_1 = \frac{1-(1-\frac{\delta}{2})\delta}{\frac{\delta}{2}} \quad \text{and} \quad r_2 = \delta.
	\end{align*}
	The resulting expected rewards $R_i$ for the two actions $i \in \{1,2\}$ are
	\begin{align*}
	R_1 &= q_{1,1} r_1 + q_{1,2} r_2 = \frac{\delta}{2} \frac{1-(1-\frac{\delta}{2})\delta}{\frac{\delta}{2}} + (1-\frac{\delta}{2})\delta = 1, \quad \text{and} \\
	R_2 &= q_{2,1} r_1 + q_{2,2} r_2 = \frac{1}{2} \frac{1-(1-\frac{\delta}{2})\delta}{\frac{\delta}{2}}+\frac{1}{2}\delta = \frac{1}{\delta}-1+\delta,
	\end{align*}
	so that $R_2 > 1$ for all $\delta \in (0,1)$ and $R_2 \rightarrow \infty$ as $\delta \rightarrow 0$.
	The costs $c_i$ for the two actions $i \in \{1,2\}$ are
	\begin{align*}
	c_1 = 0 \quad \text{and} \quad c_2 = (1-\delta) (R_2 - R_1) = (1-\delta) (\frac{1}{\delta} - 2 + \delta).
	\end{align*}
	Note that on this instance 
	\begin{align*}
	R_1 - c_1 = 1 \quad \text{and} \quad R_2-c_2 = 2- 2\delta+\delta^2.
	\end{align*}
	
	We claim that: (1) The optimal contract can incentivize action 2 with an expected payment of $c_2/(1-\delta^2)$, so that the expected payoff to the principal is $R_2 - c_2/(1-\delta^2) = (1/\delta - 1 + \delta) - (1/\delta - 2 + \delta)/(1+\delta)$. (2) The optimal separable contract can either incentivize action 1 by paying nothing or it can incentivize action 2 by setting $p_1 = 2c_2/(1-\delta)$ and $p_2 = 0$. Since
	\begin{align*}
	R_2 - q_{2,1} p_1 = (\frac{1}{\delta}-1+\delta) - \frac{1}{2} \frac{2c_2}{(1-\delta)} = 1 
	\end{align*}
	the expected payoff to the principal in both cases is $1$.
	
	Using (1) and (2) and setting $\delta = \frac{1}{2}(3 - \epsilon - \sqrt{\epsilon^2-10\epsilon+9})$ we have
	\begin{align*}
	\frac{OPT}{ALG} = (\frac{1}{\delta} - 1 + \delta) - \frac{\frac{1}{\delta} - 2 + \delta}{1+\delta} = 2- \epsilon.
	\end{align*}
	
	It remains to show (1) and (2). For (1) denote the payments in the optimal contract for outcomes (1,0), (0,1), and (1,1) by $p_1, p_2, p_{1,2}$. The optimal contract can incentivize action $2$ via $p_1 > 0$ and $p_2 = p_{1,2} = 0$ as long as
	\begin{align*}
	&q_{2,1}(1-q_{2,2}) p_1 - c_2 \geq q_{1,1}(1-q_{1,2}) p_1\\
	\Leftrightarrow \quad &p_1 \geq \frac{c_2}{q_{2,1}(1-q_{2,2}) - q_{1,1}(1-q_{1,2})} = \frac{4 c_2}{1-\delta^2}
	\end{align*}
	Setting $p_1 =  4c_2/(1-\delta^2)$ leads to an expected payment of $q_{2,1}(1-q_{2,2}) p_1 = c_2/(1-\delta^2)$.
	
	For (2) denote the payments of the optimal separable contract by $p_1$ and $p_2$ and note that the optimal separable contract either has $p_1 > 0$ and $p_2 = 0$ or it has $p_1 = 0$ and $p_2 > 0$. In the former case the incentive constraint is 
	\begin{align*}
	q_{2,1} p_1 - c_2 \geq q_{1,1} p_1
	\end{align*}
	and in the latter it is
	\begin{align*}
	q_{2,2} p_2 - c_2 \geq q_{1,2} p_2.
	\end{align*}
	
	Note that since $q_{1,2} = 1-\delta/2 > 1/2 = q_{1,2}$ it is impossible to incentivize action 2 by having only $p_2 > 0$. In the other case, where only $p_1 > 0$, the smallest $p_1$ that satisfies the incentive constraint is $p_1 = c_2/(q_{2,1}-q_{1,1}) = 2c_2/(1-\delta)$. 
\end{proof}

\section{Proofs of auxiliary lemmas in Section~\ref{sec:black-box}}
\label{appx:black-box}
In this appendix we provide proofs for Lemma~\ref{lem:sample-bound}, Lemma~\ref{lem:ic-is-preserved-approximately}, Lemma~\ref{lem:delta-ic-is-preserved}, and Lemma~\ref{lem:noisy-2}.

\begin{proof}[Proof of Lemma~\ref{lem:sample-bound}]
Note that with $s = (3 \log(\frac{2n}{\eta\gamma}))/(\eta \epsilon^2)$ we have $\gamma = \frac{n}{\eta} \cdot 2 \exp(-\eta s \epsilon^2/3)$. Further note that since $q_{i,S} \geq \eta$ for all $i \in [n]$ and $S \subseteq M$ each action can assign positive probability to at most $1/\eta$ sets $S$. Finally, for all $i \in [n], S \subseteq M$ such that $q_{i,S} = 0$ we have $\tilde{q}_{i,S} = 0$. So, by the union bound, it suffices to show that for each of the at most $n/\eta$ pairs $i, S$ with $q_{i,S} > 0$ the probability with which $\tilde{q}_{i,S}$ does not fall into $[(1-\epsilon) q_{i,S},(1+\epsilon) q_{i,S}]$ is at most $2 \exp(-\eta s \epsilon^2/3)$.

Consider any such pair $i,S$. Let $X_{i,S}$ denote the random variable that counts the number of times set $S$ was returned in the $s$ queries to action $i$. Then $\tilde{q}_{i,S} = X_{i,S}/s$ and $\mathbb{E}[X] = s q_{i,S}$. So, using Chernoff's bound,
\begin{align*}
\Pr[\tilde{q}_{i,S} \not\in [(1-\epsilon)q_{i,S},(1+\epsilon)q_{i,S}]] &= \Pr[|X_{i,S} - \mathbb{E}[X_{i,S}]| \geq \epsilon] \\
&\leq 2 \exp(- \eta s \epsilon^2/3),
\end{align*}
as claimed.
\end{proof}

\begin{proof}[Proof of Lemma~\ref{lem:ic-is-preserved-approximately}]
Let $a_i$ be the action that is incentivized by $p$ under the actual probabilities $Q$, and let $a_{i'}$ be any other action. Then,
\begin{align*}
\sum_{S \subseteq M} \tilde{q}_{i,S} p_{i,S} - c_i + 2 \epsilon 
&\geq (1-\epsilon) \sum_{S \subseteq M} q_{i,S} p_{i,S} - c_i + 2 \epsilon\\
&\geq \sum_{S \subseteq M} q_{i,S} p_{i,S} - c_i  + \epsilon\\
&\geq \sum_{S \subseteq M} q_{i',S} p_{i',S} - c_{i'} + \epsilon\\
&\geq (1+\epsilon) \sum_{S \subseteq M} q_{i',S} p_{i',S} - c_{i'} \\
&\geq \sum_{S \subseteq M} \tilde{q}_{i',S} p_{i',S} - c_{i'},
\end{align*}
where we used the bounds on the probabilities in the first and last step, that we are considering normalized settings in the second and fourth step, and the IC constraint in the third step.
\end{proof}

\begin{proof}[Proof of Lemma~\ref{lem:delta-ic-is-preserved}]
Let $a_i$ be the action that is incentivized by $\tilde{p}$ under the empirical probabilities $\tilde{Q}$, and let $a_{i'}$ be any other action. Then,
\begin{align*}
\sum_{S \subseteq M} q_{i,S} p_{i,S} - c_i + \delta + 2 \epsilon 
&\geq (1+\epsilon) \sum_{S \subseteq M} q_{i,S} p_{i,S} - c_i + \delta + \epsilon \\
&\geq \sum_{S \subseteq M} \tilde{q}_{i,S} p_{i,S} - c_i  + \delta + \epsilon\\
&\geq \sum_{S \subseteq M} \tilde{q}_{i',S} p_{i',S} - c_{i'} + \epsilon\\
&\geq (1-\epsilon) \sum_{S \subseteq M} q_{i',S} p_{i',S} - c_{i'} + \epsilon \\
&\geq \sum_{S \subseteq M} q_{i',S} p_{i',S} - c_{i'},
\end{align*}
where we used that we are considering normalized settings in the first and the last step, the bounds on the probabilities in the second and fourth step, and the $\delta$-IC constraint in the third step.
\end{proof}

\begin{proof}[Proof of Lemma~\ref{lem:noisy-1}]
We have,
\begin{align*}
\tilde{\Pi} &= \sum_{S \subseteq M} \tilde{q}_{i,S} r_S - \sum_{S \subseteq M} \tilde{q}_{i,S} p_{i,S}\\
&\leq (1+\epsilon) \sum_{S \subseteq M} q_{i,S} r_S - (1-\epsilon) \sum_{S \subseteq M} q_{i,S} p_{i,S}\\
&\leq \sum_{S \subseteq M} q_{i,S} r_S -  \sum_{S \subseteq M} q_{i,S} p_{i,S} + 2\epsilon\\
&= \Pi + 2 \epsilon,
\end{align*}
where we used the bounds on the payments in the first step and that we are considering normalized settings in the second.
\end{proof}

\begin{proof}[Proof of Lemma~\ref{lem:noisy-2}]
We have,
\begin{align*}
P &= \sum_{S \subseteq M} q_{i,S} r_S - \sum_{S \subseteq M} q_{i,S} p_{i,S}\\
&\leq \frac{1}{1-\epsilon} \sum_{S \subseteq M} \tilde{q}_{i,S} r_S - \frac{1}{1+\epsilon} \sum_{S \subseteq M} \tilde{q}_{i,S} p_{i,S}\\
&\leq (1+2\epsilon) \sum_{S \subseteq M} \tilde{q}_{i,S} r_S -  (1-\epsilon) \sum_{S \subseteq M} q_{i,S} p_{i,S}  \\
&= \Pi + 3 \epsilon,
\end{align*}
where we used the bounds on the probability in the first step, and $1/(1-\epsilon) \leq 1+2\epsilon$ as well as $1/(1+\epsilon) \geq 1-\epsilon$ for all $\epsilon \leq 1/2$. 
\end{proof}

\end{document}